\DeclareMathAlphabet{\mathscr}{U}{rsfso}{m}{n}% 
\newcommand{\abs}[1]{\left| #1 \right|}
\theoremstyle{thmstyleone}%
\newtheorem{theorem}{Theorem}%  meant for continuous numbers
\theoremstyle{thmstyletwo}%
\theoremstyle{thmstylethree}%
\newtheorem{Lemma}{Lemma}
\begin{document}

\title[Modeling the Prey-Predator Dynamics Leading to Ecological Disaster]
{Modeling the Prey-Predator Dynamics of Habu Snakes and Mongooses Leading to Ecological Disaster on Amami Oshima Island in Japan}

%%=============================================================%%
%% GivenName	-> \fnm{Joergen W.}
%% Particle	-> \spfx{van der} -> surname prefix
%% FamilyName	-> \sur{Ploeg}
%% Suffix	-> \sfx{IV}
%% \author*[1,2]{\fnm{Joergen W.} \spfx{van der} \sur{Ploeg} 
%%  \sfx{IV}}\email{iauthor@gmail.com}
%%=============================================================%%
\author[1,2]{\fnm{Pulak} \sur{Kundu}}

\author*[1]{\fnm{Uzzwal Kumar} \sur{Mallick}}\email{mallickuzzwal@math.ku.ac.bd}

\affil[1]{\orgdiv{Mathematics Discipline}, \orgname{Khulna University}, \orgaddress{\street{Khulna-9208}, \country{Bangladesh}}}
\affil[2]{\orgdiv{Department of Mathematics}, \orgname{Bangladesh Army University of Science and Technology Khulna}, \orgaddress{\street{Khulna-9204}, \country{Bangladesh}}}
%%==================================%%
%% sample for unstructured abstract %%
%%==================================%%
\abstract{The introduction of mongooses from  Indian subcontinent to Amami Oshima Island, Japan, aimed at controlling the population of venomous Habu snakes, has led to significant ecological disruptions, raising concerns about the long-term sustainability of the island’s biodiversity. To highlight the unintended consequences of such interventions and the necessity of understanding predator-prey dynamics in preserving ecological balance, a mathematical model incorporating snake, mongooses, mouse and natural resources has been proposed to explore their role in the ongoing ecological disaster and analysis the other scenarios if the authorities applied different approaches in place of already implemented strategy. Determining the model's existence and uniqueness, stability at equilibrium points, and state variable characteristics are some of the parts of the analytical analysis of the model. Additionally, sensitivity analysis is conducted to identify sensitive factors. In addition, the Runge-Kutta 4th order has been used to execute the numerical simulations. Our research reveals that although the government began killing and trapping mongooses almost 20 years after their introduction, but if trapping had started just 10 years after their introduction, the outcome could have been drastically different. This time, mongooses would not have been extinct, and their coexistence with other native species would have helped to preserve the ecological balance and prevent the severe ecological damage that is presently being seen. Thus, It is recommended to use mathematical modeling to explore alternatives before decision-making, ensuring sustainable ecosystem management while preventing irreversible impacts of invasive species.}

\keywords{Coexistence; Ecological imbalance; Mongooses; Natural resources; Trappings Campaign; Consumers; Incentives; Industrial Farming; Optimization}
%%\pacs[JEL Classification]{D8, H51}

%%\pacs[MSC Classification]{35A01, 65L10, 65L12, 65L20, 65L70}

\maketitle

\section{Introduction}
Introduced in 1979 to manage poisonous habu snakes and crop-damaging rats, the small Indian mongoose (Herpestes javanicus) became the centre of attention of an ecological disaster on Amami Oshima Island, a treasure trove of natural resources. Rather than providing a solution, the mongooses caused a great deal of damage to native species, which led to a decline in biodiversity and the threat of the environment.

Amami Ōshima Island, part of the Amami archipelago in Kagoshima Prefecture, Japan, is renowned for its unique biodiversity, which includes rare and endemic species like the Amami rabbit (\textit{Pentalagus furnessi}), the Amami woodcock (\textit{Scolopax mira}), and the Ryukyu long-tailed giant rat (\textit{Diplothrix legata}) \citep{WikipediaAmami2025}. The island’s subtropical forests, which became a UNESCO World Heritage site in 2021, provide a critical habitat for these species \citep{unesco2021}. However, this ecological haven faced a growing threat from venomous habu snakes (\textit{Protobothrops flavoviridis}), which posed a significant danger to both humans and livestock. The habu snake, known for its aggressive behavior and potentially lethal bite, was responsible for frequent attacks, with over 100 snakebite incidents reported annually on the island by the 1970s \citep{HabuSnakeBite1987, Wakisaka1978}. Moreover, although an antidote for the snake's venom was available, it was challenging to access in rural areas, resulting in numerous fatalities and compelling many residents to leave the island. Additionally, rats were damaging crops, further exacerbating the concerns of local farmers \citep{Yining}. In an effort to control the habu snake and rat populations, authorities decided in 1979 to introduce small Indian mongooses (\textit{Herpestes javanicus}) to the island\citep{Myers2000,Watari2008}. The small Indian mongoose, native to the Indian subcontinent, was considered suitable because of its adaptability to warm climates and its known efficacy in controlling rats and snakes in other regions, such as the Caribbean and Hawaii \citep{Pimentel1955}. Approximately 30 mongooses were imported from India and released on Amami Ōshima under the assumption that they would prey on the nocturnal habu snakes and reduce their threat to humans and agriculture \citep{Ishii2003}.

\begin{figure}[htbp]
    \centering
    % First row - 2 subfigures
    \begin{subfigure}[t]{0.46\textwidth}
        \includegraphics[width=1.1\linewidth]{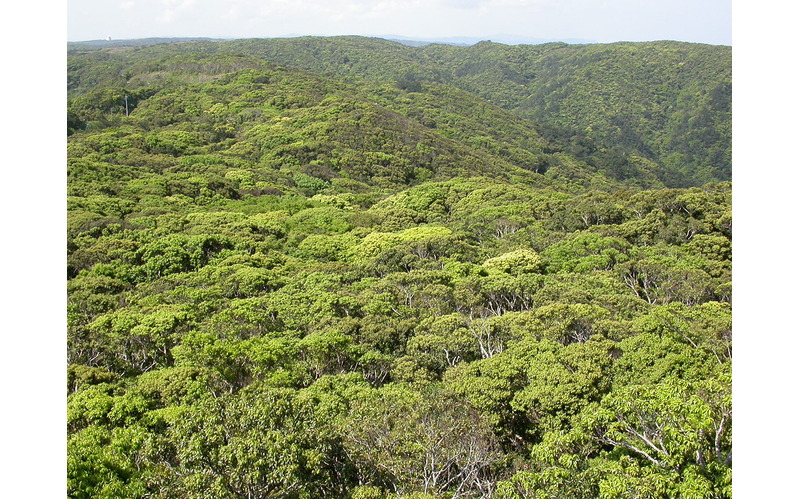}
        \caption{}
        \label{fig:sub11}
    \end{subfigure}
    \hfill
    \begin{subfigure}[t]{0.45\textwidth}
        \includegraphics[width=1.06\linewidth]{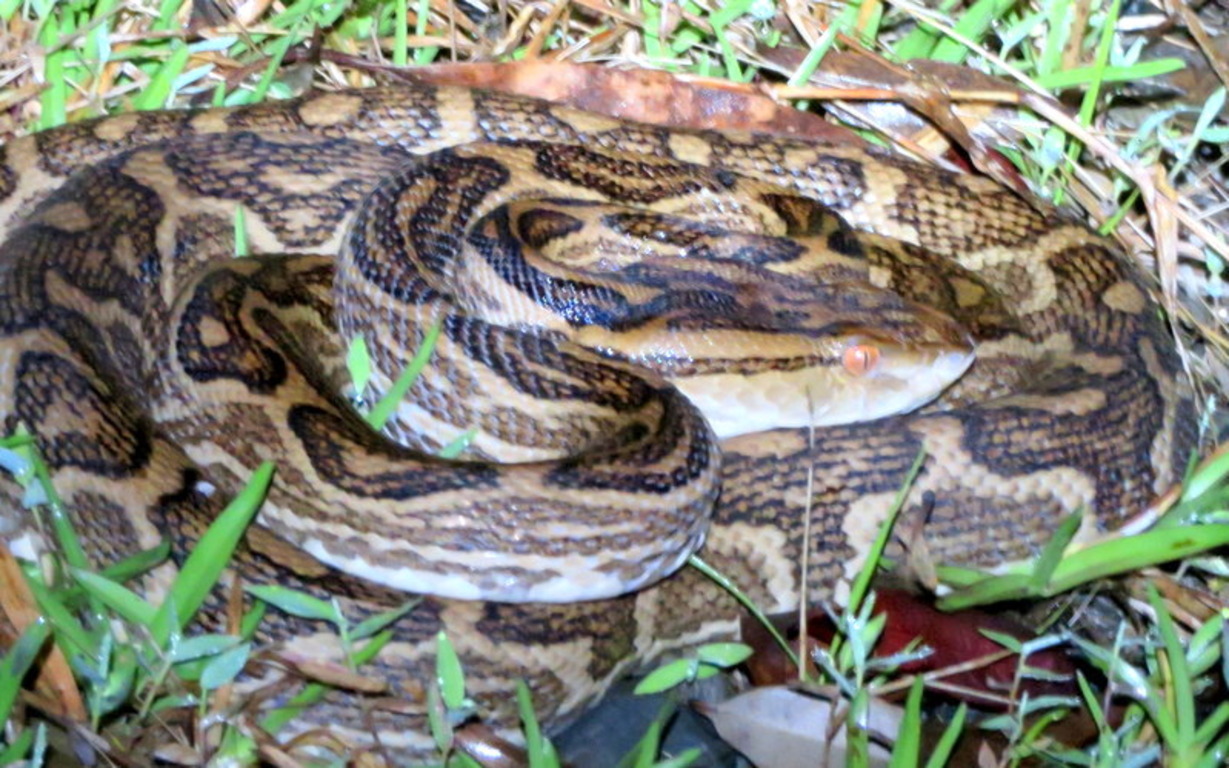}
        \caption{}
        \label{fig:sub22}
    \end{subfigure}
    % Second row - 1 subfigure
    \vspace{1em} % Adds vertical space between rows
    \begin{subfigure}[t]{0.4\textwidth}
        \centering
        \includegraphics[width=\linewidth]{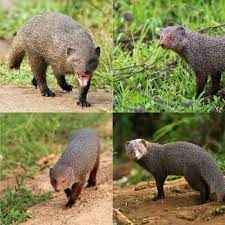}
        \caption{}
        \label{fig:sub33}
    \end{subfigure}
    \caption{(a) Amami Ōshima Island, part of Japan, was famous for its exceptional biodiversity, featuring numerous rare and native species. (b) The Habu snake was a significant predator in ancient ecosystems, posing a threat to early human settlements while playing a crucial role in maintaining ecological balance. (c) Indian mongooses were introduced to Amai Osamia Island with the belief they would reduce Habu snake threats to humans and agriculture.}
    \label{fig:figure 1}
\end{figure}

\begin{figure}[htbp]
    \centering
    % First row - 2 subfigures
    \begin{subfigure}[t]{0.47\textwidth}
        \includegraphics[width=1.01\linewidth]{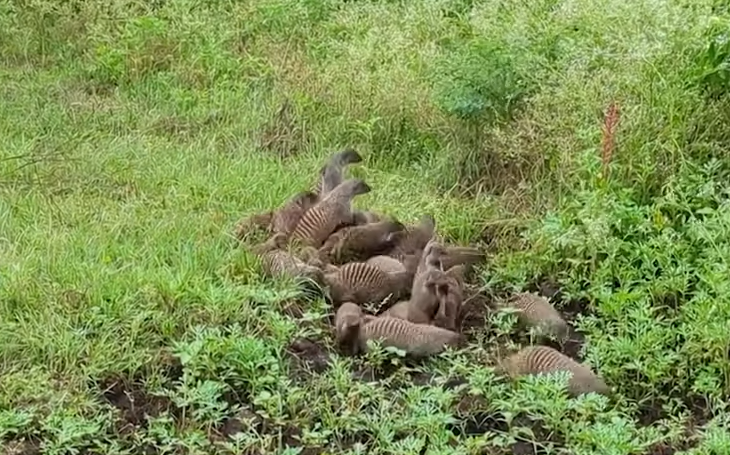}
        \caption{}
        \label{fig:sub1}
    \end{subfigure}
    \hfill
    \begin{subfigure}[t]{0.45\textwidth}
        \includegraphics[width=1.009\linewidth]{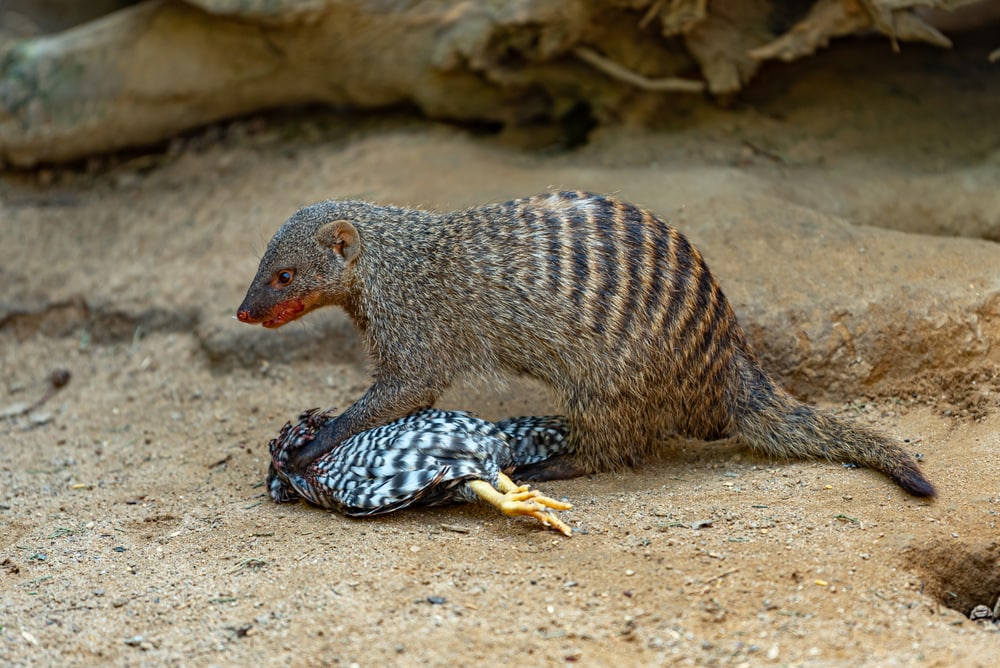}
        \caption{}
        \label{fig:sub2}
    \end{subfigure}
    % Second row - 1 subfigure
    \vspace{1em} % Adds vertical space between rows
    \begin{subfigure}[t]{0.47\textwidth}
        \centering
        \includegraphics[width=1.1\linewidth]{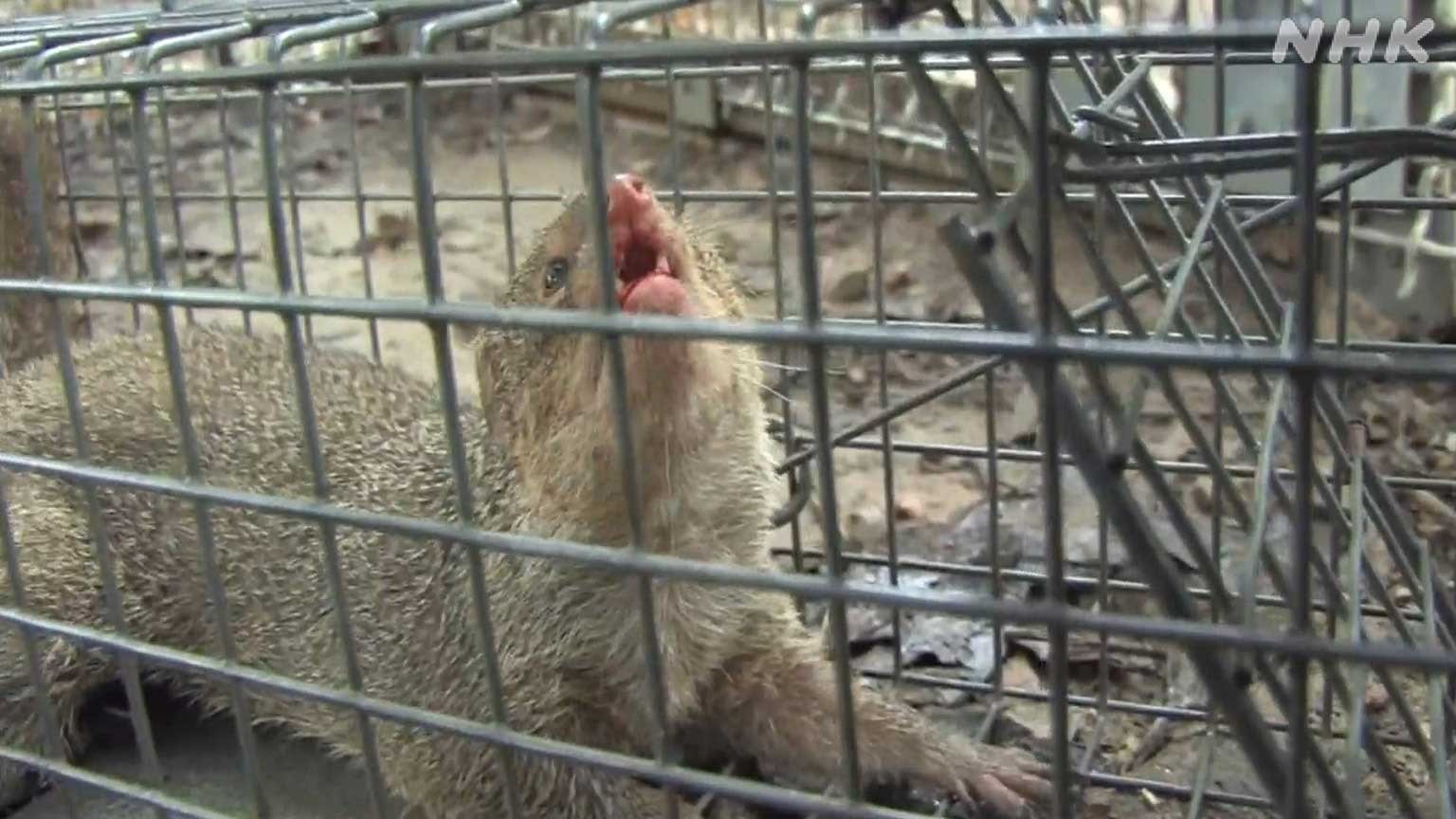}
        \caption{}
        \label{fig:sub3}
    \end{subfigure}
    
    \caption{(a) Mongooses damage crops on Amami Island, showcasing their threat to agriculture.
(b) Mongooses, making harm by preying on the island's natural resources, show as a clear reminder of their disruptive effect on the ecosystem.
(c) Mongooses are caught using a trapping method, demonstrating human efforts to reduce their population.}
    \label{fig:main}
\end{figure}
However, this decision soon backfired due to ecological mismatches. Most mongooses were diurnal (active during the day), while most habu snakes were nocturnal (active at night), resulting in infrequent encounters between the two species \citep{japantimes}. Over time, the mongooses also showed diminishing interest in hunting the snakes after their introduction. Instead of preying on habu snakes, the mongooses turned to hunt native species, many of which had evolved in isolation without natural predators and were ill-equipped to defend themselves \citep{Yamada}. The Amami rabbit, in particular, experienced sharp population declines as mongooses targeted its young. The ecosystem of Amami Ōshima, delicately balanced over millennia, was thrown into disarray, with cascading effects on flora and fauna dependent on the island’s unique biodiversity. Besides, crops in the fields were being damaged by the mongoose, further aggravating the economic hardships faced by local farmers. By the 1990s, the mongoose population had exploded to approximately 10,000 due to the absence of natural predators and favourable environmental conditions causing the ecological imbalance of this island \citep{YamadaSugimura2004}. Thus, this incident highlights that no matter how intelligent humans may be, they can never achieve their own interests by going against nature.

Recognizing the ecological disaster, the Japanese Ministry of the Environment (then the Environment Agency)  and Kagoshima Prefecture launched an eradication program in 2000 to mitigate the damage caused by the mongooses \citep{Schultz2024}. This comprehensive effort involved setting traps, administering poison baits, conducting systematic removal campaigns, and implementing advanced monitoring techniques, such as the use of camera traps, to track mongoose activity and assess the effectiveness of the control measures \citep{Fukasawa2013,Ishida2015}. Over an 18-year period, approximately 32,000 mongooses were removed, with annual captures exceeding 2,000. Since May 2018, no mongooses have been captured, and continuous monitoring has confirmed their absence for more than six years \citep{AsahiShimbun5}. Finally, on September 3, 2024, the Ministry officially declared the eradication of the small Indian mongoose from Amami Ōshima Island \citep{Ministry2024}.

On the other side, ecological dynamics, especially those involving the interactions between native and invading species, can be better understood by mathematical modelling. To study population dynamics across time, a reliable framework is provided by prey-predator models like the Lotka-Volterra equations. Simulating different scenarios, evaluating management measures' impacts, and predicting long-term ecological consequences are all made possible by these models. However, multiple studies by different researchers have been conducted to emphasize the severity of this ecological disaster on the island. \cite{YamadaSugimura2004} emphasize the severe impact of the invasive small Indian mongoose on Amami Ōshima’s native wildlife and highlight the need for increased efforts and resources to achieve successful eradication and ecological restoration. Again,  \cite{Watari2008} investigate the effects of the invasive small Indian mongoose on the native fauna of Amami Ōshima by analyzing distribution patterns along a historical gradient of mongoose invasion. Their research showed that larger native species decreased in areas with mongooses, while smaller species grew in number. This change is due to effects on the food chain, which emphasises the importance of looking at food web structures when judging the ecological risks of invasive predators. Also, \cite{Fukasawa2013} applied a Bayesian state-space modelling approach to estimate the dynamics of invasive mongoose populations on Amami Island, Japan, highlighting the long-term effectiveness of control measures and the feasibility of eradication based on population size and capture efforts. Furthermore, \cite{WanSamperisam2024} presents a comprehensive review on advanced predator-prey models, exploring the incorporation of factors like fractional calculus, role reversal, and spatial heterogeneity, which enhance the understanding of ecological dynamics and contribute to practical solutions in pest control, disaster management, and ecological disorders. A recent predator–prey study by \cite{Kabir2025EcoEpi} demonstrated that incorporating disease dynamics, behavioral strategies, and the Allee effect strongly affects population stability, with low infection rates maintaining equilibrium and high transmission inducing oscillations and extinction risks. Additionally, sandy beach predator–prey system investigations based on Lotka–Volterra modeling and stable isotope evidence indicated that predation can be a primary driver of population dynamics, even in physically harsh environments \citep{ARUEIRA2025107488}.\
In this study, we have formulated a newly proposed mathematical model to capture this phenomena occurring in Amami Ōshima Island, Japan, and have analyzed its dynamics. By exploring alternative strategies to the current approach, we have identified the most effective solution for maintaining ecological balance on this island.\\
The remaining part of the article is structured as follows: In Section \ref{sec:Mathematical Formulation of the Model}, the formulation of a mathematical model is outlined to investigate the prey-predator dynamics of Habu snakes and mongooses, which have led to an ecological disaster on Amami Ōshima Island. In Section \ref{sec:Analytical  Analysis}, a detailed analytical examination of the model is presented, whereas numerical simulations are illustrated in Section \ref{sec:Model Simulations}. The findings, along with key insights, are discussed in Section \ref{sec:Results and disussion}. Finally, the conclusions drawn from this study are provided in Section \ref{sec:Conclusions}.

\section{Mathematical Formulation of the Model }\label{sec:Mathematical Formulation of the Model}
In this section, we have proposed a new mathematical model to investigate the interactions between prey, predators, and natural resources on Amami Oshima Island, Japan, to explore the ecological damage that has occurred and analyze potential outcomes if authorities implemented different strategies in place of the already implemented in that island.

\subsection{Assumptions}
At the time of formulating our newly proposed model, the following assumptions have been considered into account:
\begin{itemize}
\item The populations of mongooses $M(t)$, habu snakes $S(t)$, rats $M(t)$, and natural resources $N_R(t)$ (shown in Table \ref{tab:variable_introdution}) are considered variables in the ecosystem of Amami Oshima Island, Japan, to formulate the model.
 \item The Amami rabbit(\textit{Pentalagus furnessi}), Ryukyu robin(\textit{Larvivora komadori}), Amami woodcock(\textit{Scolopax rusticola}), as well as chickens, particularly their eggs and chicks, which are important parts of the ecological balance of Amami Oshima Island, have been taken as natural resources.
\item The populations of mongooses, snakes, and rats are assumed to follow a Malthusian growth rate, whereas natural resources are considered to grow logistically.
\item In this ecosystem, habu snakes, rats, and natural resources are considered prey for mongooses, while habu snakes prey on rats.
\item The trapping and killing term would be active from where it has been done.
\end{itemize}
\begin{table}[hbt]
    \centering
     \caption{Description of Variable}
    \begin{tabular}{c c}
    \hline 
   \textbf{\textcolor{blue}{Variable}} & \textbf{\textcolor{blue}{Description}} \\
    \hline
         $M(t)$&    Number of Mongooses   \\
   
         $S(t)$ &  Number of Snakes (Habu Snakes) \\
                 $R(t)$& Number of Rats \\
                $N_R(t)$&    Natural Resources (Number of important prey species \\
         &  on the island for ecological balance)  \\
         \hline
    \end{tabular}
    \label{tab:variable_introdution}
\end{table}

\subsection{Formulation}
On Amami Ōshima Island in Japan, the mongoose population follows the Malthusian law, where its growth rate depends on its current size. Originally introduced to control venomous habu snakes, mongooses found an abundance of prey, including rats, native species like birds, amphibians, and small mammals, allowing their numbers to rise. However, their presence soon became a threat to the island’s delicate ecosystem, endangering native wildlife. In response, conservation efforts intensified, with humans employing trapping methods such as baited cages, snares, and poisoning to curb their spread. Considering these factors, the first differential equation governing the mongoose population is given by:
\begin{align}
     \frac{dM}{dt}&=\alpha M+\beta MS+\delta MR+\tau MN_R-(\mu_{2000}^{1}+\mu_{2000}^{2}) M
     \label{eq:model1}
\end{align}
where \( \alpha \) denotes the growth rate (the difference between birth rate and death rate) of mongooses on Amami Ōshima Island, while the interaction rates between mongooses and snakes, rats, and natural resources are represented by \( \beta \), \( \delta \), and \( \tau \), respectively. Trapping rate of mongooses using baited cages or snares as well as poisoning method are symbolized by $\mu_{2000}^1$ and $\mu_{2000}^1$.\\
Again, On Amami Ōshima Island in Japan, the snake population follows a dynamic pattern influenced by various ecological interactions. The growth of snakes is driven by their natural reproduction, represented by t. Their numbers further increase as they prey on the abundant rat population, a key food source, However, their survival is not without challenges. Mongooses, introduced to control venomous habu snakes, pose a significant threat, reducing the snake population. These phenomena suggests the following equation
\begin{align}
     \frac{dS}{dt}&=\alpha_1 S+ \xi SR-\sigma SM
     \label{eq:model2}
\end{align}
Here, the symbols $\alpha_1,\xi$ and $\sigma$ signify growth rate of snake, increasing rate of snake due to predation on rats and decreasing rate of snakes due to predation by mongoose respectively.\\

Another important species in the ecosystem is the rat, known for its rapid population growth. In the absence of predators, their numbers rise exponentially, thriving on available resources. However, the balance shifts when snakes and mongooses enter the equation. As snakes hunt them for sustenance and mongooses join the chase, the rat population begins to decline. Mathematically, this relationship is captured by the third equation of the model:
\begin{align}
      \frac{dR}{dt}&= \omega R-\gamma_1 RS -\gamma_2 RM 
      \label{eq:model3}
\end{align}
where \( \omega \) represents the natural growth rate of the rat population, while \( \gamma_1 \) and \( \gamma_2 \) denote the decline in rat numbers due to predation by snakes and mongooses,  correspondingly.\\

Signifiantly, rhe natural resoures of Amami Ōshima Island, including species such as the Amami rabbit, Ryukyu robin, Amami woodcock, and chickens, follows a logistic growth pattern under normal conditions. However, the introduction of mongooses has severely disrupted this balance. As opportunistic predators, mongooses have targeted these native species, leading to a sharp decline in their numbers and threatening the island’s biodiversity. Thus our last equation of our model would be
\begin{align}
      \frac{dN_R}{dt}&= \psi N_R\left(1-\frac{1}{K} N_R\right)-\phi N_R M
      \label{eq:model4}
\end{align}
In this equation, \( \psi \) represents the logistic growth rate of natural resources, \( K \) indicates the carrying capacity of natural resources, and \( \phi \) signifies the depletion rate of resources due to consumption by mongooses.  
\begin{figure}[hbt]
    \centering
\tikzset{every picture/.style={line width=0.75pt}} %set default line width to 0.75pt        
\begin{tikzpicture}[x=0.65pt,y=0.65pt,yscale=-1,xscale=1]
%uncomment if require: \path (0,1347); %set diagram left start at 0, and has height of 1347
%Flowchart: Terminator [id:dp2519246595636382] 
\draw  [color={rgb, 255:red, 209; green, 60; blue, 228 }  ,draw opacity=1 ][fill={rgb, 255:red, 60; green, 255; blue, 172 }  ,fill opacity=1 ] (100.07,1082) -- (152.3,1082) .. controls (159.09,1082) and (164.59,1090.95) .. (164.59,1102) .. controls (164.59,1113.05) and (159.09,1122) .. (152.3,1122) -- (100.07,1122) .. controls (93.28,1122) and (87.78,1113.05) .. (87.78,1102) .. controls (87.78,1090.95) and (93.28,1082) .. (100.07,1082) -- cycle ;
%Flowchart: Terminator [id:dp8591632820880948] 
\draw  [color={rgb, 255:red, 209; green, 60; blue, 228 }  ,draw opacity=1 ][fill={rgb, 255:red, 241; green, 255; blue, 154 }  ,fill opacity=1 ] (281.07,982) -- (333.3,982) .. controls (340.09,982) and (345.59,990.95) .. (345.59,1002) .. controls (345.59,1013.05) and (340.09,1022) .. (333.3,1022) -- (281.07,1022) .. controls (274.28,1022) and (268.78,1013.05) .. (268.78,1002) .. controls (268.78,990.95) and (274.28,982) .. (281.07,982) -- cycle ;
%Flowchart: Terminator [id:dp3160740137944742] 
\draw  [color={rgb, 255:red, 209; green, 60; blue, 228 }  ,draw opacity=1 ][fill={rgb, 255:red, 238; green, 215; blue, 203 }  ,fill opacity=1 ] (280.07,1170) -- (332.3,1170) .. controls (339.09,1170) and (344.59,1178.95) .. (344.59,1190) .. controls (344.59,1201.05) and (339.09,1210) .. (332.3,1210) -- (280.07,1210) .. controls (273.28,1210) and (267.78,1201.05) .. (267.78,1190) .. controls (267.78,1178.95) and (273.28,1170) .. (280.07,1170) -- cycle ;
%Flowchart: Terminator [id:dp531103261163022] 
\draw  [color={rgb, 255:red, 209; green, 60; blue, 228 }  ,draw opacity=1 ][fill={rgb, 255:red, 235; green, 238; blue, 248 }  ,fill opacity=1 ] (453.07,1079) -- (505.3,1079) .. controls (512.09,1079) and (517.59,1087.95) .. (517.59,1099) .. controls (517.59,1110.05) and (512.09,1119) .. (505.3,1119) -- (453.07,1119) .. controls (446.28,1119) and (440.78,1110.05) .. (440.78,1099) .. controls (440.78,1087.95) and (446.28,1079) .. (453.07,1079) -- cycle ;
%Straight Lines [id:da5758523663188799] 
\draw [line width=1.5]    (27,1091) -- (84,1091) ;
\draw [shift={(88,1091)}, rotate = 180] [fill={rgb, 255:red, 0; green, 0; blue, 0 }  ][line width=0.08]  [draw opacity=0] (11.61,-5.58) -- (0,0) -- (11.61,5.58) -- cycle    ;
%Straight Lines [id:da23420298786674332] 
\draw [color={rgb, 255:red, 1; green, 33; blue, 250 }  ,draw opacity=1 ][line width=1.5]  [dash pattern={on 1.69pt off 2.76pt}]  (125,1001) -- (124,1038) ;
%Straight Lines [id:da15288911035337494] 
\draw [line width=1.5]    (27,1110) -- (84,1110) ;
\draw [shift={(88,1110)}, rotate = 180] [fill={rgb, 255:red, 0; green, 0; blue, 0 }  ][line width=0.08]  [draw opacity=0] (11.61,-5.58) -- (0,0) -- (11.61,5.58) -- cycle    ;
%Straight Lines [id:da23681988153319034] 
\draw [line width=1.5]    (304,937) -- (304.06,975.5) ;
\draw [shift={(304.07,979.5)}, rotate = 269.91] [fill={rgb, 255:red, 0; green, 0; blue, 0 }  ][line width=0.08]  [draw opacity=0] (11.61,-5.58) -- (0,0) -- (11.61,5.58) -- cycle    ;
%Straight Lines [id:da2747544236701076] 
\draw [line width=1.5]    (304,1211) -- (304.06,1249.5) ;
\draw [shift={(304.07,1253.5)}, rotate = 269.91] [fill={rgb, 255:red, 0; green, 0; blue, 0 }  ][line width=0.08]  [draw opacity=0] (11.61,-5.58) -- (0,0) -- (11.61,5.58) -- cycle    ;
%Straight Lines [id:da575319238740541] 
\draw [line width=1.5]    (483,1163) -- (482.29,1123.25) ;
\draw [shift={(482.22,1119.25)}, rotate = 88.98] [fill={rgb, 255:red, 0; green, 0; blue, 0 }  ][line width=0.08]  [draw opacity=0] (11.61,-5.58) -- (0,0) -- (11.61,5.58) -- cycle    ;
%Straight Lines [id:da006806633158048303] 
\draw [line width=1.5]    (124,1038) -- (124,1076) ;
\draw [shift={(124,1080)}, rotate = 270] [fill={rgb, 255:red, 0; green, 0; blue, 0 }  ][line width=0.08]  [draw opacity=0] (11.61,-5.58) -- (0,0) -- (11.61,5.58) -- cycle    ;
%Straight Lines [id:da2794793705649452] 
\draw [line width=1.5]    (267.78,1190) -- (224,1190.92) ;
\draw [shift={(220,1191)}, rotate = 358.8] [fill={rgb, 255:red, 0; green, 0; blue, 0 }  ][line width=0.08]  [draw opacity=0] (11.61,-5.58) -- (0,0) -- (11.61,5.58) -- cycle    ;
%Straight Lines [id:da3368487286327817] 
\draw [line width=1.5]    (440.78,1099) -- (390,1099.93) ;
\draw [shift={(386,1100)}, rotate = 358.95] [fill={rgb, 255:red, 0; green, 0; blue, 0 }  ][line width=0.08]  [draw opacity=0] (11.61,-5.58) -- (0,0) -- (11.61,5.58) -- cycle    ;
%Straight Lines [id:da07541033076620196] 
\draw [line width=1.5]    (219.37,1101) -- (168.59,1101.93) ;
\draw [shift={(164.59,1102)}, rotate = 358.95] [fill={rgb, 255:red, 0; green, 0; blue, 0 }  ][line width=0.08]  [draw opacity=0] (11.61,-5.58) -- (0,0) -- (11.61,5.58) -- cycle    ;
%Straight Lines [id:da10085502907225186] 
\draw [line width=1.5]    (268.78,1002) -- (219,1002) ;
\draw [shift={(215,1002)}, rotate = 360] [fill={rgb, 255:red, 0; green, 0; blue, 0 }  ][line width=0.08]  [draw opacity=0] (11.61,-5.58) -- (0,0) -- (11.61,5.58) -- cycle    ;
%Straight Lines [id:da27819346853860316] 
\draw [line width=1.5]    (392.37,1189) -- (348.59,1189.92) ;
\draw [shift={(344.59,1190)}, rotate = 358.8] [fill={rgb, 255:red, 0; green, 0; blue, 0 }  ][line width=0.08]  [draw opacity=0] (11.61,-5.58) -- (0,0) -- (11.61,5.58) -- cycle    ;
%Straight Lines [id:da9590726797017028] 
\draw [color={rgb, 255:red, 1; green, 33; blue, 250 }  ,draw opacity=1 ][line width=1.5]  [dash pattern={on 1.69pt off 2.76pt}]  (215,1002) -- (125,1001) ;
%Straight Lines [id:da9417881800591714] 
\draw [color={rgb, 255:red, 1; green, 33; blue, 250 }  ,draw opacity=1 ][line width=1.5]  [dash pattern={on 1.69pt off 2.76pt}]  (386,1100) -- (219.37,1101) ;
%Straight Lines [id:da17852382386780397] 
\draw [line width=1.5]    (395,1002) -- (349.59,1002) ;
\draw [shift={(345.59,1002)}, rotate = 360] [fill={rgb, 255:red, 0; green, 0; blue, 0 }  ][line width=0.08]  [draw opacity=0] (11.61,-5.58) -- (0,0) -- (11.61,5.58) -- cycle    ;
%Straight Lines [id:da20005637595727643] 
\draw [line width=1.5]    (480,1037) -- (480.54,1075) ;
\draw [shift={(480.6,1079)}, rotate = 269.19] [fill={rgb, 255:red, 0; green, 0; blue, 0 }  ][line width=0.08]  [draw opacity=0] (11.61,-5.58) -- (0,0) -- (11.61,5.58) -- cycle    ;
%Straight Lines [id:da25965024317124197] 
\draw [color={rgb, 255:red, 1; green, 33; blue, 250 }  ,draw opacity=1 ][line width=1.5]  [dash pattern={on 1.69pt off 2.76pt}]  (480,1002) -- (395,1002) ;
%Straight Lines [id:da8643577007368981] 
\draw [color={rgb, 255:red, 1; green, 33; blue, 250 }  ,draw opacity=1 ][line width=1.5]  [dash pattern={on 1.69pt off 2.76pt}]  (480.3,1058) -- (480,1002) ;
%Straight Lines [id:da9893605922144237] 
\draw [line width=1.5]    (123,1166) -- (122.29,1126.25) ;
\draw [shift={(122.22,1122.25)}, rotate = 88.98] [fill={rgb, 255:red, 0; green, 0; blue, 0 }  ][line width=0.08]  [draw opacity=0] (11.61,-5.58) -- (0,0) -- (11.61,5.58) -- cycle    ;
%Straight Lines [id:da5785127224730517] 
\draw [color={rgb, 255:red, 1; green, 33; blue, 250 }  ,draw opacity=1 ][line width=1.5]  [dash pattern={on 1.69pt off 2.76pt}]  (220,1191) -- (122,1192) ;
%Straight Lines [id:da7234081689621443] 
\draw [color={rgb, 255:red, 1; green, 33; blue, 250 }  ,draw opacity=1 ][line width=1.5]  [dash pattern={on 1.69pt off 2.76pt}]  (123.15,1194) -- (122.85,1138) ;

% Text Node
\draw (105.17,1087) node [anchor=north west][inner sep=0.75pt]  [font=\large,color={rgb, 255:red, 0; green, 0; blue, 0 }  ,opacity=1 ] [align=left] {$\displaystyle M( t)$};
% Text Node
\draw (286.17,987) node [anchor=north west][inner sep=0.75pt]  [font=\large,color={rgb, 255:red, 0; green, 0; blue, 0 }  ,opacity=1 ] [align=left] {$\displaystyle S( t)$};
% Text Node
\draw (281.17,1176) node [anchor=north west][inner sep=0.75pt]  [font=\large,color={rgb, 255:red, 0; green, 0; blue, 0 }  ,opacity=1 ] [align=left] {$\displaystyle N_{R}( t)$};
% Text Node
\draw (458.17,1084) node [anchor=north west][inner sep=0.75pt]  [font=\large,color={rgb, 255:red, 0; green, 0; blue, 0 }  ,opacity=1 ] [align=left] {$\displaystyle R( t)$};
% Text Node
\draw (33.7,1064) node [anchor=north west][inner sep=0.75pt]   [align=left] {$\displaystyle \alpha M$};
% Text Node
\draw (4.7,1117) node [anchor=north west][inner sep=0.75pt]   [align=left] {$\displaystyle ( u_{1} +u_{2}) M$};
% Text Node
\draw (126,1038) node [anchor=north west][inner sep=0.75pt]   [align=left] {$\displaystyle \beta MS$};
% Text Node
\draw (227,973) node [anchor=north west][inner sep=0.75pt]   [align=left] {$\displaystyle \sigma SM$};
% Text Node
\draw (355,971) node [anchor=north west][inner sep=0.75pt]   [align=left] {$\displaystyle \xi SR$};
% Text Node
\draw (268,930) node [anchor=north west][inner sep=0.75pt]   [align=left] {$\displaystyle \alpha _{1} S$};
% Text Node
\draw (180,1073) node [anchor=north west][inner sep=0.75pt]   [align=left] {$\displaystyle \delta MR$};
% Text Node
\draw (396,1071) node [anchor=north west][inner sep=0.75pt]   [align=left] {$\displaystyle \gamma RM$};
% Text Node
\draw (487,1043) node [anchor=north west][inner sep=0.75pt]   [align=left] {$\displaystyle \gamma RS$};
% Text Node
\draw (450,1142) node [anchor=north west][inner sep=0.75pt]   [align=left] {$\displaystyle \omega R$};
% Text Node
\draw (359,1158) node [anchor=north west][inner sep=0.75pt]   [align=left] {$\displaystyle \psi NR$};
% Text Node
\draw (311,1214) node [anchor=north west][inner sep=0.75pt]   [align=left] {$\displaystyle \frac{\psi }{K} N_{R}^{2}$};
% Text Node
\draw (207,1160) node [anchor=north west][inner sep=0.75pt]   [align=left] {$\displaystyle \phi N_{R} M$};
% Text Node
\draw (127,1133) node [anchor=north west][inner sep=0.75pt]   [align=left] {$\displaystyle \tau MN_{R}$};
\end{tikzpicture}
    \caption{\centering Flow Diagram of the Model \eqref{eq:model}}
    \label{fig:Flow Diagram of the Model }
\end{figure}
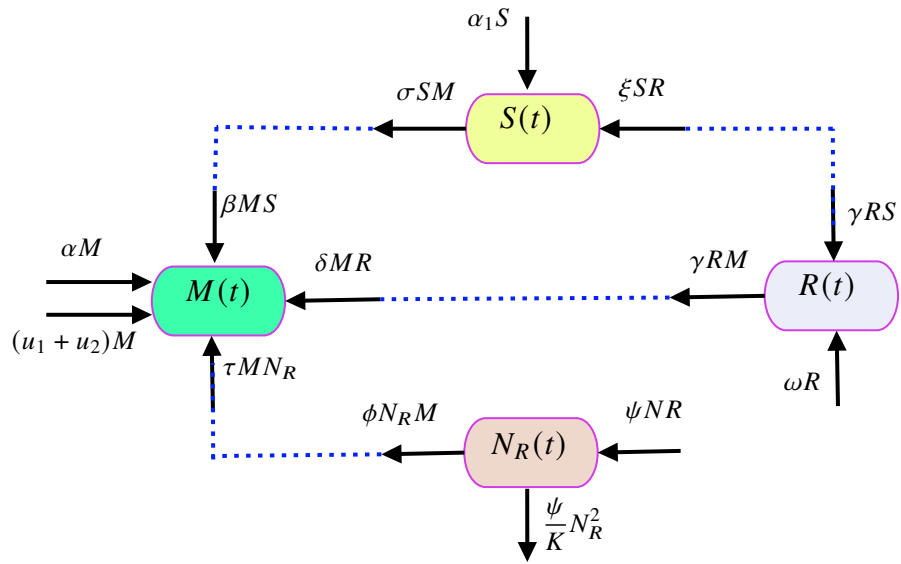
\\ However, combining equations \eqref{eq:model1}-\eqref{eq:model4}, we have obtained our proposed model as follows:
\begin{equation}
  \left\{
    \begin{aligned}
    \frac{dM}{dt}&=\alpha M+\beta MS+\delta MR+\tau MN_R-(\mu_{2000}^{1}+\mu_{2000}^{2}) M \\
    \frac{dS}{dt}&=\alpha_1 S+ \xi SR-\sigma SM  \\
    \frac{dR}{dt}&= \omega R-\gamma_1 RS -\gamma_2 RM \\
    \frac{dN_R}{dt}&= \psi N_R\left(1-\frac{1}{K} N_R\right)-\phi N_R M
\end{aligned} 
  \right.
\label{eq:model}
\end{equation}
with initial conditions $M(0)=M_0\ge 0, S(0)=S_0\ge 0, R(0)=R_0\ge 0$ and $N_R(0)=N_{R_0}
\ge 0$.

\section{Analytical  Analysis}\label{sec:Analytical  Analysis}
With an emphasis on the system's existence, stability at equilibrium points, and different dynamic behaviours, this section provides  analytical analysis of the newly proposed model \eqref{eq:model}.

\subsection{Non-negativity of the Model Solution}\label{Non-negativity of the Model Solution}
According to population-based biological models, populations must always remain non-negative. Consequently, to ensure our model incorporates a positive population, we posit that the initial population size is determined in a way that ensures all population compartments remain positive at all future time points.

\begin{Lemma}
Assuming $M(0)\ge 0, S(0)\ge 0, R(0)\ge 0$ and $N_R(0)\ge0$, then $M(t), S(t), R(t)$ and $N_R(t)$ will always be positive for all $t \epsilon [0,T]$ in ${\mathbb{R}_{+}^4}$ where $T>0$.
\label{Theorem:Positivity Theorem}\index{Positivity}
\end{Lemma}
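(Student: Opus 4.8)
The plan is to exploit the fact that every right-hand side of \eqref{eq:model} carries its own state variable as a common factor, so each equation is a scalar \emph{linear} non-autonomous ODE in that variable once the other three components are regarded as given coefficient functions. Concretely, I would first rewrite the system as
\begin{align*}
\frac{dM}{dt}&=M\,g_M(t),\qquad g_M(t)=\alpha+\beta S(t)+\delta R(t)+\tau N_R(t)-(\mu_{2000}^{1}+\mu_{2000}^{2}),\\
\frac{dS}{dt}&=S\,g_S(t),\qquad g_S(t)=\alpha_1+\xi R(t)-\sigma M(t),\\
\frac{dR}{dt}&=R\,g_R(t),\qquad g_R(t)=\omega-\gamma_1 S(t)-\gamma_2 M(t),\\
\frac{dN_R}{dt}&=N_R\,g_N(t),\qquad g_N(t)=\psi\Big(1-\tfrac{1}{K}N_R(t)\Big)-\phi M(t).
\end{align*}

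Next I would invoke the local existence–uniqueness result (the right-hand sides are polynomial, hence locally Lipschitz, so Picard–Lindel\"of applies) to obtain a continuous solution $(M,S,R,N_R)$ on $[0,T]$; in particular each coefficient function $g_M,g_S,g_R,g_N$ is continuous on the compact interval $[0,T]$ and therefore bounded, so the integrals $\int_0^t g_\bullet(s)\,ds$ are finite for every $t\in[0,T]$. Integrating each scalar linear equation (separation of variables / integrating factor) then yields the explicit representations
\[
M(t)=M(0)\,e^{\int_0^t g_M(s)\,ds},\quad S(t)=S(0)\,e^{\int_0^t g_S(s)\,ds},\quad R(t)=R(0)\,e^{\int_0^t g_R(s)\,ds},\quad N_R(t)=N_R(0)\,e^{\int_0^t g_N(s)\,ds}.
\]
Since each exponential factor is strictly positive and finite, the sign of each component is preserved: if $M(0),S(0),R(0),N_R(0)\ge 0$ then all four remain $\ge 0$ on $[0,T]$, and strictly positive as soon as the corresponding initial datum is strictly positive. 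This simultaneously shows that $\mathbb{R}_+^4$ is forward invariant for \eqref{eq:model}.

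The main obstacle — more a point of care than of difficulty — is the logical ordering: the representation formulas presuppose that a solution already exists and stays finite on all of $[0,T]$, so I would either restrict attention to $T$ inside the maximal interval of existence guaranteed by Picard–Lindel\"of, or recast the argument as an a priori bound combined with a continuation argument (if the trajectory left $\mathbb{R}_+^4$, let $t_1$ be the first exit time; on $[0,t_1]$ the coefficients are still continuous, so the exponential formula forces every component starting positive to remain positive at $t_1$, contradicting the choice of $t_1$). A secondary, purely cosmetic, issue is the gap between the hypothesis ``$\ge 0$'' and the word ``positive'' in the statement: a component with initial value exactly $0$ stays identically $0$, so what the factorized-exponential argument actually delivers is non-negativity for all time together with strict positivity of any component with strictly positive initial data, and I would phrase the conclusion in that sharpened form.
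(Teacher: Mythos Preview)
Your approach is correct and is essentially the same as the paper's: factor each equation as $\dot X = X\,g_X(t)$, integrate to obtain $X(t)=X(0)\exp\!\big(\int_0^t g_X\big)$, and conclude non-negativity from the positivity of the exponential. Your write-up is in fact more careful than the paper's (you address existence/uniqueness, continuity of the coefficients, and the $\ge 0$ versus strictly positive distinction), but the underlying argument is identical.
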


\begin{proof}
Assume that all of the system's parameters and initial values are positive. Now, we have to prove that $M(t),S(t),R(t)$ and $N_R(t)$ will be positive for all  $t \epsilon [0,T]$ in ${\mathbb{R}_{+}^4}$.\\
From the first equation of the model \eqref{eq:model}, we can write  as follows
\begin{align*} 
\frac{dM}{dt} &=\alpha M+\beta S M+\delta R M +\tau N_R M-(u_{2000}^1+u_{2000}^2) M \\ 
	\Longrightarrow \frac{dM}{dt}  & =\biggl(\alpha +\beta S +\delta R  +\tau N_R -(u_{2000}^1+u_{2000}^2) \biggr)M
 \end{align*}
So, the differential form of the above equation would be
 \begin{align*}
\frac{dM}{M}   =\biggl(\alpha +\beta S +\delta R  +\tau N_R -(u_{2000}^1+u_{2000}^2) \biggr)dt
\end{align*}
After integrating both sides,
\begin{align*}
    \Longrightarrow \ln{M} \ge  (\alpha -u_{2000}^1-u_{2000}^2)t+ \int{(\beta S^*+\delta R^*+\tau N^*)}dt 
\end{align*}    
Then, this result gives
\begin{align*}
M = e^{(\alpha -u_{2000}^1-u_{2000}^2)t+ \int{(\beta S^*+\delta R^*+\tau N^*)}dt} > 0
\end{align*}    
Again, from last equation of the model we can write,
\begin{align}
    \frac{dN_R}{dt}=\left(\psi-\frac{\psi N_R}{K}-\phi M\right) N_R(t)
    \label{eq:differential}
    \end{align}
Thus, the differential form of equation \eqref{eq:differential} shows
    \begin{align*}
         \frac{dN_R}{N_R}=\left(\psi-\frac{\psi N_R}{K}-\phi M\right) dt
\end{align*}
After integration, we get
\begin{align*}
    \therefore N_R =e^{\psi t-\frac{\psi}{K}\int{\left({N_R}-\phi M\right)}dt}>0
\end{align*}
Similary, it is easy to prove that $S(t)>0$ and $R(t)>0$ as $t\xrightarrow{}\infty$.
\end{proof}

\subsection{Boundedness of Model's Solution
}\label{Model Solution's Boundedness}
\begin{Lemma}
	\label{Theorem:Boundedness Theorem2}
	All the solution trajectories  $M(t),S(t),R(t)$ and $N_R(t)$ are bounded.
\end{Lemma}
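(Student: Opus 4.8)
The plan is to combine a logistic comparison estimate for the natural-resource compartment with a Gronwall-type bound for a weighted total population. First I would bound $N_R$: by Lemma~\ref{Theorem:Positivity Theorem} we have $M(t)\ge 0$, so the fourth equation of \eqref{eq:model} gives $\frac{dN_R}{dt}\le \psi N_R\!\left(1-\frac{N_R}{K}\right)$; comparison with the logistic equation then yields $N_R(t)\le\max\{N_R(0),K\}$ for all $t\ge 0$, hence $\limsup_{t\to\infty}N_R(t)\le K$, so $N_R$ is bounded.

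Next I would set $W(t)=M(t)+c_1 S(t)+c_2 R(t)+c_3 N_R(t)$ with positive weights $c_1,c_2,c_3$ to be chosen, and differentiate $W$ along \eqref{eq:model}. The mixed-sign bilinear terms appear with coefficients $(\beta-c_1\sigma)MS$, $(\delta-c_2\gamma_2)MR$, $(\tau-c_3\phi)MN_R$ and $(c_1\xi-c_2\gamma_1)SR$; taking $c_1\ge\beta/\sigma$, then $c_2\ge\max\{\delta/\gamma_2,\ c_1\xi/\gamma_1\}$, then $c_3\ge\tau/\phi$ makes all four nonpositive, which is exactly the ecological requirement that no predator gains more biomass than its prey loses. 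Dropping those nonpositive terms, $\frac{dW}{dt}$ is then bounded above by the linear expression $(\alpha-\mu)M+c_1\alpha_1 S+c_2\omega R+c_3\psi N_R-c_3\frac{\psi}{K}N_R^2$, where $\mu=\mu^{1}_{2000}+\mu^{2}_{2000}$, and the resource part $c_3\psi N_R-c_3\frac{\psi}{K}N_R^2$ is at most the constant $c_3\psi K/4$.

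The remaining and, I expect, decisive step is to turn this into a dissipative inequality $\frac{dW}{dt}\le \Pi-\eta W$ with $\Pi>0$ and $\eta>0$; once that is in hand the comparison lemma gives $W(t)\le \frac{\Pi}{\eta}+\big(W(0)-\frac{\Pi}{\eta}\big)e^{-\eta t}$, so $W$ --- and hence each of the nonnegative states $M,S,R,N_R$ --- is bounded on $[0,\infty)$. The obstacle is that, unlike $N_R$, the compartments $M$, $S$ and $R$ carry no intrinsic density-dependent self-limitation: the positive rates $\alpha_1 S$ and $\omega R$ have nothing to offset them after the naive cancellation above. To close the estimate one must refrain from discarding all of the predation terms --- keeping part of $-c_1\sigma MS-c_2\gamma_2 MR-c_2\gamma_1 SR$ --- and combine them with the removal term $-\mu M$ under the control-phase hypothesis $\mu>\alpha$, so that mongoose predation and trapping together provide the missing negative feedback on the snake and rat populations. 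Choosing the weights $c_1,c_2,c_3$ and the parameter thresholds consistently so that the resulting $\eta$ is strictly positive is the delicate bookkeeping on which the proof rests.
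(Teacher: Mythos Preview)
Your route differs from the paper's. After the same logistic comparison for $N_R$, the paper adds the $S$- and $R$-equations, discards $-\sigma SM-\gamma_2 RM$, uses $SR\le\tfrac14(S+R)^2$ to reach $\tfrac{d}{dt}(S+R)\le\Omega\bigl[(S+R)+(S+R)^2\bigr]$, writes down an explicit ``bound'' $S+R\le L$, and finally bounds $M$ from a linear ODE with the now-bounded coefficients. You instead build a single weighted sum $W=M+c_1S+c_2R+c_3N_R$ and aim for a dissipative inequality $W'\le\Pi-\eta W$.

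The gap you flag is genuine, and your proposed repair does not close it. The only linear sink in the whole system is $-\mu M$; the damping on $S$ and $R$ that you hope to salvage from $-\sigma MS$, $-\gamma_2 MR$, $-\gamma_1 SR$ behaves like $-\eta S$ or $-\eta R$ only when $M$ (resp.\ $S$) is bounded \emph{below} by a positive constant, and under your own hypothesis $\mu>\alpha$ the mongoose compartment is driven toward zero, so that damping evaporates exactly when it is needed. More structurally, on the invariant face $\{M=0\}$ the snake equation reads $S'=S(\alpha_1+\xi R)\ge\alpha_1 S$, whence $S(t)\ge S(0)e^{\alpha_1 t}$ is unbounded; consequently no linear combination $W$ can satisfy $W'\le\Pi-\eta W$ uniformly on the nonnegative orthant. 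A bound valid for strictly positive initial data would require a nonlinear Lyapunov function of Volterra type, $\sum a_i(x_i-x_i^*\ln x_i)$, that actually exploits the predator--prey feedback, rather than a weighted sum. The paper's argument meets the same obstruction from the other side: the comparison ODE $y'=\Omega(y+y^2)$ is super-linear and its solutions blow up in finite time, so the expression it writes for $L$ is not a uniform-in-$t$ bound either; the difficulty you isolated is therefore the real one.
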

\begin{proof}
From the last equation of the model \eqref{eq:model}, we get
  \begin{align*}
      \frac{dN_R}{dt}\le \psi N_R\left(1-\frac{N}{N_R}\right)
  \end{align*}
Then by using comparison principle \citep{Rajat2023}, we get
\begin{align*}
     \lim_{t\to\infty} Sup\{N_R(t)\}\le \frac{\psi}{\frac{\psi}{K}}&=K \\
     N_{R_{max}}&=K
\end{align*}
Now, adding second and third equation of the model \eqref{eq:model}, 
\begin{align*}
    \frac{d}{dt}\left(S+R\right)=&\alpha_1 S+\xi SR-\sigma SM+\omega R-\gamma_1 RS-\gamma_2 RM\\
    \implies \frac{d}{dt}\left(S+R\right)\le& \alpha_1 S+\xi SR+\omega R-\gamma_1 SR \\
    \implies \frac{d}{dt}\left(S+R\right)\le& \varepsilon(S+R)+(\xi-\gamma_1)SR \qquad\text{where } \varepsilon=\max\{\alpha_1,\omega\} \\
    \implies \frac{d}{dt}\left(S+R\right)\le &\varepsilon(S+R)+(\xi-\gamma_1)\left\{\left(\frac{S+R}{2}\right)^2-\left(\frac{S-R}{2}\right)^2\right\} \\
    \le & \varepsilon(S+R)+\frac{(\xi-\gamma_1)}{4}\left(S+R\right)^2\\
    \le& \Omega \left\{(S+R)+(S+R)^2\right\} \qquad\text{where } \Omega=\max\left\{\varepsilon,\frac{(\xi-\gamma_1)}{4}\right\}  \\
    \therefore S+R \le & \frac{1}{e^{\Omega t}e^{-c}-1}=L \qquad\text{where c is an arbitrary constant.}
\end{align*}
Thus, according to Lemma \eqref{Theorem:Positivity Theorem}, $S(t),R(t)$ for all $t\in[0, T]$ in $\mathbb{R}_4^+$ and so S(t) and B(t)
are bounded.\\
Again, Since we have found that $N_R,S$ and $R$ all are already bounded, thus from the first equation of the model \eqref{eq:model}, we get
\begin{align*}
    &\frac{dM}{dt}+\left(u_{2000}^1+u_{2000}^2-\alpha-\beta L-\delta L-\tau K\right)M=0 \\
    \implies&\frac{dM}{dt}+GM=0 \qquad\text{where } G=u_{2000}^1+u_{2000}^2-\alpha-\beta L-\delta L-\tau K\\
    \therefore\quad & M(t)=ce^{-Gt} 
\end{align*}
If \( G \) is positive, then as \( t \to \infty \), the exponent \( -Gt \) becomes very large and negative. This causes $e^{-Gt} \to 0$ to decay exponentially toward zero, i.e. $M(t)\le 0$.\\
These all complete our proof.
\end{proof}

\subsection{Existence and Uniqueness of Model's Solution}\label{Existence and Uniqueness of Solution}
\begin{theorem}[Existence and uniqueness of the model solution]
\label{Theorem:(Existence and uniqueness of the model's solution)}
Let $D$ be a domain where the Lipschitz conditions are satisfied. Then, for all non-negative initial conditions, the system's solutions exist and remain unique for all $T\ge 0$ within the domain $D$.
\end{theorem}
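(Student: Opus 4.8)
The plan is to recast the system \eqref{eq:model} in compact vector form $\dot X = F(X)$ with $X=(M,S,R,N_R)^{\top}$ and $F=(f_1,f_2,f_3,f_4)^{\top}$, where $f_1=\alpha M+\beta MS+\delta MR+\tau MN_R-(\mu_{2000}^{1}+\mu_{2000}^{2})M$ and $f_2,f_3,f_4$ are read off from the remaining three equations of \eqref{eq:model}. The engine is the classical Picard--Lindel\"of (Cauchy--Lipschitz) theorem: once $F$ is shown to be continuous and locally Lipschitz on the domain $D$, local existence and uniqueness of a solution through each non-negative initial datum follow immediately, and global existence on $[0,T]$ for every $T\ge0$ is then obtained by patching the local solution together using the a priori estimates already proved.

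First I would note that every component $f_i$ is a polynomial in the four state variables, hence $F\in C^{\infty}(\mathbb{R}^4)$; in particular all first-order partials $\partial f_i/\partial x_j$ exist and are continuous, so $F$ is certainly continuous on $D$. I would then take $D$ to be a closed bounded box containing the positively invariant region furnished by Lemma~\ref{Theorem:Positivity Theorem} and Lemma~\ref{Theorem:Boundedness Theorem2}, namely $\{\,0\le N_R\le K,\ 0\le S+R\le L,\ 0\le M\le c\,\}$ for a suitable constant $c$. On this compact set the Jacobian $J=DF$ — whose first row consists of the affine entries $\alpha+\beta S+\delta R+\tau N_R-(\mu_{2000}^{1}+\mu_{2000}^{2})$, $\beta M$, $\delta M$, $\tau M$, with similarly affine expressions in the other rows — has all entries bounded, so $\sup_{X\in D}\|J(X)\|=:L_D<\infty$, with $L_D$ depending only on the parameters and on $K,L,c$.

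Then, by the mean value inequality, for any $X_1,X_2\in D$
\[
\abs{F(X_1)-F(X_2)}\le\Big(\sup_{X\in D}\|J(X)\|\Big)\abs{X_1-X_2}\le L_D\,\abs{X_1-X_2},
\]
so $F$ is Lipschitz on $D$ with constant $L_D$; together with continuity of $F$, Picard--Lindel\"of yields a unique solution on a maximal interval $[0,T_{\max})$ for each non-negative $X(0)=(M_0,S_0,R_0,N_{R_0})\in D$. Finally, since Lemma~\ref{Theorem:Positivity Theorem} and Lemma~\ref{Theorem:Boundedness Theorem2} confine the trajectory to the compact set $D$ for all $t$, it can neither reach $\partial D$ in the relevant sense nor blow up in finite time, so the standard continuation criterion forces $T_{\max}=\infty$, giving existence and uniqueness on $[0,T]$ for every $T\ge0$.

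The step I expect to be the main obstacle is not any particular calculation but making the domain argument airtight: one must choose $D$ simultaneously (i) large enough to be positively invariant for the flow — which is precisely what the two preceding lemmas supply — and (ii) bounded, so that the polynomial Jacobian is \emph{uniformly} bounded there and the Lipschitz constant $L_D$ is finite. A naive attempt to establish a Lipschitz estimate on all of $\mathbb{R}^4_{+}$ would fail, because the quadratic interaction terms $\beta MS$, $\xi SR$, $\phi N_R M$, etc.\ make $F$ only \emph{locally} Lipschitz; thus the invariance/boundedness input is indispensable and is the conceptual crux, while the remaining estimates are routine.
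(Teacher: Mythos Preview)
Your proposal is correct and follows essentially the same route as the paper: verify the Lipschitz condition by bounding the entries of the Jacobian $\partial f_i/\partial x_j$ and then invoke the standard Picard--Lindel\"of/Cauchy--Lipschitz theorem. The paper simply lists all sixteen partial derivatives and declares each ``$<\infty$'' before citing the existence--uniqueness theorem, whereas you are more explicit about why the bound is uniform (restricting to the compact positively invariant box supplied by Lemmas~\ref{Theorem:Positivity Theorem} and~\ref{Theorem:Boundedness Theorem2}) and about the continuation argument that upgrades local to global existence---a welcome tightening, but not a different strategy.
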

\begin{proof}
It has been observed that the Lipschitz condition for the existence and uniqueness of a solution must be adhered to in a region $D$, as suggested by the proposed theorem described in \cite{sowole2009,Kundu2024guava}. Let us consider that
\begin{align*}
	f_1(M, S, R, N_R)&=\alpha M+\beta MS+\delta MR+\tau MN_R-(\mu_{2000}^{1}+\mu_{2000}^{2}) M\\
	f_2(M, S, R, N_R)&=\alpha_1 S+ \xi SR-\sigma SM\\
    f_3(M, S, R, N_R)&=\omega R-\gamma_1 RS -\gamma_2 RM\\
    f_4(M, S, R, N_R)&=\psi N_R\left(1-\frac{1}{K} N_R\right)-\phi N_R M
\end{align*}
By using the given system of equations, the partial derivatives of \( f_1, f_2, f_3 \), and \( f_4 \) with respect to the compartments $M,S,R$ and \( N_R \) are obtained as:  
\begin{align*}
	&\abs{\frac{\partial f_1}{\partial M}}= \abs{\alpha+\beta S+\delta R+\tau N_R-\mu_{2000}^1-\mu_{2000}^2}<\infty,\abs{\frac{\partial f_1}{\partial S}}= \abs{\beta M}=\beta M<\infty,\\
    &\abs{\frac{\partial f_1}{\partial R}}= \abs{\delta M}=\delta M<\infty, \abs{\frac{\partial f_1}{\partial N_R}}= \abs{\tau M}=\tau M<\infty
\end{align*}
Again,
\begin{align*}
	&\abs{\frac{\partial f_2}{\partial M}}= \abs{-\sigma S}=\sigma S<\infty,\abs{\frac{\partial f_2}{\partial S}}= \abs{\alpha_1+\xi R-\sigma M}<\infty,\\
    &\abs{\frac{\partial f_2}{\partial R}}= \abs{\xi S}=\xi S<\infty, \abs{\frac{\partial f_2}{\partial N_R}}= \abs{0}<\infty
\end{align*}
Also,
\begin{align*}
	&\abs{\frac{\partial f_3}{\partial M}}= \abs{-\gamma_2 R}=\gamma_2 R<\infty,\abs{\frac{\partial f_3}{\partial S}}= \abs{-\gamma_1 R}=\gamma_1 R<\infty,\\
    &\abs{\frac{\partial f_3}{\partial R}}= \abs{\omega-\gamma_1 R-\gamma_2 M}<\infty, \abs{\frac{\partial f_3}{\partial N_R}}= \abs{0}<\infty
\end{align*}
Furthermore,
\begin{align*}
	&\abs{\frac{\partial f_4}{\partial M}}= \abs{-\phi N_R}=\phi N_R R<\infty,\abs{\frac{\partial f_4}{\partial S}}= \abs{0}<\infty,\\
    &\abs{\frac{\partial f_4}{\partial R}}= \abs{0}<\infty, \abs{\frac{\partial f_4}{\partial N_R}}= \abs{\psi-2\frac{\psi N_R}{K}}<\infty,
\end{align*}
Therefore, we have demonstrated that all the partial derivatives are continuous and bounded, which ensures that Lipschitz's conditions are met. As a result, by the theorem presented in \citet{sowole2009}, there exists a unique solution to the system \eqref{eq:model} within the domain $D$.

\end{proof}

\subsection{Equilibrium Analysis}\label{se: equailibrium analysis}
Determining the equilibrium points of a differential equation is the initial step in understanding its behavior. In this first stage of analysis, we identify the equilibrium points in this section. To achieve this, we set the system of model equations to zero, i.e.
\begin{align*}
    \frac{dM}{dt}=\frac{dS}{dt}=\frac{dR}{dt}=\frac{dN_R}{dt}=0
\end{align*}
The system's equating to zeros has enabled us to
\begin{align}
    &\alpha M+\beta MS+\delta MR+\tau MN_R-(\mu_{2000}^{1}+\mu_{2000}^{2}) M=0 \label{Eq:Equilibrium equations1}\\
   &\alpha_1 S+ \xi SR-\sigma SM=0\label{Eq:Equilibrium equations2}\\
    &\omega R-\gamma_1 RS -\gamma_2 RM =0 \label{Eq:Equilibrium equations3}\\
    &\psi N_R\left(1-\frac{1}{K} N_R\right)-\phi N_R M=0\label{Eq:Equilibrium equations4} 
\end{align}
After solving the equations (\ref{Eq:Equilibrium equations1}) - (\ref{Eq:Equilibrium equations4}), we get at most four types of following non-negative equilibrium points $E_1, E_2,E_3$ and $E_4$ considering $K_1=\frac{\psi}{K}, \alpha_{11}=\alpha-(\mu_{2000}^{1}+\mu_{2000}^{2})$
\begin{enumerate}
    \item The trivial  equilibrium point $ E_1=\left(0,0,0,0\right) $
    \item Only Natural resources persist equilibrium point 
 $E_2=\left(0,0,0,\frac{\psi}{K_1}\right)$
   \item Snakes and rats free equilibrium point $E_3=\left(\frac{K_1\alpha_{11}+\psi\tau}{\phi\tau},0,0,\frac{\alpha_{11}}{\tau}\right)$
    \item The positive interior equilibrium point  $E_4=\left(M^*,S^*,R^*,N_R^*\right)$ where 
     $ M^*=\frac{\omega}{\gamma_2},\quad S^*=\frac{\alpha_{11}}{\beta},\\ \quad R^*=\frac{K_1 \alpha_{11} \gamma_2 + \tau (\gamma_2 \psi - \omega \phi)}{K_1 \delta \gamma_2}, \quad
    N_R^*=\frac{\gamma_2 \psi - \omega \phi}{K_1 \gamma_2}$ \\
    \text{under the conditions} $\alpha > \left( u_{2000}^{1} + u_{2000}^{2} \right)$, $ \quad \gamma_2 \psi > \omega \phi.$ 
\end{enumerate}

\subsection{Future Status of Natural Resources}
Applying the next-generation matrix method illustrated in \cite{Diekmann1990OnTD}, we determine the future status of natural resources, denoted as $\mathcal{R}_0$.
 To analyze the dynamical behavior of these resources—essential for maintaining ecological balance on Amami Oshima Island—we consider the state variable $N_R$ in our discussion. Let us taken
\begin{equation*}
    \frac{dN_R}{dt}=\psi N_R\left(1-\frac{N}{K}\right)-\phi N_R M
\end{equation*}
Differentiating it with respect to $N_R$, we get
\begin{equation*}
    \Longrightarrow \frac{d}{dN_R}\left(\frac{dN_R}{dt}\right)_{M=M^*,S=S^*,R=R^*,N_R=N_R^*}=\psi-\frac{2\psi N_R^*}{K}-\phi M^*
\end{equation*}
Therefore, 
\begin{align*}
   F= \begin{bmatrix}
       \psi
    \end{bmatrix}_{1\times 1} \text{ and } V= \begin{bmatrix}
      \frac{2\psi N_R^*}{K}+\phi M^*
    \end{bmatrix}_{1\times 1} 
\end{align*}
where $F$ and $V$ represent the increasing and declining of natural resources, respectively.\\ 
Then the future situation of natural resources $N_R$ would be $\mathcal{R}_0$ would be
\begin{equation*}
    \mathcal{R}_0=FV^{-1}=\frac{\psi}{ \frac{2\psi N_R^*}{K}+\phi M^*}=\frac{\psi K}{2\psi N_R^*+\phi K M^*}
\end{equation*}
As a result, natural resources will flourish when \( \psi K \) surpasses \( 2\psi N_R^* + \phi K M^* \). Conversely, if \( \psi K \) falls below \( 2\psi N_R^* + \phi K M^* \), a decline in natural resources will be observed.\\
 However, at the interior equilibrium point, the future status of Natural resources $N_R$ in Amami Oshima can be written as 
\begin{align*}
   \mathcal{R}_0&=\frac{\psi K}{2\psi \frac{K(\gamma_2\psi-\omega\phi)}{\psi\gamma_2}+\phi K \frac{\omega}{\gamma_2}}\\
   \therefore \mathcal{R}_0&=\frac{\psi \gamma_2}{2\psi (\gamma_2\psi-\omega\phi) +\phi  \omega}
\end{align*}

\subsection{Stability Analysis of the Model}
To proof theorems of stability analysis of the model \eqref{eq:model}, at first we consider

\begin{align}
f_1(M^*,S^*,R^*,N_R^*)&=\alpha M^*+\beta M^*S^*+\delta M^*R^*+\tau M^*N_R^*-(\mu_{2000}^{1}+\mu_{2000}^{2}) M^* \label{eq:stability1}  \\
     f_2(M^*,S^*,R^*,N_R^*)& =\alpha_1 S^*+ \xi S^*R^*-\sigma S^*M^* \label{eq:stability2}\\
 f_3(M^*,S^*,R^*,N_R^*)&= \omega R^*-\gamma_1 R^*S^* -\gamma_2 R^*M^* \label{eq:stability3} \\
     f_4(M^*,S^*,R^*,N_R^*)& =\psi N_R^*\left(1-\frac{1}{K} N_R^*\right)-\phi N_R^* M^* \label{eq:stability4}
\end{align}
For the equation (\ref{eq:stability1}) -- (\ref{eq:stability4}), the jacobian matrix is 
\begin{equation}
    J=\frac{\partial(f_1,f_2,f_3,f_4)}{\partial(M^*,S^*,R^*,N_R^*)}=\begin{pmatrix}
\frac{\partial{f_1}}{\partial M^*} & \frac{\partial{f_1}}{\partial S^*}  & \frac{\partial{f_1}}{\partial R^*} & \frac{\partial{f_1}}{\partial N_R^*} \\

\frac{\partial{f_2}}{\partial M^*} & \frac{\partial{f_2}}{\partial S^*}  & \frac{\partial{f_2}}{\partial R^*} & \frac{\partial{f_2}}{\partial N_R^*}\\

\frac{\partial{f_3}}{\partial{S_{f_0}^*}} & \frac{\partial{f_3}}{\partial{S_{f_1}^*}}  & \frac{\partial{f_3}}{\partial{A_f}^*} & \frac{\partial{f_3}}{\partial{S_c}^*} \\

\frac{\partial{f_4}}{\partial{S_{f_0}^*}} & \frac{\partial{f_4}}{\partial{S_{f_1}^*}}  & \frac{\partial{f_4}}{\partial{A_f}^*} & \frac{\partial{f_4}}{\partial{S_c}^*}\\
\end{pmatrix}
\end{equation}

\begin{align}
    \Longrightarrow J= 
\end{align}
\begin{equation}
   \begin{pmatrix}
	\alpha_{11}+\beta S^*+\delta R^*+\tau N^*& \beta M^*  & \delta M^*  & \tau M^*  \\
	-\sigma S^*  & \alpha_1 - \sigma M^* + \xi R^* & \xi S^* &  0 \\
	-\gamma_2 R^* & \gamma_1 R^* & \omega-\gamma_2 M^*-\gamma_1 S^* & 0\\
	-\phi N^*  & 0  & 0 & \psi-\phi M^*-2K_1 N^*
\end{pmatrix}
\label{jacobian}
\end{equation}
where $K_1=\frac{\psi}{K}, \alpha_{11}=\alpha-(u_{2000}^1+u_{2000}^2)$

\begin{theorem}[Local Stability Theorem]
\label{Theorem:Local Stability Theorem1}
The dynamical system \eqref{eq:model} is locally unstable at the trivial equilibrium point $E_1$. 
\end{theorem}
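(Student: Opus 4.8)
The claim is that the trivial equilibrium $E_1=(0,0,0,0)$ is unstable for the system \eqref{eq:model}. The natural approach is linearization: evaluate the Jacobian \eqref{jacobian} at $E_1$, i.e. set $M^*=S^*=R^*=N_R^*=0$, and show that the resulting matrix has at least one eigenvalue with positive real part. I would first substitute into \eqref{jacobian} to obtain the diagonal matrix $J(E_1)=\operatorname{diag}\!\left(\alpha_{11},\,\alpha_1,\,\omega,\,\psi\right)$, since every off-diagonal entry carries a factor of one of the vanishing state variables. Because the matrix is (block-)diagonal, its eigenvalues are read off immediately as $\lambda_1=\alpha_{11}=\alpha-(\mu_{2000}^{1}+\mu_{2000}^{2})$, $\lambda_2=\alpha_1$, $\lambda_3=\omega$, $\lambda_4=\psi$.

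Next I would invoke the standard linearization (Hartman–Grobman / Lyapunov first-method) criterion: a hyperbolic equilibrium is unstable whenever the Jacobian has an eigenvalue with strictly positive real part. Here $\alpha_1>0$ (the snake growth rate), $\omega>0$ (the rat growth rate), and $\psi>0$ (the logistic growth rate of the natural resources) are all positive by the modelling assumptions, so at least three eigenvalues are positive regardless of the sign of $\alpha_{11}$. Hence $E_1$ is an unstable node (a source in the $S$, $R$, $N_R$ directions), which establishes the theorem. I would phrase the conclusion so that it does not even require knowing the sign of $\alpha_{11}$, making the instability robust.

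The only genuinely delicate point — and the one I would flag as the main obstacle — is justifying that linearization actually certifies instability here, i.e. that $E_1$ is hyperbolic. If one also wanted $\alpha_{11}\neq 0$ this would need the extra hypothesis $\alpha\neq\mu_{2000}^{1}+\mu_{2000}^{2}$, but since three eigenvalues are already strictly positive, the unstable manifold is at least three-dimensional and the conclusion is unaffected; I would simply remark that the presence of a positive eigenvalue suffices for instability by the linearization theorem, and that the eigenvectors along the $S$-, $R$-, and $N_R$-axes span directions of exponential growth near the origin. A clean way to write this is: pick the eigendirection associated with $\psi>0$ (the $N_R$-axis), note trajectories started there grow like $e^{\psi t}$ to leading order, so no neighbourhood of $E_1$ is forward-invariant, contradicting stability.

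I would also make one cosmetic correction in passing: the Jacobian block in \eqref{jacobian} labels some partial derivatives with stray symbols ($S_{f_0}^*$, $A_f^*$, etc.) inherited from another model; at $E_1$ this is immaterial since the matrix collapses to the diagonal form above, but I would restate $J(E_1)$ explicitly rather than refer back to that display.
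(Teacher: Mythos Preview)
Your proposal is correct and follows essentially the same route as the paper: evaluate the Jacobian \eqref{jacobian} at $E_1$, obtain the diagonal matrix $\operatorname{diag}(\alpha_{11},\alpha_1,\omega,\psi)$, read off the eigenvalues, and conclude instability from the presence of positive eigenvalues. Your version is slightly more careful in invoking the linearization criterion and in noting that instability holds regardless of the sign of $\alpha_{11}$, but the argument is the same.
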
                
\begin{proof}
At the equilibrium point $ E_1=\left(0,0,0,0\right)$, the equation (\ref{jacobian}) becomes
\begin{equation}
    \Longrightarrow J_{\arrowvert E_1}=\begin{pmatrix}
\alpha_{11}& 0  & 0 & 0  \\
	0 & \alpha_1& 0 &  0 \\
	0 & 0 & \omega & 0\\
	0 & 0  & 0 & \psi
\end{pmatrix}
\end{equation}
\textbf{So, the characteristic equation is} $\begin{vmatrix}
        J_{\arrowvert E_1}-\lambda I
    \end{vmatrix}=0$
\begin{align}
 \begin{vmatrix}
 	\alpha_{11}-\lambda& 0  & 0 & 0  \\
	0 & \alpha_1-\lambda& 0 &  0 \\
	0 & 0 & \omega-\lambda & 0\\
	0 & 0  & 0 & \psi-\lambda
 \end{vmatrix}=0
\end{align}
From this,it is clear that there are  four different values of eigenvalue $(\lambda)$ and they are
\begin{align}
	\lambda_1 &=\alpha_{11} \\
	\lambda_2 &=\alpha_1 \\
	\lambda_3 &=\omega \\
	\lambda_4 &=\psi  
\end{align}
We see that all of the eigenvalues cannot be negative.  Hence, the model or dynamical system is unstable at the equilibrium point $E_1$ due to the existence of positive eigenvalues $(\lambda)$.\\
Thus, our proof is now completed.
\end{proof}

\subsection{Characteristics of states equilibrium values with respect to \texorpdfstring{$\alpha$}{alpha}}

We will discuss the characterization of the equilibrium values of the population of $M(t),S(t),R(t)$ and $N_R(t)$  based on \cite{MISRA2011128}. From  the system \eqref{eq:model}, we obtain two functions of $M^*, S^*$ and $\alpha$ in below:
\begin{align*}
    f(M^*, S^*,\alpha)&= \alpha M^* - \left( u_{2000}^1+u_{2000}^2\right) M^* + \beta M^*S^* + \frac{M^*\tau(\psi-M^*\phi)}{K_1}\\
     g(M^*, S^*,\alpha)&=\alpha_1 S^* - \sigma M^* S^*
\end{align*}
where $K_1$ ia already defined in Section \ref{se: equailibrium analysis}.
Then we get,
\begin{equation*}
    \therefore \frac{dM^*}{d\alpha}=\frac{\begin{vmatrix}
\frac{\partial{  f(M^*, S^*,\alpha)}}{\partial{S^*}} & \frac{\partial{  f(M^*, S^*,\alpha)}}{\partial{\alpha}} \\
\frac{\partial{  g(M^*, S^*,\alpha)}}{\partial{S^*}} & \frac{\partial{  g(M^*, S^*,\alpha)}}{\partial{\alpha}}
\end{vmatrix}}
{\begin{vmatrix}
\frac{\partial{  f(M^*, S^*,\alpha)}}{\partial{M^*}} & \frac{\partial{  f(M^*, S^*,\alpha)}}{\partial{S^*}} \\
\frac{\partial{  g(M^*, S^*,\alpha)}}{\partial{M^*}} & \frac{\partial{  g(M^*, S^*,\alpha))}}{\partial{S^*}}
\end{vmatrix}}
\end{equation*}
\begin{align*}
\frac{dM^*}{d\alpha} &= \frac{K_1 M^* (\alpha_1 - M \sigma)}
{K_1 q_1 (\alpha - \mu_{2000}^1 - \mu_{2000}^2) + \psi \tau (\alpha_1 - M^* \sigma) + 2 M^{*2} \phi \sigma \tau + K_1 S^* \alpha_1 \beta} \\
&\quad + K_1 M^* \alpha \sigma - K_1 M^* \sigma (\mu_{2000}^1+ \mu_{2000}^2) + 2 M^* \alpha_1 \phi \tau
\end{align*}
Using the previous condition ($\alpha>\mu_{2000}^1+\mu_{2000}^2$) and $\alpha_1>M^*\sigma$, it is followed that both the numerator and the denominator are positive. Consequently, it has been observed that $\frac{dM^*}{d\alpha}>0$ which is graphically visualized in Figure \ref{fig:alpha_variation} (a).\\
Again,
\begin{equation*}
	\therefore \frac{dS^*}{d\alpha}=\frac{\begin{vmatrix}
			\frac{\partial{f(M^*, S^*,\alpha)}}{\partial{\alpha}} & \frac{\partial{  f(M^*, S^*,\alpha)}}{\partial{M^*}} \\
			\frac{\partial{g(M^*, S^*,\alpha)}}{\partial{\alpha}} & \frac{\partial{ g(M^*, S^*,\alpha)}}{\partial{M^*}}
	\end{vmatrix}}
	{{\begin{vmatrix}
\frac{\partial{f(M^*, S^*,\alpha)}}{\partial{M^*}} & \frac{\partial{f(M^*, S^*,\alpha)}}{\partial{S^*}} \\
\frac{\partial{ g(M^*, S^*,\alpha)}}{\partial{M^*}} & \frac{\partial{ g(M^*, S^*,\alpha)}}{\partial{S^*}}
\end{vmatrix}}}
\end{equation*}
\begin{align*}
\frac{dS^*}{d\alpha} &= \frac{-K_1 M^* S^* \sigma}
{K_1 q_1 (\alpha - u_1 - u_2) + \psi \tau (\alpha_1 - M^* \sigma) + 2 M^{*2} \phi \sigma \tau + K_1 S^* \alpha_1 \beta} \\
&\quad + K_1 M^* \alpha \sigma - K_1 M^* \sigma (u_1 + u_2) + 2 M^* \alpha_1 \phi \tau
\end{align*}
From the above discussion, it is clear that the numerator is negative, while the denominator, as per the previous condition, remains positive. This helps us to conclude that, $\frac{dS^*}{d\alpha}<0$ as in Figure \ref{fig:alpha_variation}(b).\\
Similarly, it has been observed that $\frac{dR^*}{d\alpha}<0$ and $\frac{dN_R^*}{d\alpha}<0$ under the previous conditions. Also, Figures \ref{fig:alpha_variation} (c) and (d) have justified this scenario.

\begin{comment}
and the result has been summarized in Table \ref{tab:summary of Characteristics}.
\begin{table}[hbt]
\centering
\caption{Summary of the result for characteristics of the equilibrium points regarding the parameter}\label{tab:summary_of_characteristics} 
\begin{tblr}{
  row{1} = {c,fg=blue},
  cell{2}{1} = {r=4}{c,fg=blue},
  cell{2}{2} = {c},
  cell{2}{3} = {r=4}{c},
  cell{3}{2} = {c},
  cell{4}{2} = {c},
  cell{5}{2} = {c},
  vlines = {violet},
  hline{1-2,6} = {-}{violet},
  hline{3-5} = {2,4}{violet},
}
\textbf{Parameter} & \textbf{Results}            & \textbf{Conditions}                                              & \textbf{Interpretations}                                                    \\
$\alpha$           & $\frac{dM^*}{d\alpha} >0$   & {$\alpha>\mu_{2000}^1+\mu_{2000}^2$\\and \\$\alpha_1>M^*\sigma$} & {Increasing value of $ \alpha$ rises mongooses\\population in the island
  } \\
                   & $\frac{dS^*}{d\alpha} <0$   &                                                                  & {Higher value of $\alpha$ reduces habu snake\\population in the island 
  } \\
                   & $\frac{dR^*}{d\alpha} <0$   &                                                                  & {Rising value of $\alpha$ lessens~rat population\\in the island
  }         \\
                   & $\frac{dN_R^*}{d\alpha} <0$ &                                                                  & {Expanding value of $\alpha$ reduces~natural\\resources in the island
  }    
\end{tblr}
\end{table}
    
\end{comment}

\subsection{Nature of rats’ population when snakes’ population as well as mongooses’ population are monotonically decreasing}
This part analyzes rat population dynamics as habu snake and mongoose populations continue to decline. That means snake numbers must steadily decrease and the mongoose population must remain lower than before. Thus, the third equation of the model (\ref{eq:model}) suggests that 
\begin{align*}
   \frac{dR}{dt}= \omega R-\gamma_1 RS -\gamma_2 RM
\end{align*}
Now, differentiating both sides with respect to time $(t)$, we get
\begin{align*}
  \frac{d^2R}{dt^2} =& \omega \frac{dR}{dt}-\gamma_1 R \frac{dS}{dt}-\gamma_1 S\frac{dR}{dt}-\gamma_2 R \frac{dM}{dt}-\gamma_1 M\frac{dR}{dt}
  \\ 
  >& \omega \frac{dR}{dt}-\gamma_1 \frac{dR}{dt}-\gamma_2 \frac{dR}{dt}
\end{align*}
Because of the fact that both the population of snakes and mongooses are ontinously delining, so $\frac{dS}{dt}$, $\frac{dM}{dt}$ must be negative and consequently $-\gamma_1 R \frac{dS}{dt}-\gamma_2 R \frac{dM}{dt}$ must be positive term.\\ 
Therefore, at the equilibrium point, it has been observed that
\begin{equation*}
\frac{d^2R}{dt^2}>0
\end{equation*}
Hence, the rat population is minimal at the critical points.

\subsection{Sensitivity Analysis}
We have conducted a sensitivity analysis to evaluate the model's resilience to parameter alterations. This assists in identifying the factors that substantially affect the natural resources delineated in the model's equations. The sensitivity analysis depends on the methodology shown by \cite{omoloye2021mathematical}, using the normalized forward sensitivity index of a variable in relation to a parameter. Furthermore, when the variable is a differentiable function of the parameter, the sensitivity index may be determined by partial derivatives.

\subsubsection{Local Sensitivity Indices for \texorpdfstring{$\mathcal{R}_0$}{R0}}
Definition: The normalized forward sensitivity index of a variable $S$ that depends deferentially on a parameter $\nu$, is defined as
\begin{equation*}
    \mathcal{L}_{\nu}^S=\frac{\partial S}{\partial \nu}\times \frac{\nu}{S}
\end{equation*}
It is expected that the sensitivity indices should vary from -1 to 1. In particular, sensitivity indices of the future status of natural resources $\mathcal{R}_0$, with respect to the model parameter $\nu$ have been calculated using the formula $\frac{\partial \mathcal{R}_0}{\partial \nu} \times \frac{\nu}{\mathcal{R}_0}$.

\begin{table}[hbt]
 \caption{Sensitivity Analysis for the model \eqref{eq:model} }
    \centering
    \begin{tabular}{c c}
     \hline
\textbf{\textcolor{blue}{Parameter}} & \textbf{\textcolor{blue}{Sensitivity index}} \\
\hline 
$\psi$~            &  0.5037               \\
$\omega$          & -0.6775                 \\
$\gamma_2$             & 0.6775                 \\
$\phi$         & -0.6775                  \\
\hline 
    \end{tabular}
    \label{tab:my_label}
\end{table}
\begin{figure}[hbt]
	\centering
\includegraphics[height=6cm,width=9cm]{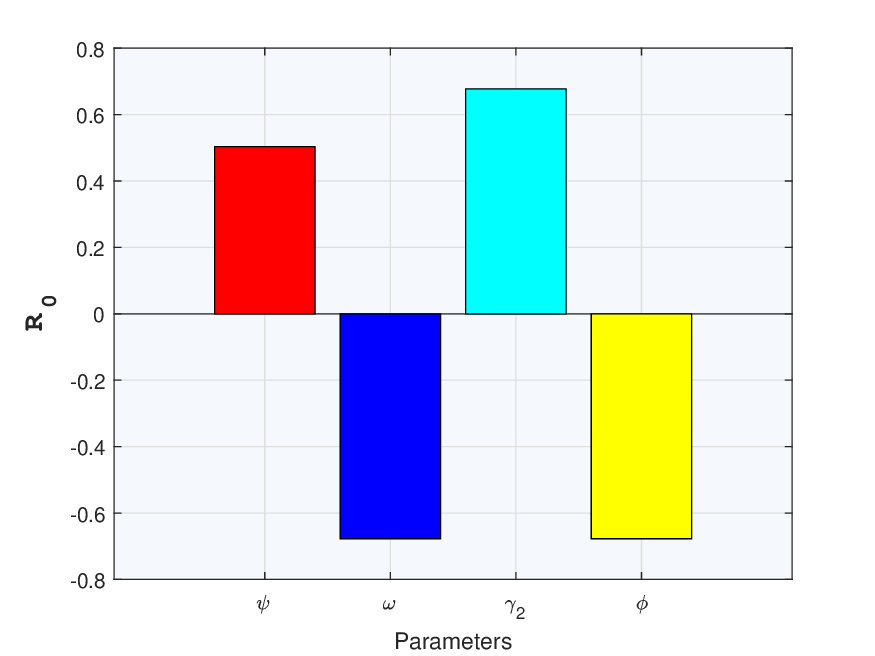}
	\caption{\centering Effects of Parameter Variationson $\mathcal{R}_0$}
	\label{fig:sensitivity analysis}
\end{figure}
After doing all the necessary calculations using the values of Table \ref{tab:parameter_introdution}, we obtain the
sensitivity indices of the parameters whose values have been depicted in Table \ref{tab:my_label} and its graphical view has been illustrated in Figure \ref{fig:sensitivity analysis}, from where we observe that the parameters $\psi$  and $\gamma_2$  are with positive sensitivity
values (need to rise) and the parameters $\omega$  and $\phi$  are with negative sensitivity values (need to decrease). That meanis, to enhance natural resources on the island, it is essential to regulate species interactions that impact resource availability. Increasing the logistic growth rate of natural resources ($\psi$) promotes regeneration, ensuring a sustainable supply. Additionally, boosting mongoose predation on rats ($\gamma_2$) helps reduce the rat population, which alleviates competition for resources and supports natural replenishment. Conversely, lowering the natural growth rate of mice ($\omega$) is beneficial, as a smaller mouse population reduces resource consumption and prevents depletion. Similarly, decreasing the rate at which mongooses consume natural resources ($\phi$) helps maintain their availability. Based on this result, it has been shown that Amami Ōshima Island's natural resource balance is greatly dependent on the existence of mongooses. In their position as rodent controllers, mongooses prevent the overconsumption of natural resources by rats, which may cause ecological imbalance and depletion. Thus, controlled mongoose populations help keep the island's natural resources viable, but only under strict management to avoid over-predation or other undesirable ecological effects.

\section{Numerical Simulations}\label{sec:Model Simulations}
In this section, we dive into a captivating numerical simulation of the proposed model \eqref{eq:model}, executed using MATLAB (R2018b) software. Our goal is to explore the  dynamics of the prey-predator system, intertwined with natural resources, on the picturesque Amami Oshima Island, Japan. To achieve this, we have applied well known fourth-order Runge-Kutta method with using the paramters value mentioned in Table \ref{tab:parameter_introdution}. The simulation spans from 1973 to 2025, capturing the island’s ecological shifts following the introduction of mongooses. The initial values for the variables are set to $M_0 = 30$, $S_0 = 50,000$, $R_0 = 24,240$, and $N_R=80,000$, providing a foundation for this ecological investigation. Since the literature indicates that mongooses are less active at night when habu snakes have come out in the island, the killing rate of habu snakes by mongooses is considered to be relatively low in this simulation prcoess. For finding the numerial simulations, the terms \( u_{2000}^{1} \) and \( \mu_{2000}^{2} \) have been used as active parameters, with their values set from the year 2000.\\
\begin{table}[hbt]
    \centering
    \caption{Model Parameters and Their Values}    
     \begin{tabular}{>{\centering\arraybackslash}p{0.8cm} >{\centering\arraybackslash}p{7.7cm} >{\centering\arraybackslash}p{1.7cm}} 
        \hline
        \multicolumn{1}{c}{\textbf{\textcolor{blue}{Symbol}}} & \multicolumn{1}{c}{\textbf{\textcolor{blue}{Description of Parameter}}} & \multicolumn{1}{c}{\textbf{\textcolor{blue}{Value}}} \\
        \hline
        $\alpha$ & Growth rate of mongooses & 0.294 \\
        $\beta$ & Increasing rate of mongooses due to predation on snake & $8.89 \times 10^{-4}$ \\
        $\delta$ & Rising rate of mongooses due to predation on rat & 0.333 \\
        $\tau$ & Increasing rate of mongooses due to consumption of natural resources & $8.89 \times 10^{-4}$ \\
        $\mu_{2000}^{1}$ & Trapping rate of mongooses using baited cages or snares & 0.5 \\
        $\mu_{2000}^{2}$ & Trapping rate of mongooses due to poisoning method & 0.2 \\
        $\alpha_1$ & Growth rate of snake & 0.063 \\
        $\xi$ & Increasing rate of snake due to predation on rats & $6.24 \times 10^{-8}$ \\
        $\sigma$ & Decreasing rate of snakes due to predation by mongooses & $4.07 \times 10^{-5}$ \\
        $\omega$ & Natural growth rate of the mouse population & 0.0387 \\
        $\gamma_1$ & Decreasing rate of rats due to predation by snakes & $3.82 \times 10^{-9}$ \\
        $\gamma_2$ & Reduction rate of rats due to predation by mongooses & $4.56 \times 10^{-5}$ \\
        $\psi$ & Logistic growth rate of natural resources & 0.09 \\
        $K$ & Carrying capacity of natural resources & 800,000 \\
        $\phi$ & Decreasing rate of resources due to consumption by mongooses & $4.89 \times 10^{-5}$\\
        \hline
    \end{tabular}
     \label{tab:parameter_introdution}
\end{table}
\begin{figure}[hbt]
    \centering
 \begin{subfigure}{0.32\textwidth}
        \centering
        \includegraphics[width=\textwidth]{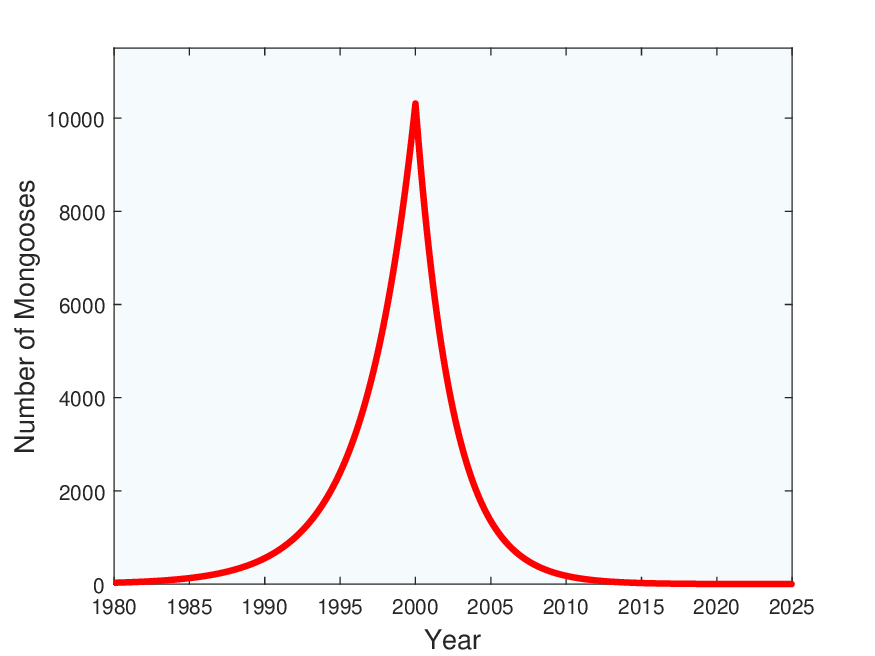}
        \caption{}
        %\label{fig:fig3}
    \end{subfigure}
    \hfill
    \begin{subfigure}{0.32\textwidth}
        \centering
        \includegraphics[width=\textwidth]{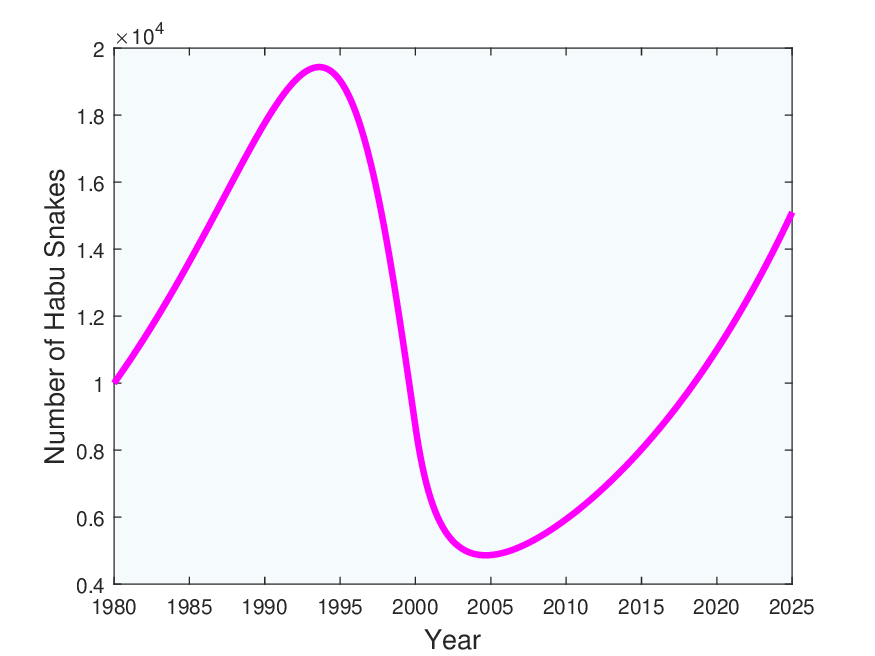}
        \caption{}
       % % \label{fig:fig4}
    \end{subfigure}
    \hfill
    \begin{subfigure}{0.32\textwidth}
        \centering
        \includegraphics[width=\textwidth]{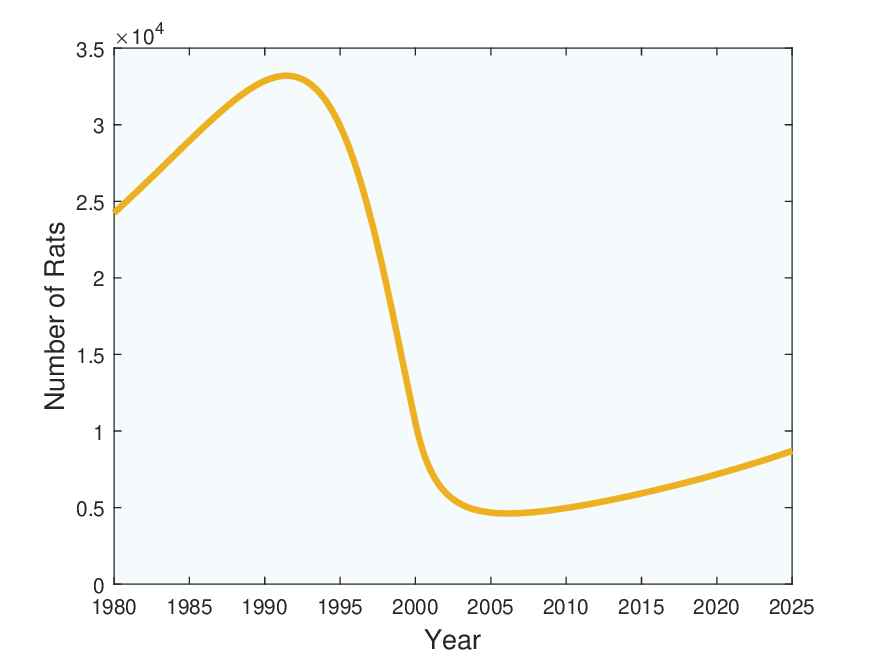}
        \caption{}
        % \label{fig:fig5}
    \end{subfigure}
    \hfill
    \begin{subfigure}{0.33\textwidth}
        \centering
        \includegraphics[width=\textwidth]{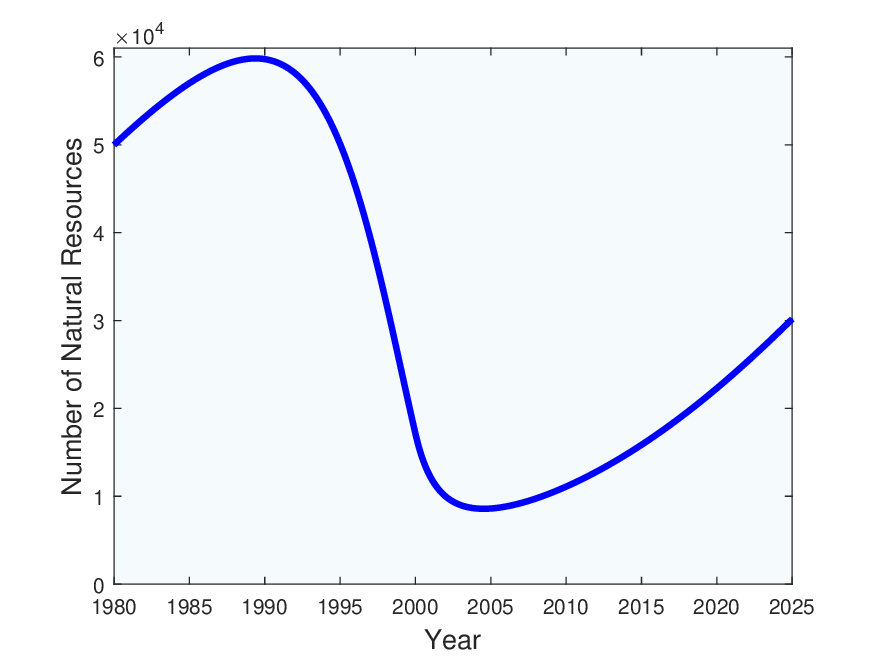}
        \caption{}
        %\label{fig:fig33}
    \end{subfigure}
    \hfill
    \begin{subfigure}{0.32\textwidth}
        \centering
        \includegraphics[width=\textwidth]{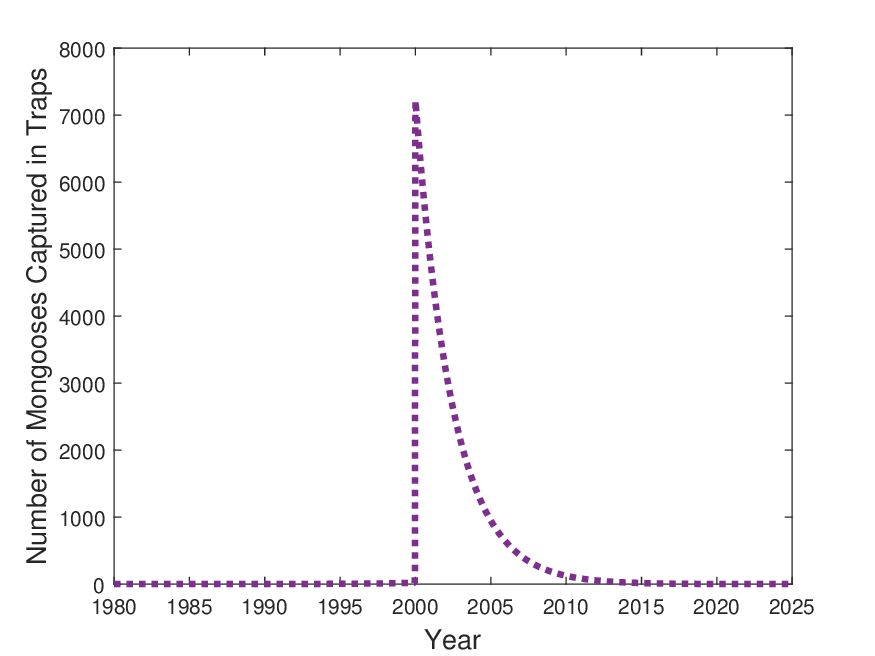}
        \caption{}
        \label{fig:fig44}
    \end{subfigure}
    \hfill
    \begin{subfigure}{0.32\textwidth}
        \centering
        \includegraphics[width=\textwidth]{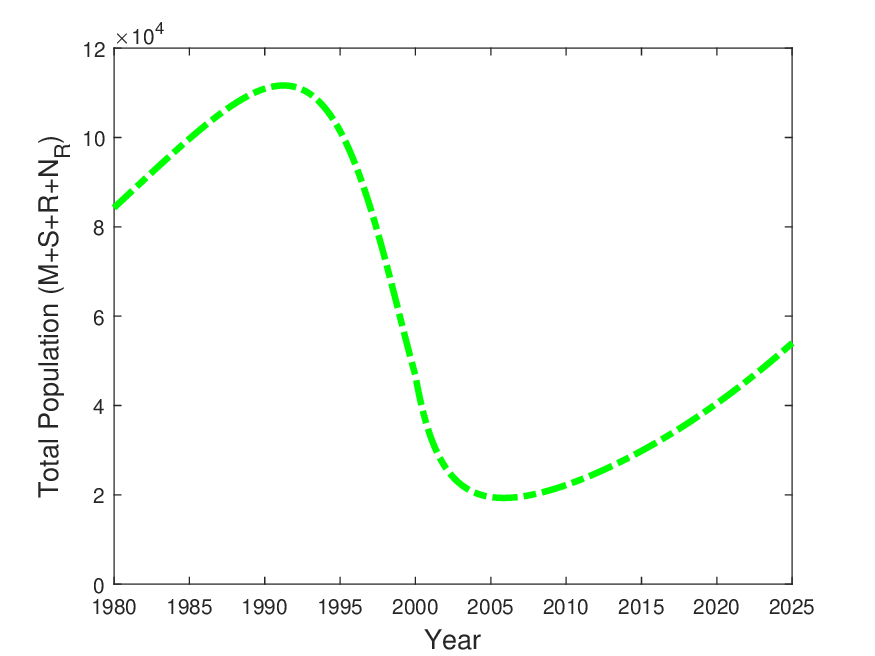}
        \caption{}
    \end{subfigure}
    \caption{Solution trajectories of state  variables: (a) Mongooses (b) Habu Snakes (c) Rats and (d) Natural resources (e) trapping, and (f) total population in Amami Osami Island over the time period from 1980 to 2025 (real-world phenomenon)}
         \label{fig:solution}
\end{figure}
\\
Figure \ref{fig:solution} presents the solution trajectories (time-series) for the proposed model \eqref{eq:model} which describe the real world phenomena observed in Amami Ōshima Island, Japan. The dynamics of mongooses, habu snakes, rats, natural resources, the number of mongooses captured in traps (which includes removal via baited cages or snares and poisoning), and the total population (the combined count of mongooses, habu snakes, rats, and natural resources on the island) are depicted in Figures \ref{fig:solution} (a), (b), (c), (d), (e), and (f), respectively. The first figure reveal that, after the introduction of 30 mongooses in 1979, their population gradually increased. This rise in numbers accelerated significantly starting from 1990, and by 2000, the population quickly reached approximately 10,000 within just 20 years. This trend aligns with the actual mongoose population dynamics on the island, as outlined in \citep{YamadaSugimura2004}. As like as the decision made by the authorities in Japan for capturing mongooses began in 2000, leading to a gradual decline in their population from 2000, which reached zero by 2018. As a result, there was no mongoose population from 2018 to 2025 (see Figure \ref{fig:solution}(a)). During this period, the annual number of mongooses captured is depicted in Figure \ref{fig:solution}(e), starting from the initiation of the capture process in 2000. Besides, Figure \ref{fig:solution}(b) shows that around 1995, the habu snake population increased due to the initially low presence of habu snakes. However, over time, their numbers declined until 2000 due to interactions with the rapidly growing mongoose population. On the other hand, as the mongoose population began to decline due to trapping from 2000 onwards, the habu snake population started to rise again. A similar pattern is observed in Figure \ref{fig:solution}(c) for the rat population on the island. Additionally, Figure \ref{fig:solution} illustrates that natural resources started to decline around 1990, a trend that continued until approximately 2005, reaching a reduced level of about $1 \times 10^4$. After this period, due to the decreased presence of mongooses, natural resources began to recover, though the increase was not substantial. The impact of mongooses led to an almost 50\% reduction in natural resources on Amami Ōshima Island. Furthermore, Figure \ref{fig:solution}(f) indicates that the total population significantly declined after 1990, although a slight increase was observed in the first ten years. By the end, the overall population was lower than the initial population of different species in that island. 
\begin{figure}[hbt]
    \centering
    % First row of figures
    \begin{subfigure}[b]{0.49\textwidth}
        \centering
        \includegraphics[width=0.8\textwidth]{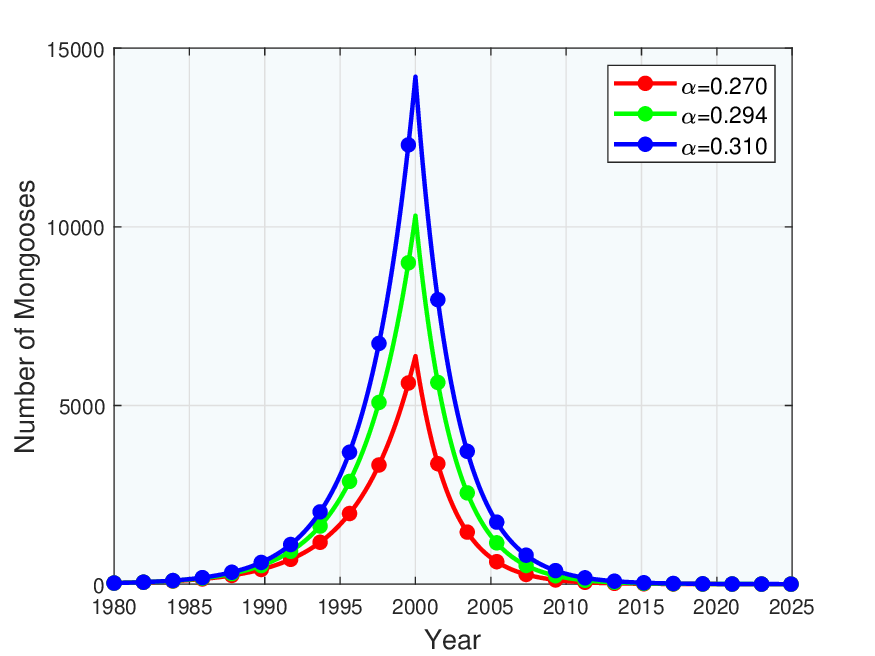}
        \caption{}
    \end{subfigure}
    \hspace{0.001\textwidth} % Adjust space between figures
    \begin{subfigure}[b]{0.49\textwidth}
        \centering
        \includegraphics[width=0.8\textwidth]{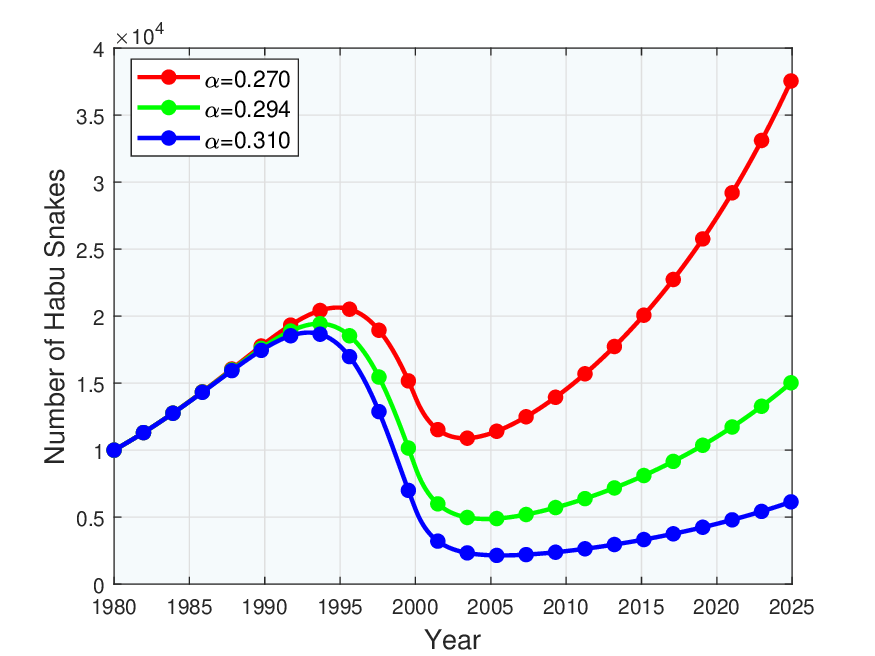}
        \caption{}
    \end{subfigure}
    \hspace{0.0001\textwidth} % Adjust space between figures
    \begin{subfigure}[b]{0.49\textwidth}
        \centering
        \includegraphics[width=0.8\textwidth]{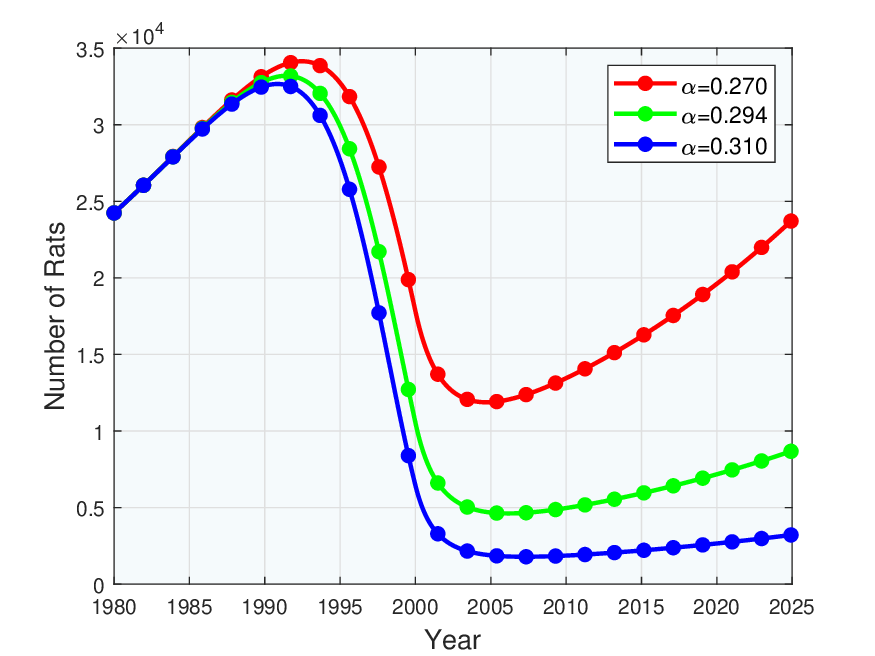}
        \caption{}
    \end{subfigure}
     \hspace{0.001\textwidth} % Adjust space between figures
    \begin{subfigure}[b]{0.49\textwidth}
        \centering
        \includegraphics[width=0.8\textwidth]{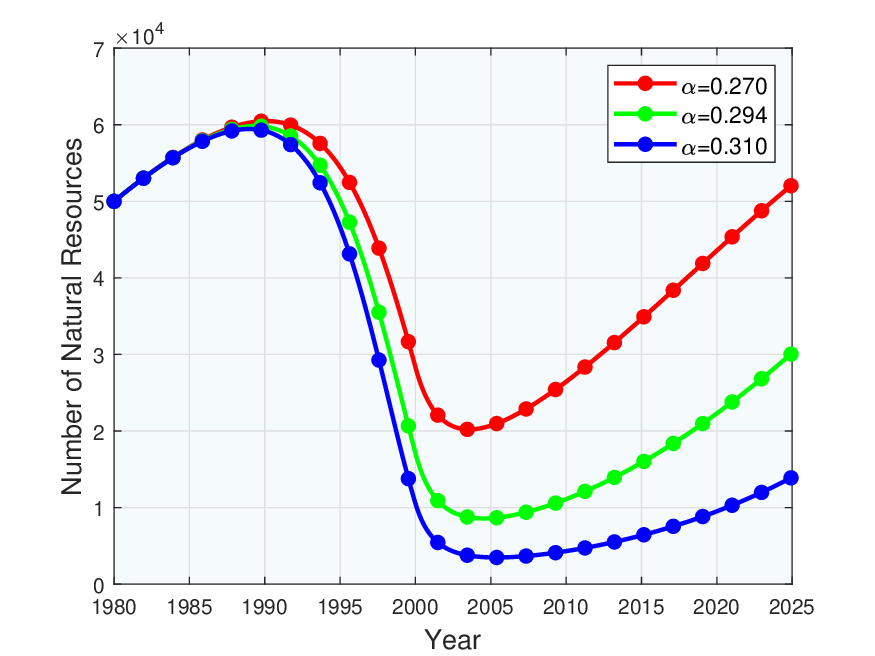}
        \caption{}
    \end{subfigure}
    \caption{(a) As the growth rate of mongooses increases, their population expands proportionally in the ecosystem. (b) As the mongoose growth rate increases, the habu snake population initially grows, but eventually declines due to larger interactions before rising as the mongoose population decreases. (c) The rat population grows faster with increasing mongoose numbers, then declines due to intensified interactions, but later recovers as the mongoose population decreases.(d) A higher mongoose growth rate creates a greater burden on the natural resources of Amami Oshima Island, Japan.}
    \label{fig:alpha_variation}
\end{figure}
\\
In addition, variations in the state variables about $\alpha$ are shown in Figure \ref{fig:alpha_variation}, where $\alpha$ denotes the growth rate of mongooses. Figure \ref{fig:alpha_variation} (a) demonstrates that the mongoose population initially increases for all values of $\alpha$, reaching its peak, with the highest peak observed at $\alpha = 0.310$. Following this peak, the population gradually declines due to trapping. Interestingly, the rate of decline is less pronounced for higher values of $\alpha$, suggesting a potential buffering effect. On the other side, Figure \ref{fig:alpha_variation} (b) reveals that, up until around 1990, there was no significant difference in the habu snake population across different values of $\alpha$. However, beyond this point, variations became more evident. The habu snake population exhibited a gradual decline, reaching its lowest levels around 2003, 2005, and 2006 for $\alpha = 0.270$, $\alpha = 0.294$, and $\alpha = 0.310$, respectively. Notably, after this period, the population rebounded, with a dramatic increase observed by 2025, particularly for lower values of $\alpha$. Besides, a similar dynamic trend is evident in the rat population and natural resource availability, as depicted in Figure \ref{fig:alpha_variation}, further reinforcing the intricate interplay between species and ecological balance on Amami Ōshima Island.\\
Again,  Figure \ref{fig:phase diagram} presents phase portraits illustrating the relationship between state variables. As shown in Figure \ref{fig:phase diagram} (a), the habu snake population initially declines with the rise of the mongoose population.  However, as the mongoose population begins to decrease, the habu snake population starts to recover. Notably, as the mongoose population approaches extinction, the habu snake population exhibits a steady and continuous increase. 
\begin{figure}[hbt]
    \centering
    % First row of figures
    \begin{subfigure}[b]{0.49\textwidth}
        \centering
        \includegraphics[width=0.8\textwidth]{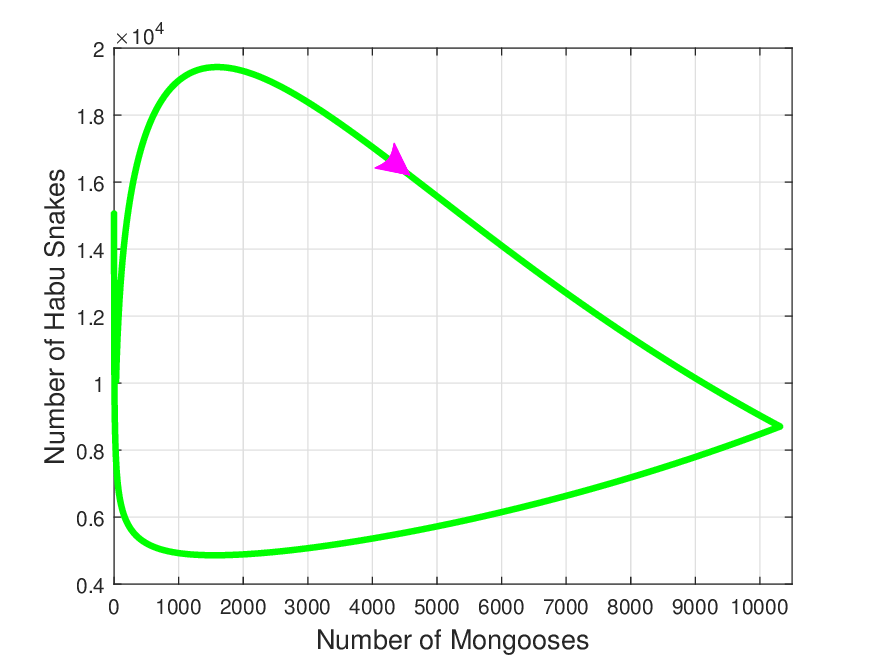}
        \caption{}
    \end{subfigure}
    \hspace{0.001\textwidth} % Adjust space between figures
    \begin{subfigure}[b]{0.49\textwidth}
        \centering
        \includegraphics[width=0.8\textwidth]{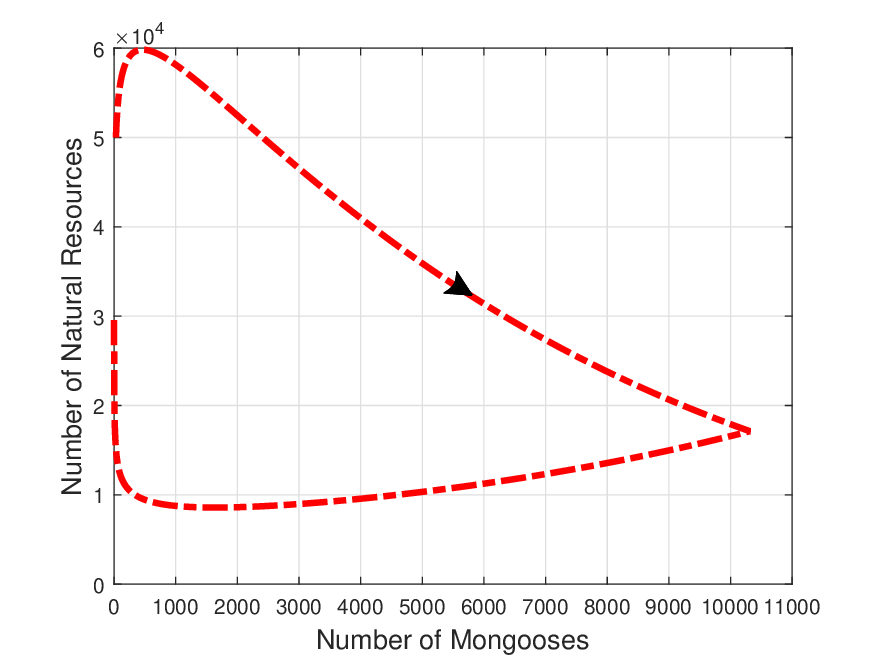}
        \caption{}
    \end{subfigure}
    \hspace{0.0001\textwidth} % Adjust space between figures
    \begin{subfigure}[b]{0.49\textwidth}
        \centering
        \includegraphics[width=0.8\textwidth]{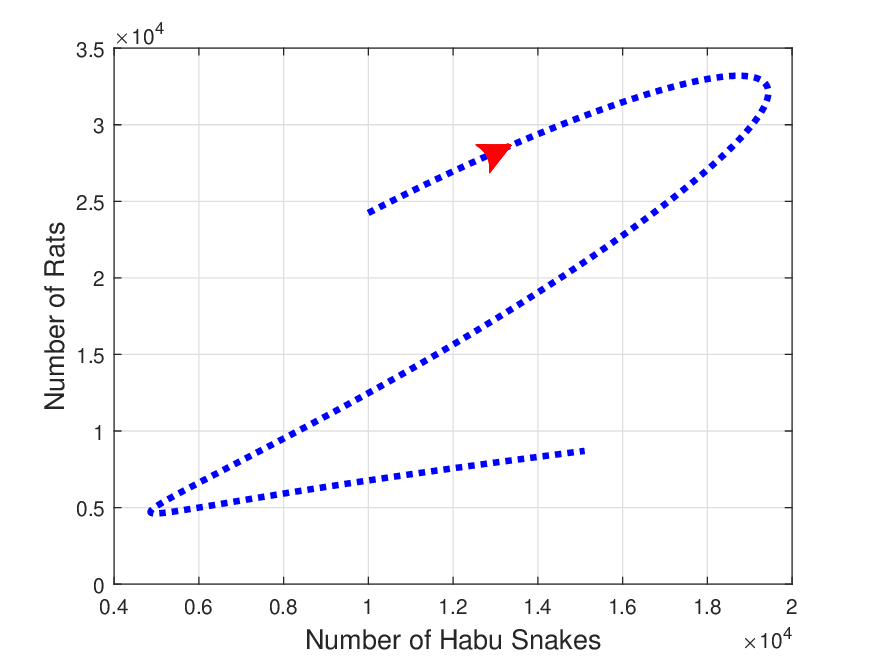}
        \caption{}
    \end{subfigure}
    \caption{(a) Impact of fluctuations in mongoose population on the habu snake population on Amami Oshima Island, Japan (b) Variation in the number of natural resources on Amami Oshima Island with changes in the mongoose population (c) Relationship of the number of rats and number of habu snakes in the island.}
    \label{fig:phase diagram}
\end{figure}

\subsection{Comparison of Various Management Scenarios in Amami Ōshima Island, Japan}
On Amami Ōshima Island, authorities initiated a mongoose eradication program in 2000 to mitigate ecological damage. This effort involved capturing approximately nine mongooses per day using nets and eliminating six through poisoning, continuing until 2018. However, despite these measures, the ecological disruption was already severe and continues to affect the island's biodiversity today. Had the trapping program been implemented differently, with adjusted removal rates, the outcome might have been significantly different. This subsection explores various strategic approaches that could be adopted to prevent ecological degradation, referred to as ``Scenarios" for conservation and ecosystem restoration.
\begin{figure}[hbt]
    \centering
 \begin{subfigure}{0.32\textwidth}
        \centering
        \includegraphics[width=\textwidth]{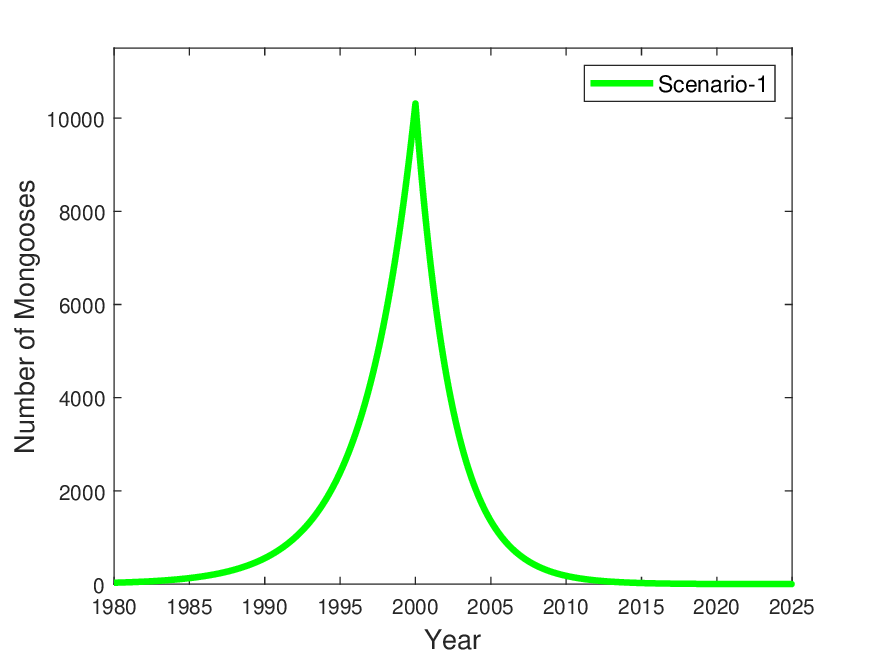}
        \caption{}
       % \label{fig:fig3}
    \end{subfigure}
    \hfill
    \begin{subfigure}{0.32\textwidth}
        \centering
        \includegraphics[width=\textwidth]{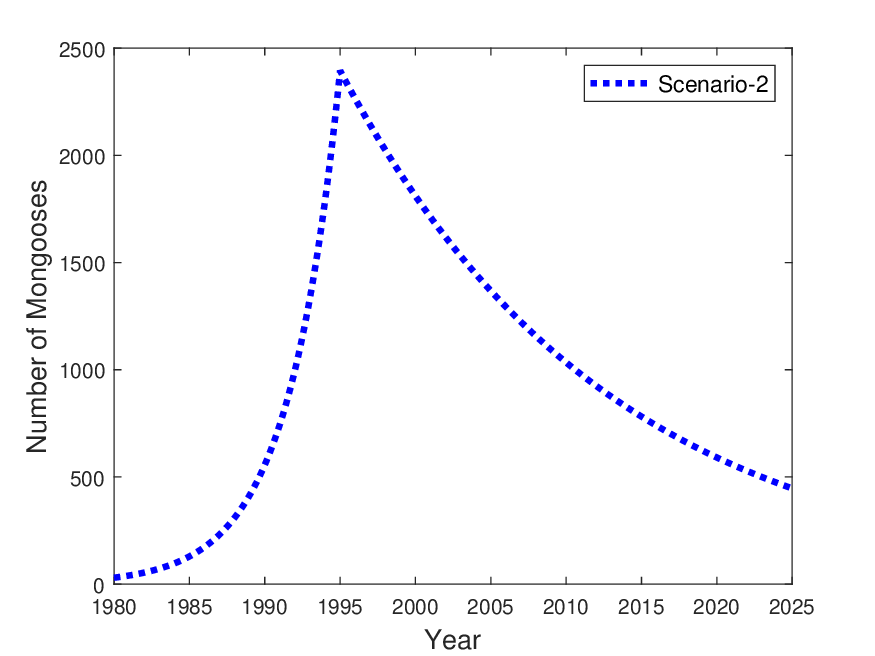}
        \caption{}
        % \label{fig:fig4}
    \end{subfigure}
    \hfill
    \begin{subfigure}{0.32\textwidth}
        \centering
        \includegraphics[width=\textwidth]{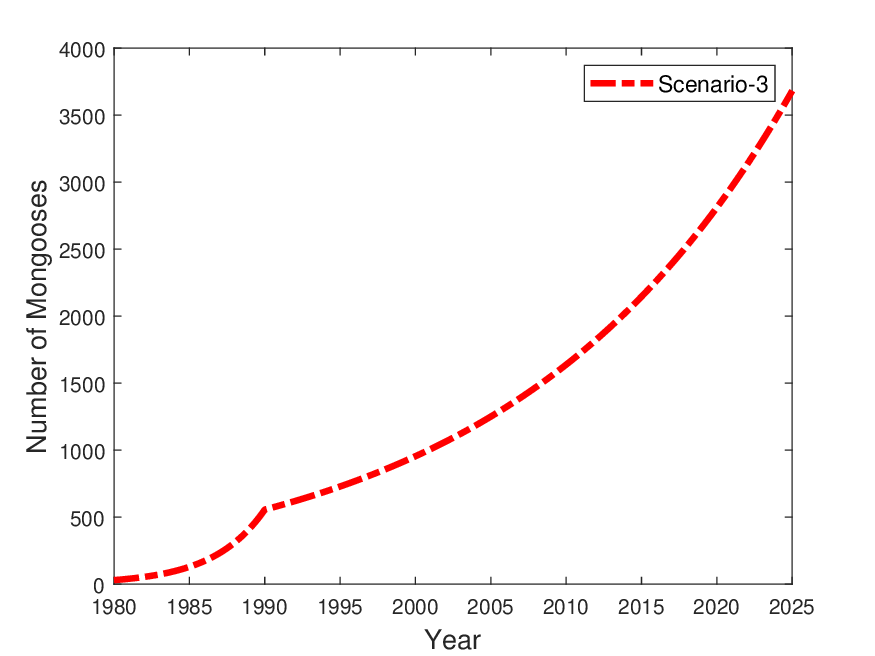}
        \caption{}
        % \label{fig:fig5}
    \end{subfigure}
    \hfill
    \begin{subfigure}{0.33\textwidth}
        \centering
        \includegraphics[width=\textwidth]{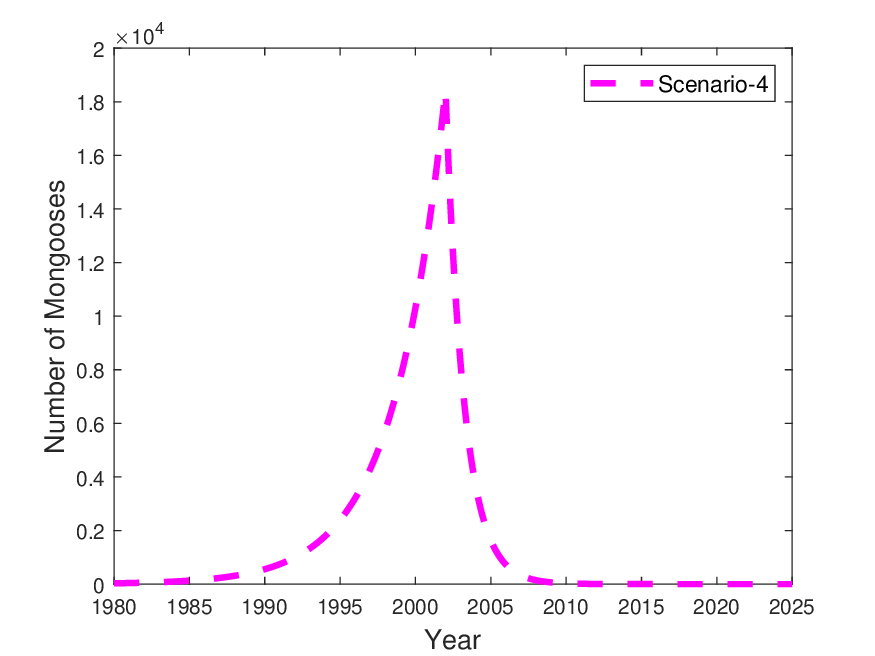}
        \caption{}
        %\label{fig:fig3}
    \end{subfigure}
    \hfill
    \begin{subfigure}{0.32\textwidth}
        \centering
        \includegraphics[width=\textwidth]{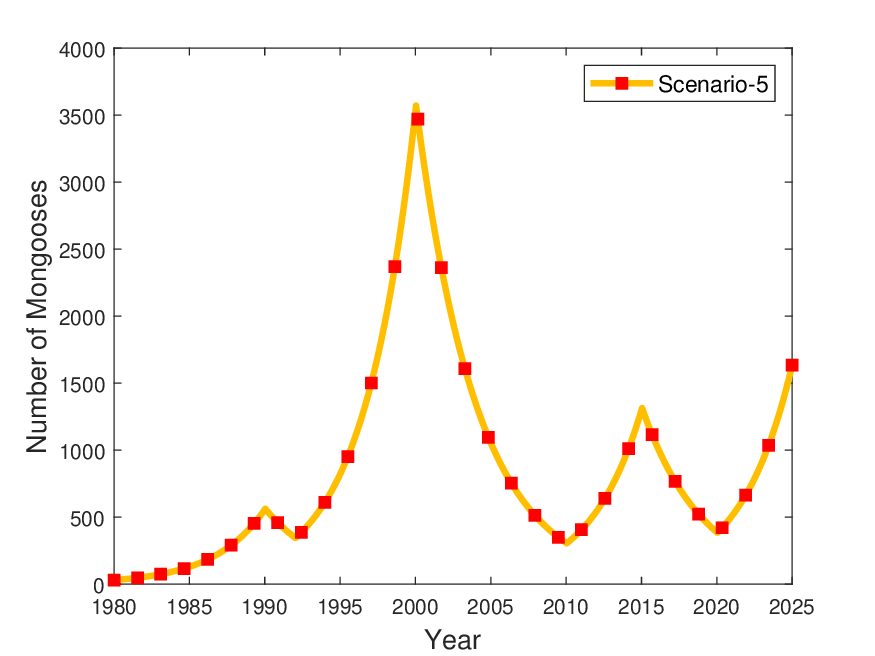}
        \caption{}
        % \label{fig:fig4}
    \end{subfigure}
    \hfill
    \begin{subfigure}{0.32\textwidth}
        \centering
        \includegraphics[width=\textwidth]{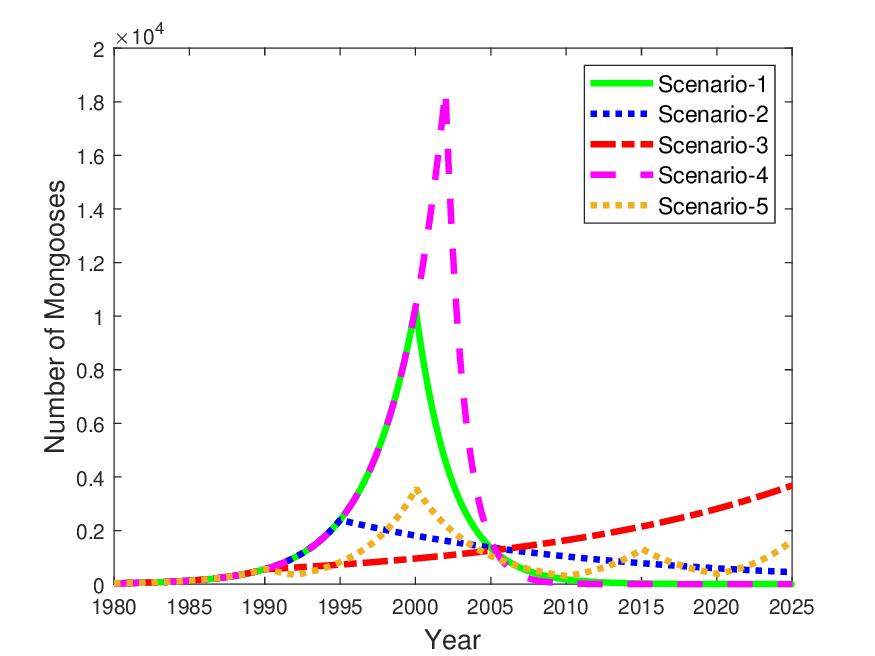}
        \caption{}
        % \label{fig:fig5}
    \end{subfigure}
    \caption{The dynamics of the mongoose population over time under different trapping scenarios of mongoose: (a) Scenario-1: Mongoose trapping began in 2000, as implemented on Amami Ōshima Island, reflecting the current trend; (b) Scenario-2: Trapping started in 1995 at a lower rate than in Scenario-1; (c) Scenario-3: Trapping was initiated in 1990 at a lower rate than in Scenario- 1 and 2; (d) Scenario-4: Trapping commenced in 2002 at a higher rate than in Scenario-1; (e) Scenario-5: Trapping occurred in three phases: 1990–1992, 2000–2010, and 2015–2020; (f) A comparison of all scenarios in a single frame }
    \label{fig:Mongooses}
\end{figure}
\begin{figure}[hbt]
    \centering
 \begin{subfigure}{0.32\textwidth}
        \centering
        \includegraphics[width=\textwidth]{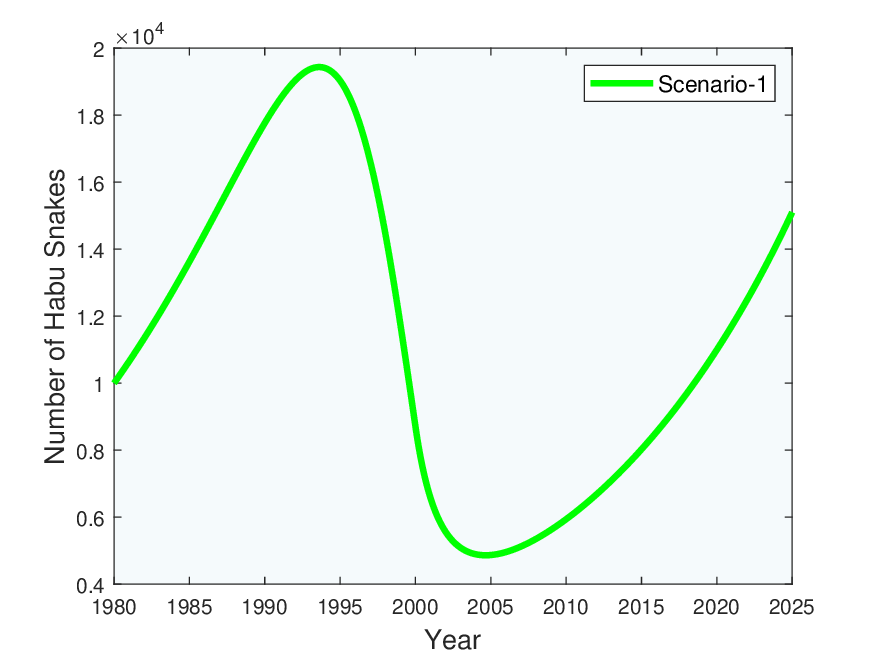}
        \caption{}
        \label{fig:fig333}
    \end{subfigure}
    \hfill
    \begin{subfigure}{0.32\textwidth}
        \centering
        \includegraphics[width=\textwidth]{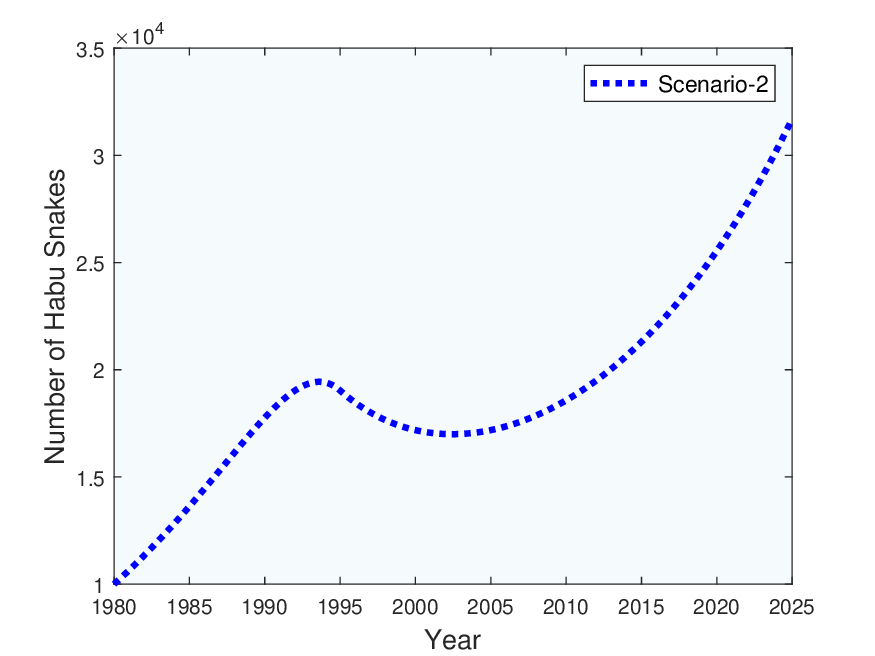}
        \caption{}
        \label{fig:fig444}
    \end{subfigure}
    \hfill
    \begin{subfigure}{0.32\textwidth}
        \centering
        \includegraphics[width=\textwidth]{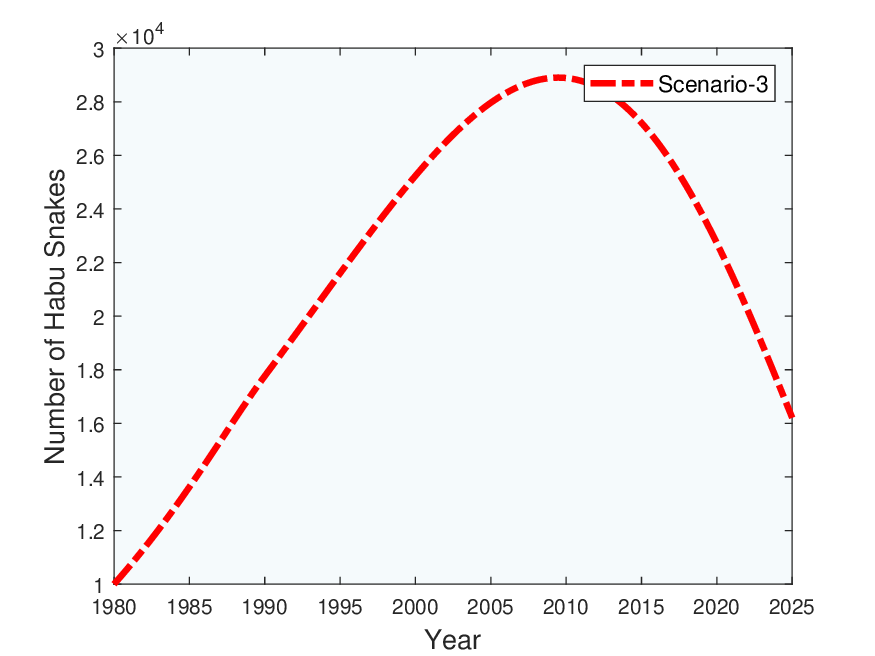}
        \caption{}
        \label{fig:fig5555}
    \end{subfigure}
    \hfill
    \begin{subfigure}{0.33\textwidth}
        \centering
        \includegraphics[width=\textwidth]{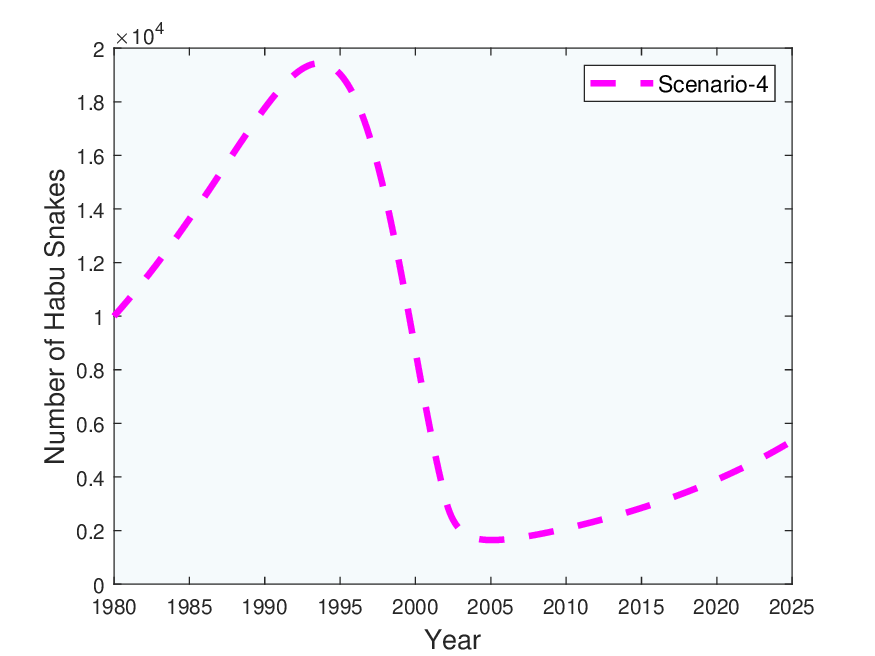}
        \caption{}
        \label{fig:fig33333}
    \end{subfigure}
    \hfill
    \begin{subfigure}{0.32\textwidth}
        \centering
        \includegraphics[width=\textwidth]{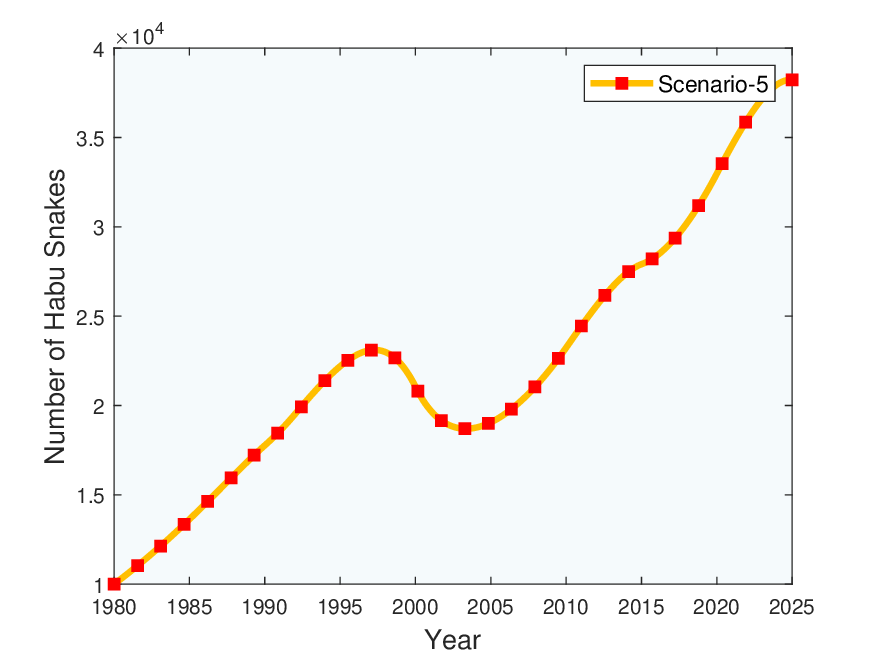}
        \caption{}
        \label{fig:fig444444}
    \end{subfigure}
    \hfill
    \begin{subfigure}{0.32\textwidth}
        \centering
        \includegraphics[width=\textwidth]{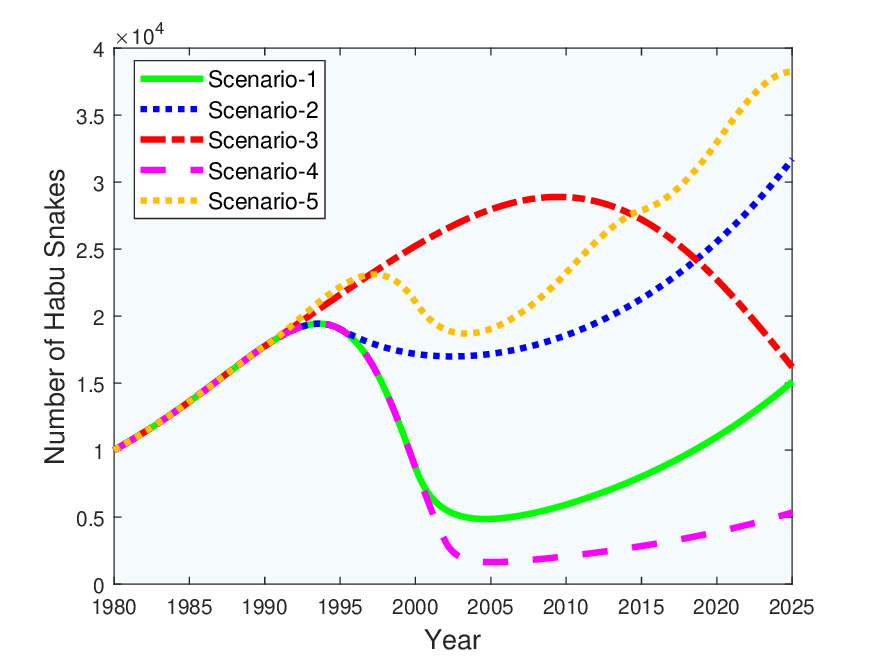}
        \caption{}
        \label{fig:fig5555555}
    \end{subfigure}
    \caption{Performed at Amami Oshima Island, the dynamics of the Habu snake population throughout time when (a) Mongoose trapping started in 2000 provide the current trend (Scenario-1), (b) When mongoose trapping began in 1995 at a slower pace than Scenario-1 (Scenario-2), (c) If mongoose trapping started in 1990 at a lesser pace than Scenario-1 \& 2 (Scenario-3), (d) Assuming mongoose trapping begins in 2002 at a rate faster than Scenario-1 (Scenario-4), (e) If mongoose trapping proceeded in three phases—1990–1992, 2000–2010, and 2015–2020 in Scenario-5 (f) merging all the scenarios in one frame.}
    \label{fig:Snake}
\end{figure}
\\
In this study, we evaluate five different mongoose control strategies, referred to as `Scenarios', each representing a distinct approach to trapping and removal. The trapping rate parameters, denoted as $\mu_1$ and $\mu_2$, vary across these scenarios to analyze their impact on ecosystem. The scenarios are designed as follows:  
\begin{itemize}
    \item \textbf{Scenario 1 ($\mu_1 = 0.5$, $\mu_2 = 0.2$):}  
  This scenario represents the actual mongoose eradication program implemented on Amami Ōshima Island. Trapping officially began in the year 2000, with an approximate daily removal of nine mongooses via netting and six through poisoning.

    \item \textbf{Scenario 2 ($\mu_1 = 0.25$, $\mu_2 = 0.1$):}  
    In this hypothetical scenario, mongoose trapping is assumed to have started earlier, in 1995, but at a lower removal rate compared to Scenario 1. The primary objective of this scenario is to investigate whether an earlier intervention, even with reduced trapping intensity, could have been more effective in preventing the severe ecological damage observed in later years.  

    \item \textbf{Scenario 3 ($\mu_1 = 0.15$, $\mu_2 = 0.09$):}  
    This scenario considers an even earlier intervention, beginning in 1990, with an even lower removal rate compared to Scenario 1 and Scenario 2. This scenario investigates whether an even earlier intervention, despite lower intensity, could have yielded better ecological outcomes.

    \item \textbf{Scenario 4 ($\mu_1 = 0.75$, $\mu_2 = 0.32$):}  
    Unlike the previous scenarios, this scenario assumes that mongoose trapping commenced in 2002, but at a significantly higher removal rate than in Scenario 1 and 2. This scenario examines the effect of a more aggressive eradication strategy.
    
    \item \textbf{Scenario 5 ($\mu_1 = 0.44$, $\mu_2 = 0.1$):}  
    This scenario introduces a phased trapping approach, where mongoose removal occurs in three distinct periods: 1990–1992, 2000–2010, and 2015–2020 with an approximate daily removal of five mongooses via netting and one through poisoning. Unlike a continuous eradication program, this approach assumes that removal efforts were strategically paused and resumed based on population trends.
\end{itemize}
The dynamics of mongooses, habu snakes, rats, natural resources, total population, and the number of captured mongooses over the time period for the considered Scenarios 1–5 are depicted in Figures \ref{fig:Mongooses}–\ref{fig:Total population} respectivly. Figure \ref{fig:Mongooses}(b) illustrates that under Scenario 3, the mongoose population began to decline gradually from 1995. However, by 2025, complete eradication was not seen, with approximately 500 mongooses still remaining. In contrast, an entirely opposite trend is observed in Scenario 3, as shown in Figure \ref{fig:Mongooses}(c). This suggests that if an even earlier intervention had been implemented in 1990 with a lower removal rate than in Scenarios 1 and 2, the mongoose population would have continued to rise rather than decline. Such an outcome would have posed a significant threat to the island’s ecosystem, exacerbating ecological imbalances. In Scenario 4, the mongoose population began to decline from 2002, as shown in Figure \ref{fig:Mongooses}(c), and was nearly eradicated by 2005. In contrast, Figure \ref{fig:Mongooses}(a) indicates that this process took longer in other scenarios.  Moreover, Figure \ref{fig:Mongooses}(e) reveals noticeable variations in the mongoose population at different time intervals, highlighting periods of growth and decline influenced by trapping efforts. From 1980, the mongoose population steadily increased until 1990. However, after two years, it declined due to trapping efforts. In the absence of sustained control measures, the population rebounded and continued to rise until 2000. A more intensive trapping program from 2000 to 2010 led to a significant decline, yet the population surged again once the trapping ceased. Another targeted trapping phase between 2015 and 2020 resulted in a further decline, but after this period, the mongoose population exhibited continuous growth. Overall, Figure \ref{fig:Mongooses}(f) presents a comparative view of mongoose population dynamics across all scenarios, showcasing how different intervention strategies impact their numbers over time. Furthermore, Figure \ref{fig:trapping} provides a detailed of the trapping efforts in each scenario in controlling the mongoose population on Amami Ōshima Island.\\
Figures \ref{fig:Snake} and \ref{fig:rat} illustrate the population dynamics of the habu snake and rats on Amami Ōshima Island across different scenarios, emphasizing their interactions with the mongoose population. Among them, in Figure \ref{fig:Snake}, the habu snake population declines as mongoose numbers rise due to predation but recovers when mongoose trapping reduces their numbers. The rate of recovery varies depending on the timing and intensity of intervention. Similarly, Figure \ref{fig:rat} shows fluctuations in the rat population, which increases when mongooses are removed but is later regulated by the recovering snake population.\\
%%%%%%%%%%%%%%%%%%%%%%
%%% Figure Placing
%%%%%%%%%%%%%%%%%%
Significant changes in natural resources presence over time are revealed by an in-depth analysis of the dynamics of natural resources under different scenarios, as shown in Figure \ref{fig:Natural Resources}. By the end of the period, Scenario 5 provides the maximum preservation of natural resources out of all the scenarios. This is consequently because it takes a more balanced approach to mongoose control, minimising undue ecological disturbance. The least amount of resources are available in Scenario 4, which may be the result of a more aggressive trapping technique that has unforeseen ecological repercussions. 
\\
As well, when trying to figure out the ecological balance of the island, it becomes important to understand the dynamics of the mongoose, snake, rat, and natural resource populations. This is because changes in one species, like over-trapping of mongooses, can cause a chain reaction effect on predator-prey dynamics and the availability of resources. These changes are shown in Figure \ref{fig:Total population}, which reflects the similar trends in Figure \ref{fig:Natural Resources} and underlines the effect of different strategies of intervention on the ecosystem.

Table \ref{tab:cumulative natural resources} now presents the cumulative natural resources obtained across five different mongoose trapping scenarios from 1980 to 2025, calculated using numerical integration with the Simpson $\frac{3}{8}$ rule. From this table, it is observed that Scenario-5, involving phased trapping, results in the highest resource retention (2,093,000), i.e almost 1.48\% more than the presence scenario (Scenario-1) in Amami Oshamia Island, whereas Scenario-4, with intensive trapping from 2002, leads to the lowest (1,167,800). These values are graphically represented in Figure \ref{fig:Impact of Mongoose Trapping on Natural Resources}, which clearly visualizes the occurance of different scenarios.

\begin{table}[hbt]
    \centering
     \caption{Mongoose trapping scenarios and the cumulative natural resources}
    \renewcommand{\arraystretch}{1.2}
    \begin{tabular}{p{1.5cm} p{6.3cm} p{2.8cm}}
        \hline
 \textcolor{blue}{ \textbf{Scenario No}} & \textcolor{blue}{\textbf{Description of Scenario}} & \textcolor{blue}{\textbf{Cumulative Natural Resources of}} \\
     & & \textcolor{blue}{\textbf{1980-2025}}\\
        \hline
        Scenario-1 & Mongoose trapping began in 2000, as performed in Amami Ōshima Island, which gives the present trend. & $1,414,700$ \\
        Scenario-2 & If mongoose trapping had started in 1995 at a lower rate than Scenario-1. & $1,920,400$ \\
        Scenario-3 & If mongoose trapping were initiated in 1990 at a lower rate than both Scenario-1 and Scenario-2. & $2,049,200$ \\
        Scenario-4 & If mongoose trapping could have been launched in 2002 at a higher rate than Scenario-1. & $1,167,800$ \\
        Scenario-5 & If mongoose trapping could have begun in three phases: 1990–1992, 2000–2010, and 2015–2020. & $2,093,000$ \\
        \hline
    \end{tabular}
    \label{tab:cumulative natural resources}
\end{table}

%%%%%%%%%
%%% Paste figures
%%%%%%%
\begin{figure}[!ht]
    \centering
 \begin{subfigure}{0.32\textwidth}
        \centering
        \includegraphics[width=\textwidth]{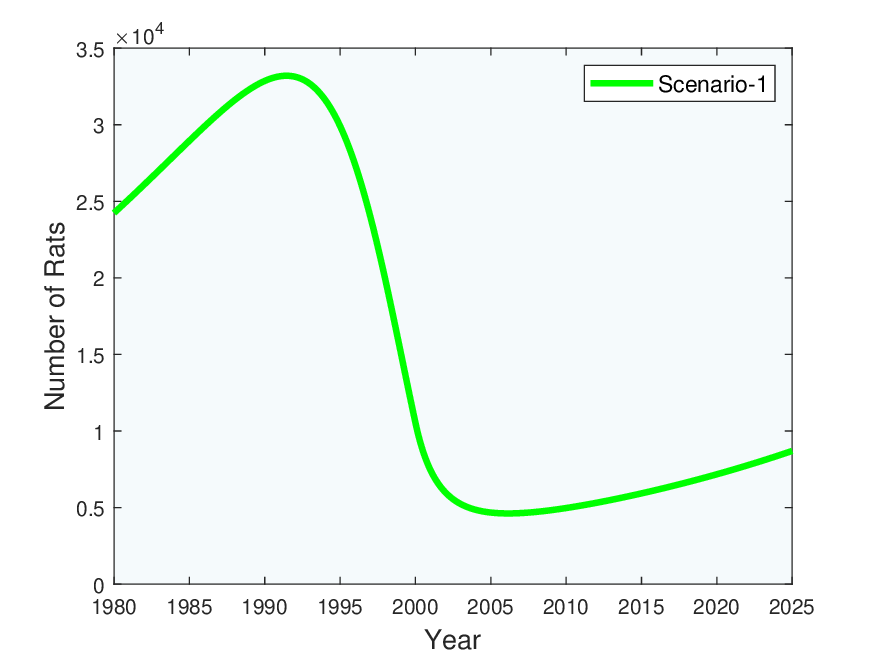}
        \caption{}
        %\label{fig:fig3}
    \end{subfigure}
    \hfill
    \begin{subfigure}{0.32\textwidth}
        \centering
        \includegraphics[width=\textwidth]{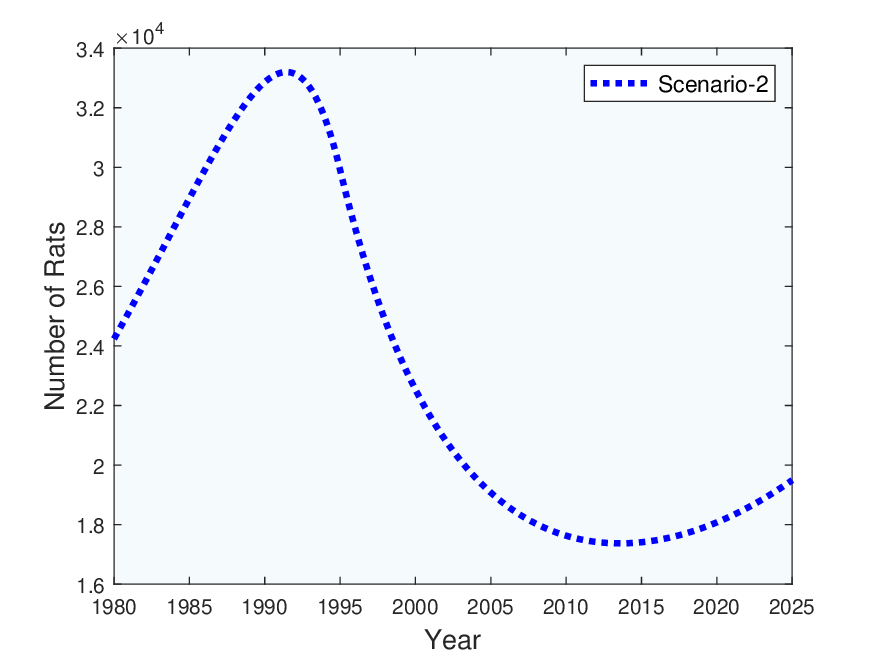}
        \caption{}
        %% \label{fig:fig4}
    \end{subfigure}
    \hfill
    \begin{subfigure}{0.32\textwidth}
        \centering
        \includegraphics[width=\textwidth]{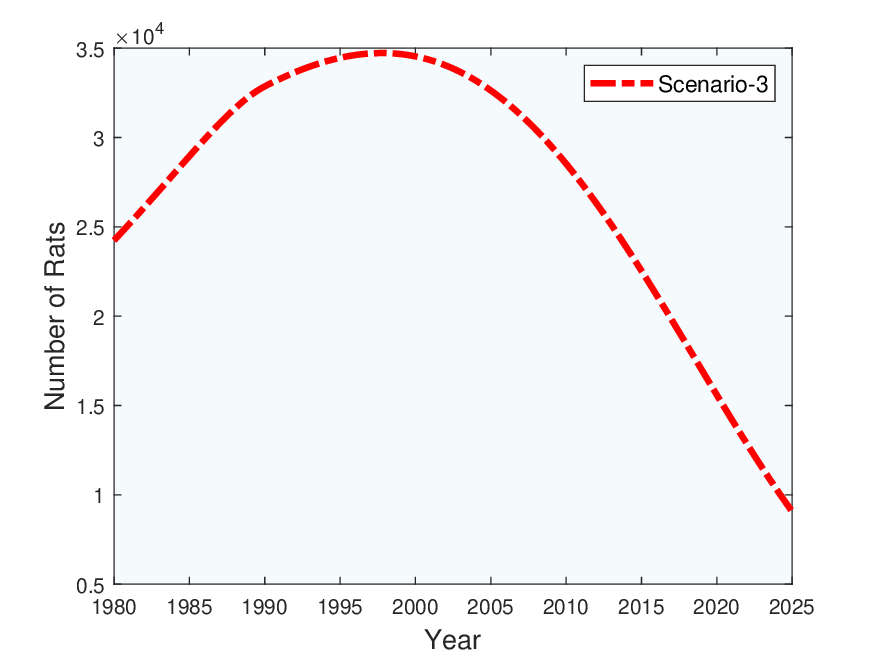}
        \caption{}
       % % \label{fig:fig5}
    \end{subfigure}
    \hfill
    \begin{subfigure}{0.33\textwidth}
        \centering
        \includegraphics[width=\textwidth]{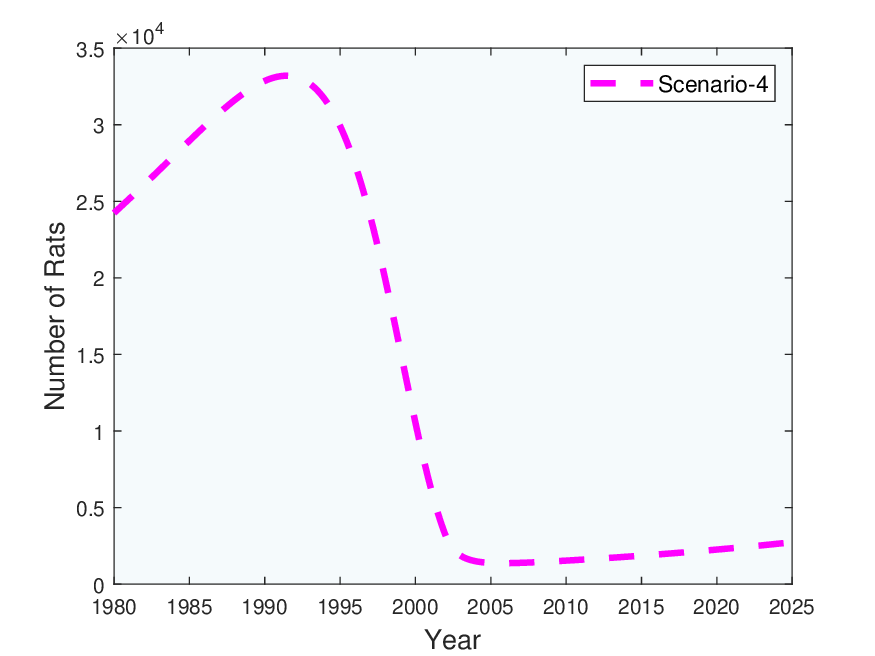}
        \caption{}
        %\label{fig:fig3}
    \end{subfigure}
    \hfill
    \begin{subfigure}{0.32\textwidth}
        \centering
        \includegraphics[width=\textwidth]{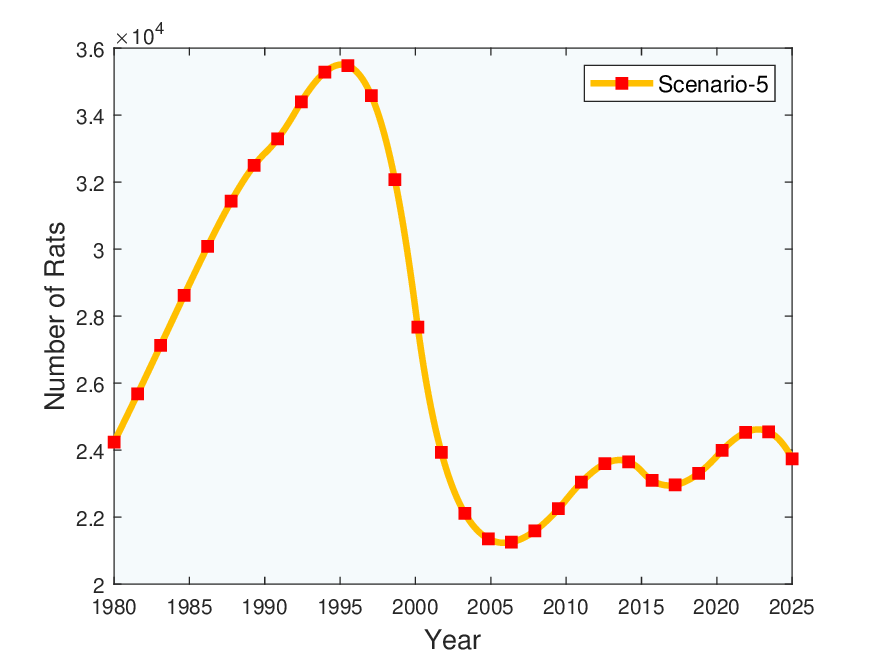}
        \caption{}
        %% \label{fig:fig4}
    \end{subfigure}
    \hfill
    \begin{subfigure}{0.32\textwidth}
        \centering
        \includegraphics[width=\textwidth]{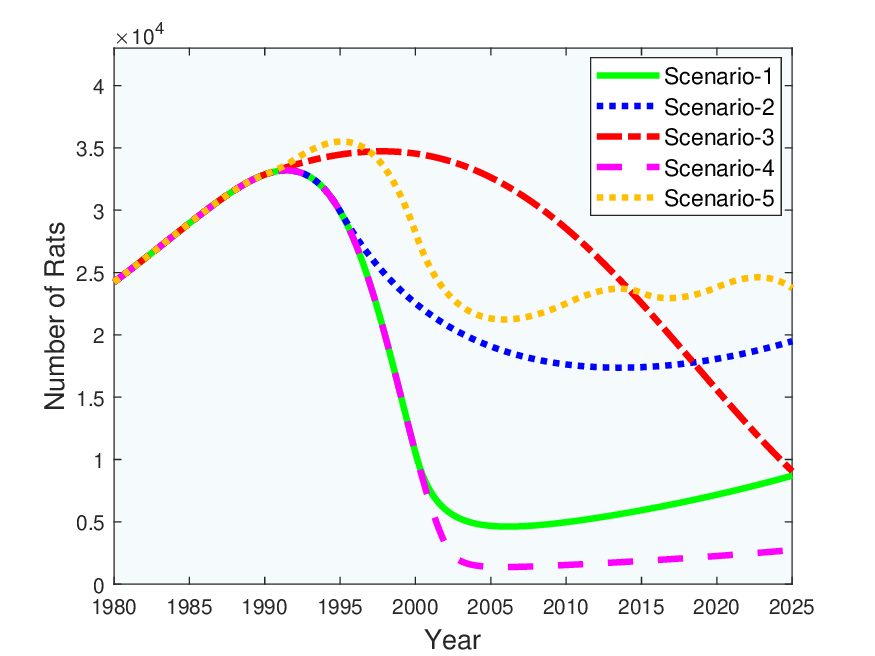}
        \caption{}
        %% \label{fig:fig5}
    \end{subfigure}
    \caption{Rat Population Dynamics in Amami Ōshima Under Various Mongoose Trapping Strategies (a) Scenario-1: Actual trend observed when mongoose trapping began in 2000.
(b) Scenario-2: Outcome if trapping had started in 1995 at a lower intensity than in Scenario-1 (c) Scenario-3: Projected impact if trapping had commenced in 1990 at an even lower rate than in Scenario-1 and Scenario-2 (d) Scenario-4: Potential changes if trapping had been introduced in 2002 with a higher intensity than in Scenario-1
(e) Scenario-5: A phased trapping approach implemented over three periods—1990–1992, 2000–2010, and 2015–2020 (f) A comparative visualization combining all scenarios in a single frame}
    \label{fig:rat}
\end{figure}

\begin{figure}[hbtp]
    \centering
 \begin{subfigure}{0.32\textwidth}
        \centering
        \includegraphics[width=\textwidth]{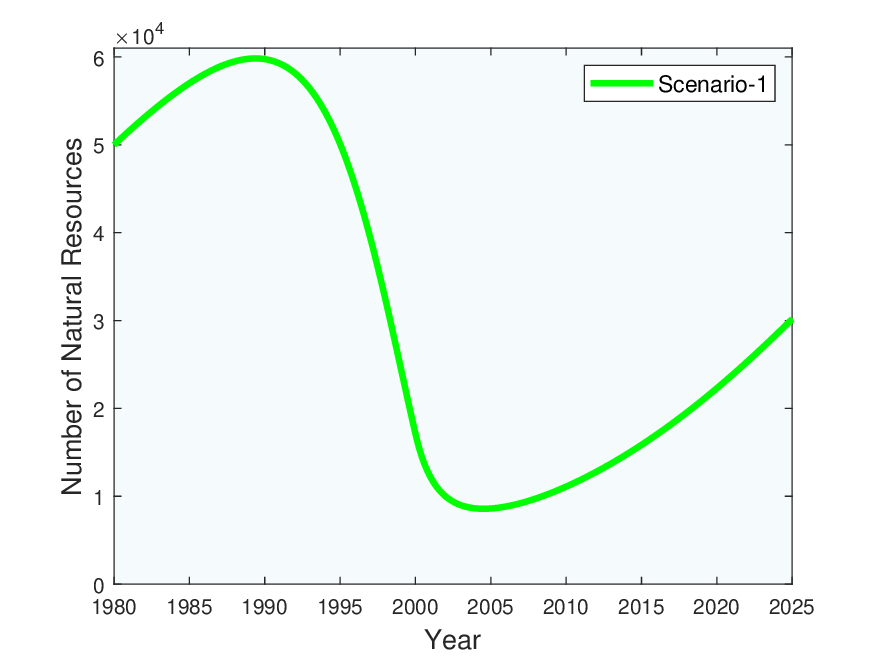}
        \caption{}
        %\label{fig:fig3}
    \end{subfigure}
    \hfill
    \begin{subfigure}{0.32\textwidth}
        \centering
        \includegraphics[width=\textwidth]{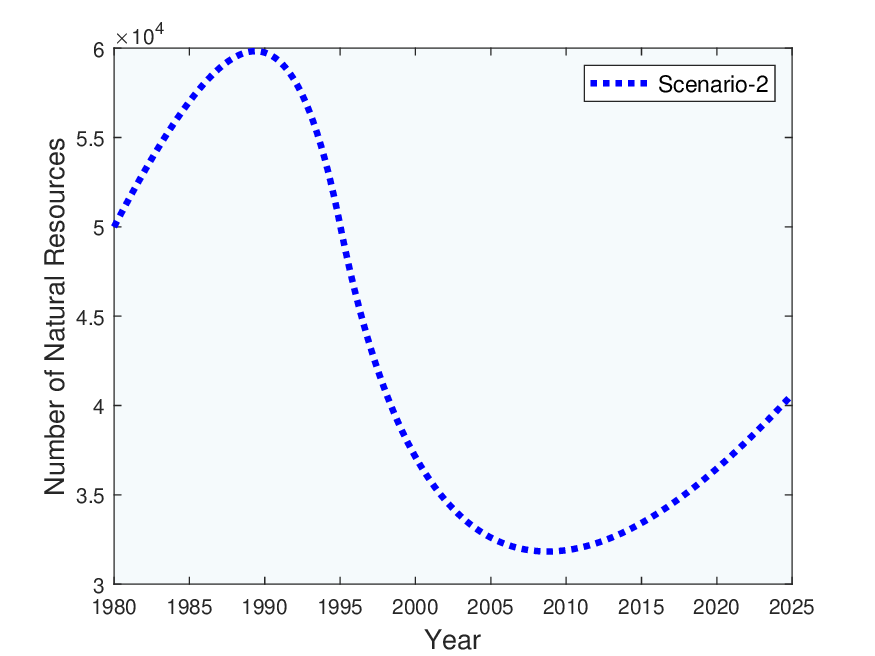}
        \caption{}
        %% \label{fig:fig4}
    \end{subfigure}
    \hfill
    \begin{subfigure}{0.32\textwidth}
        \centering
        \includegraphics[width=\textwidth]{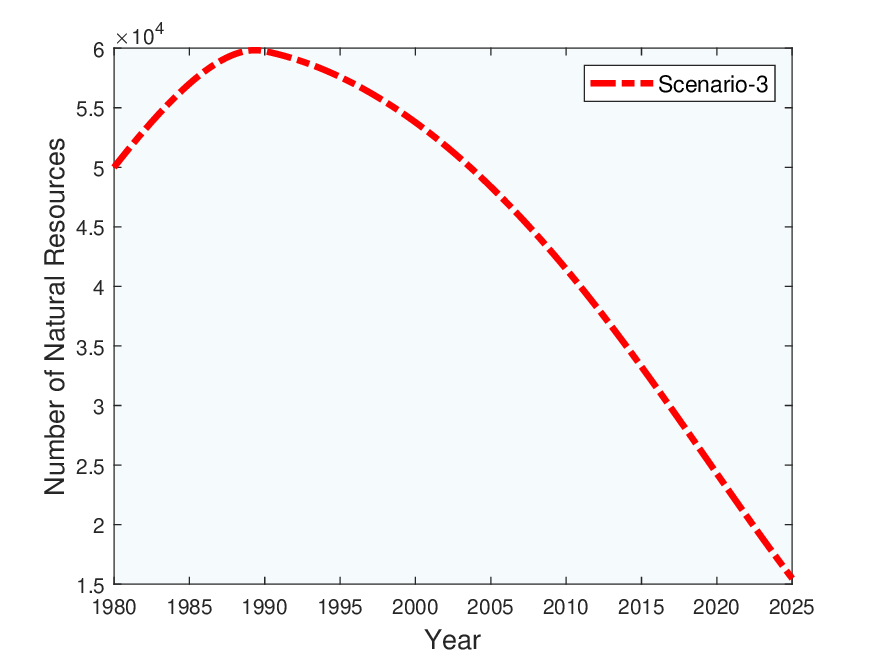}
        \caption{}
        %% \label{fig:fig5}
    \end{subfigure}
    \hfill
    \begin{subfigure}{0.33\textwidth}
        \centering
        \includegraphics[width=\textwidth]{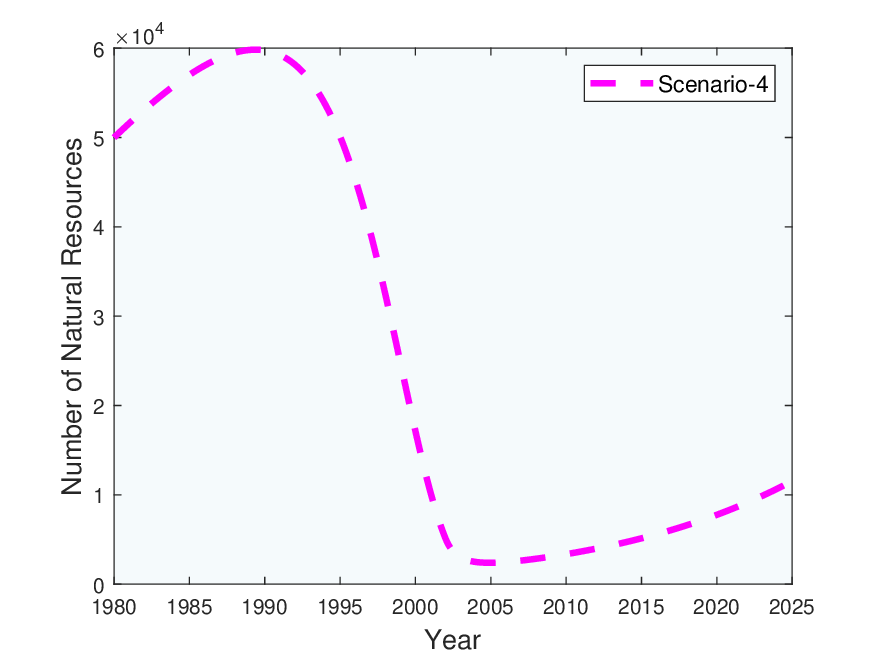}
        \caption{}
        %\label{fig:fig3}
    \end{subfigure}
    \hfill
    \begin{subfigure}{0.32\textwidth}
        \centering
        \includegraphics[width=\textwidth]{figures/S5_R.eps}
        \caption{}
        %% \label{fig:fig4}
    \end{subfigure}
    \hfill
    \begin{subfigure}{0.32\textwidth}
        \centering
        \includegraphics[width=\textwidth]{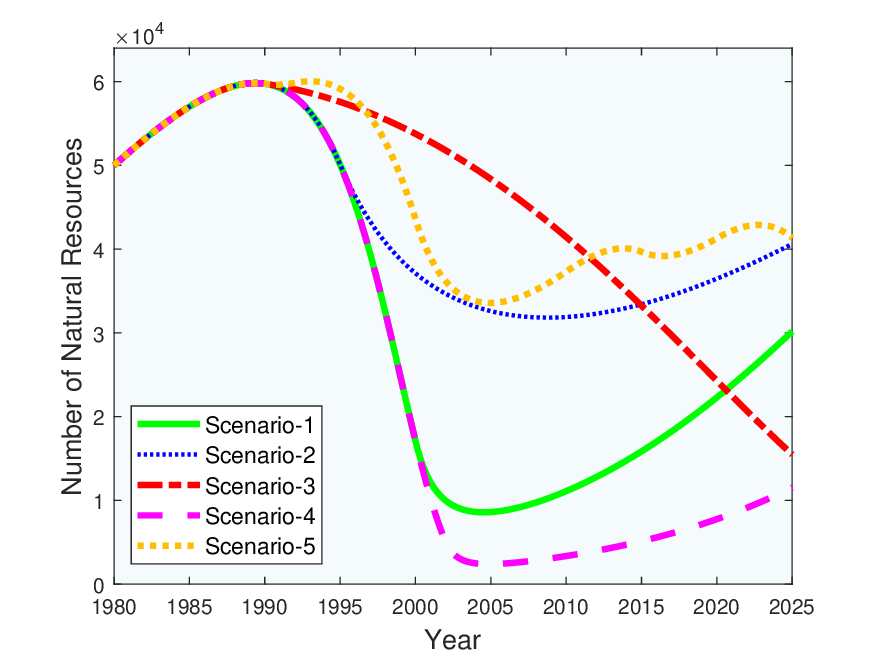}
        \caption{}
        %% \label{fig:fig5}
    \end{subfigure}
    \caption{The changing dynamics of natural resources in Amami Oshima Island under various mongoose trapping scenarios: (a) The actual trapping effort, initiated in 2000, representing the current trend (Scenario 1) (b) A variational change where trapping began in 1995 at a lower intensity than previous (Scenario 2) (c) An earlier intervention starting in 1990 with a lower trapping rate than both previous two (Scenario 3)  (d) A delayed but more aggressive trapping effort launched in 2002 at a higher rate than current trend (Scenario 4)  (e) A phased trapping strategy implemented in three stages: 1990–1992, 2000–2010, and 2015–2020 (Scenario 5) . (f) A combined visualization of all scenarios for comparative analysis. }
    \label{fig:Natural Resources}
\end{figure}
\begin{figure}[hbtp]
    \centering
 \begin{subfigure}{0.32\textwidth}
        \centering
        \includegraphics[width=\textwidth]{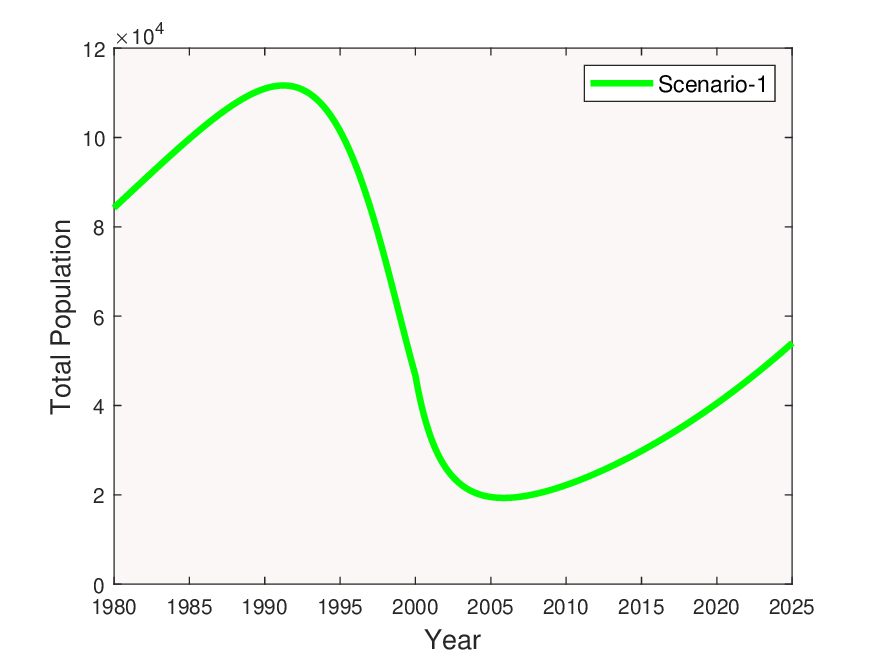}
        \caption{}
        %\label{fig:fig3}
    \end{subfigure}
    \hfill
    \begin{subfigure}{0.32\textwidth}
        \centering
        \includegraphics[width=\textwidth]{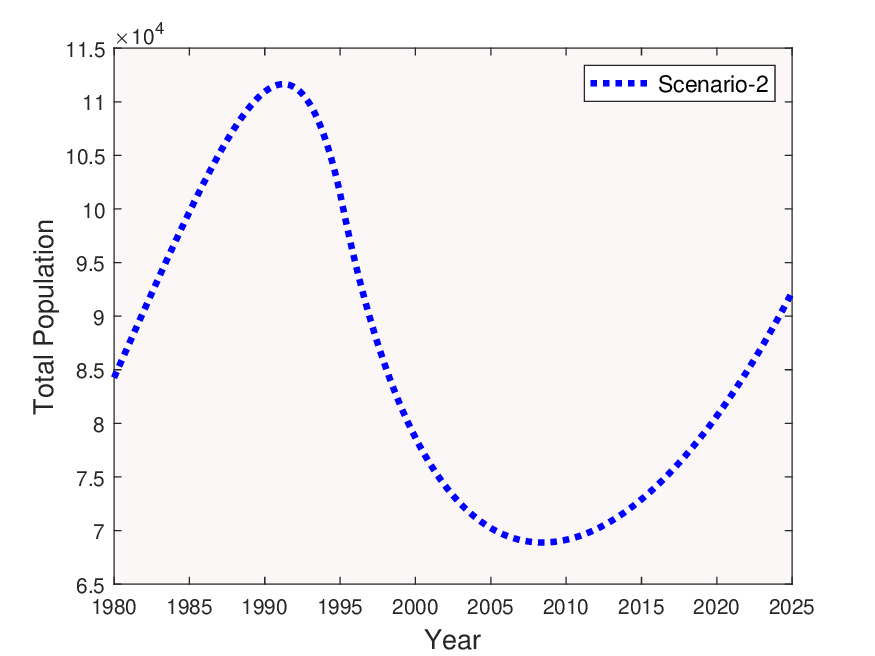}
        \caption{}
        %% \label{fig:fig4}
    \end{subfigure}
    \hfill
    \begin{subfigure}{0.32\textwidth}
        \centering
        \includegraphics[width=\textwidth]{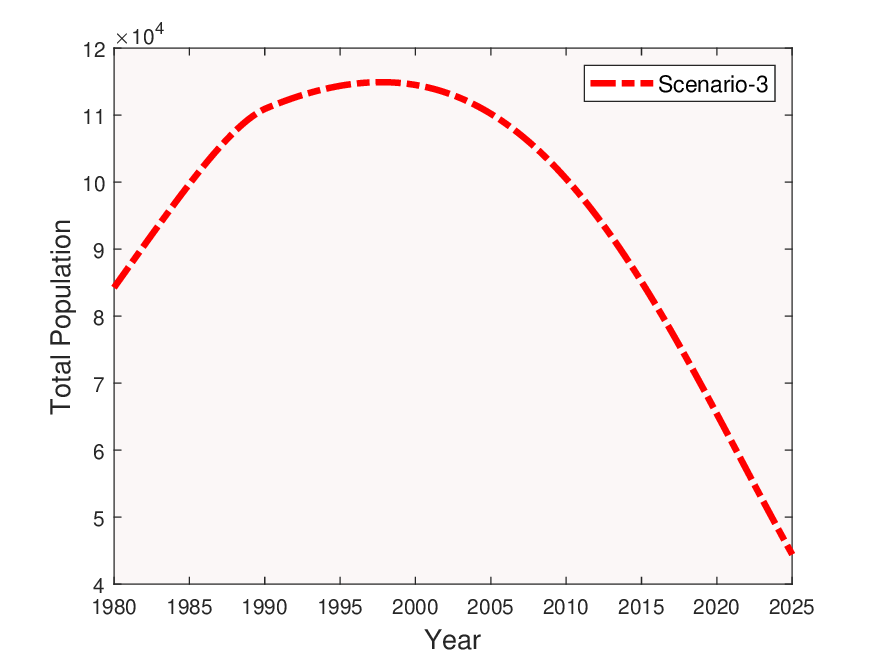}
        \caption{}
        %% \label{fig:fig5}
    \end{subfigure}
    \hfill
    \begin{subfigure}{0.33\textwidth}
        \centering
        \includegraphics[width=\textwidth]{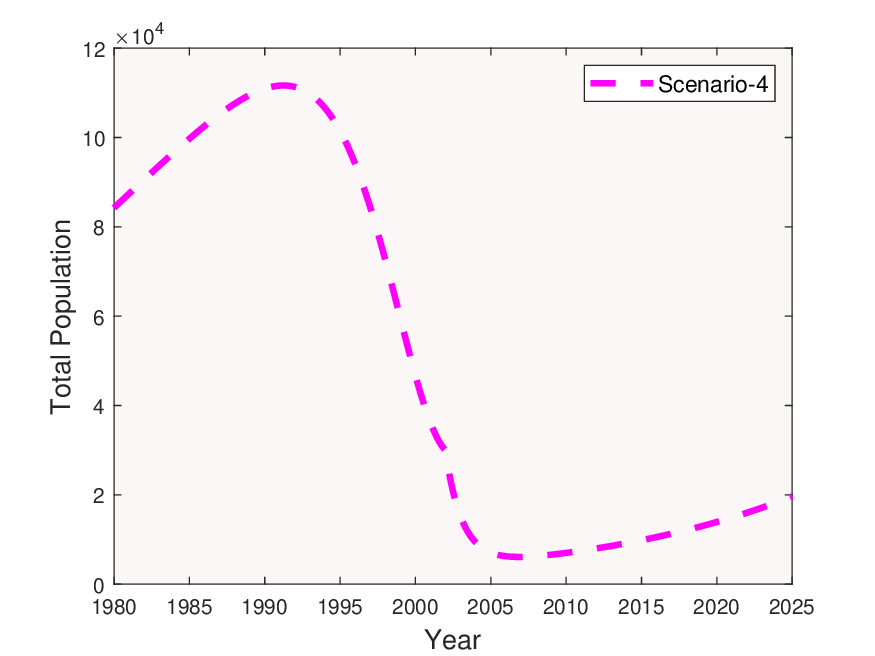}
        \caption{}
        %\label{fig:fig3}
    \end{subfigure}
    \hfill
    \begin{subfigure}{0.32\textwidth}
        \centering
        \includegraphics[width=\textwidth]{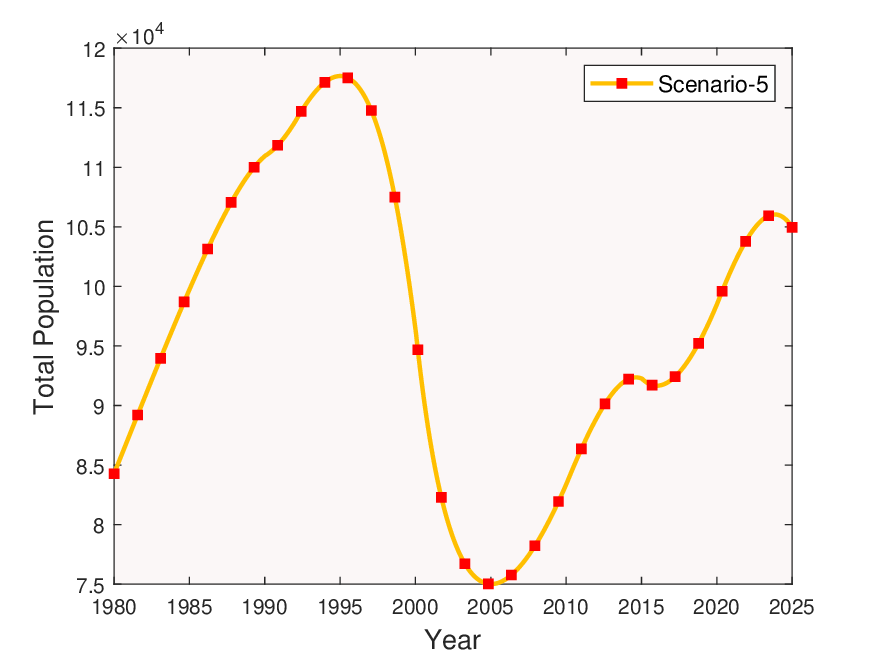}
        \caption{}
        %% \label{fig:fig4}
    \end{subfigure}
    \hfill
    \begin{subfigure}{0.32\textwidth}
        \centering
        \includegraphics[width=\textwidth]{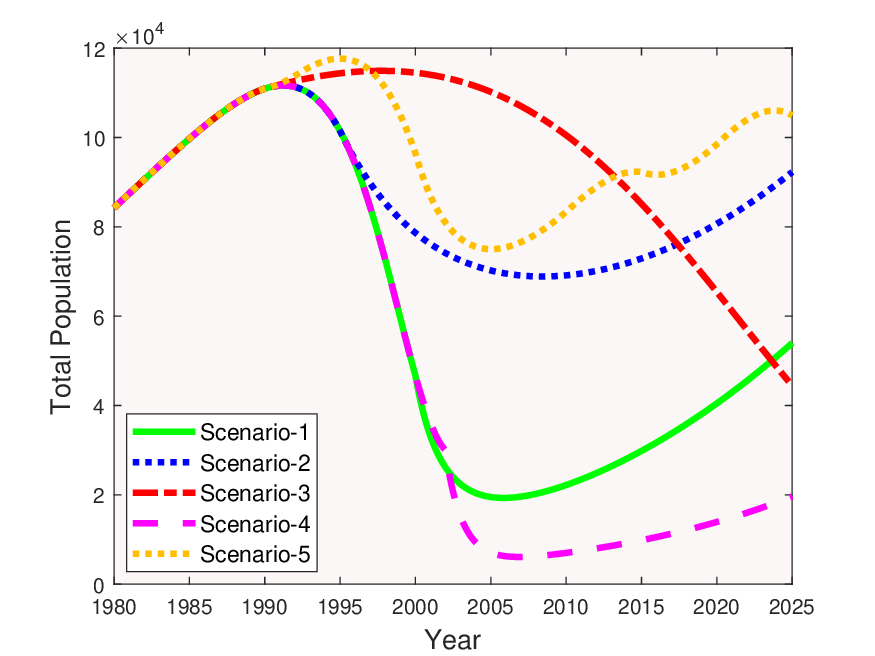}
        \caption{}
       % % \label{fig:fig5}
    \end{subfigure}
    \caption{The total population dynamics of mongooses, snakes, rats, and natural resources on Amami Oshima Island under different mongoose trapping scenarios: (a) The actual trapping effort initiated in 2000, reflecting the current trend (Scenario 1) (b) An alternative approach with trapping beginning in 1995 at a lower intensity than the current trend (Scenario 2) (c) An earlier intervention starting in 1990 with a trapping rate lower than both previous scenarios (Scenario 3) (d) A delayed yet more aggressive trapping effort launched in 2002 at a higher rate than the current trend (Scenario 4) (e) A phased trapping strategy implemented in three stages: 1990–1992, 2000–2010, and 2015–2020 (Scenario 5) (f) A comprehensive comparison displaying all scenarios together (Scenario 1-5)}
    \label{fig:Total population}
\end{figure}
\begin{figure}[hbtp]
    \centering
 \begin{subfigure}{0.32\textwidth}
        \centering
        \includegraphics[width=\textwidth]{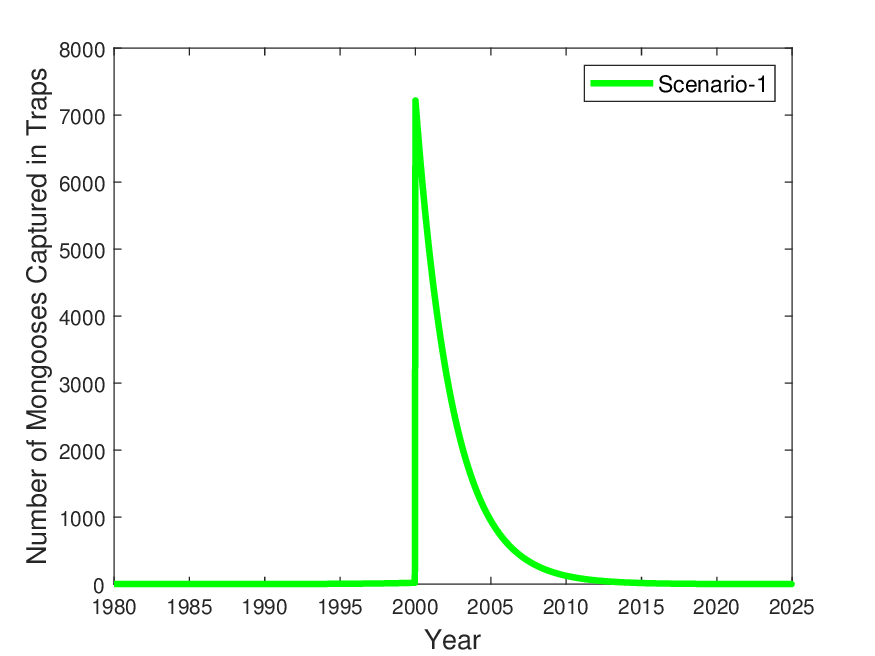}
        \caption{}
        %\label{fig:fig3}
    \end{subfigure}
    \hfill
    \begin{subfigure}{0.32\textwidth}
        \centering
        \includegraphics[width=\textwidth]{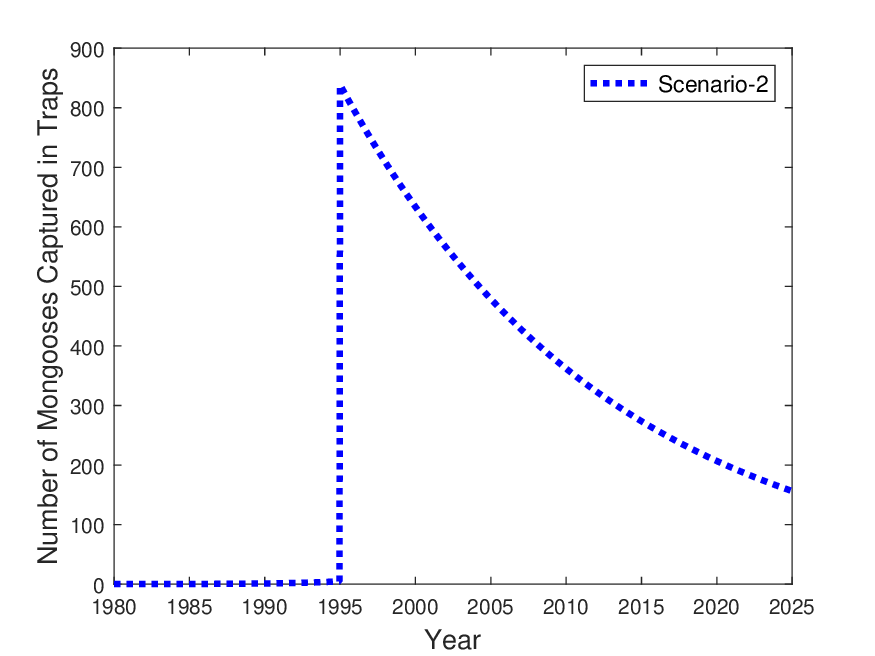}
        \caption{}
       % % \label{fig:fig4}
    \end{subfigure}
    \hfill
    \begin{subfigure}{0.32\textwidth}
        \centering
        \includegraphics[width=\textwidth]{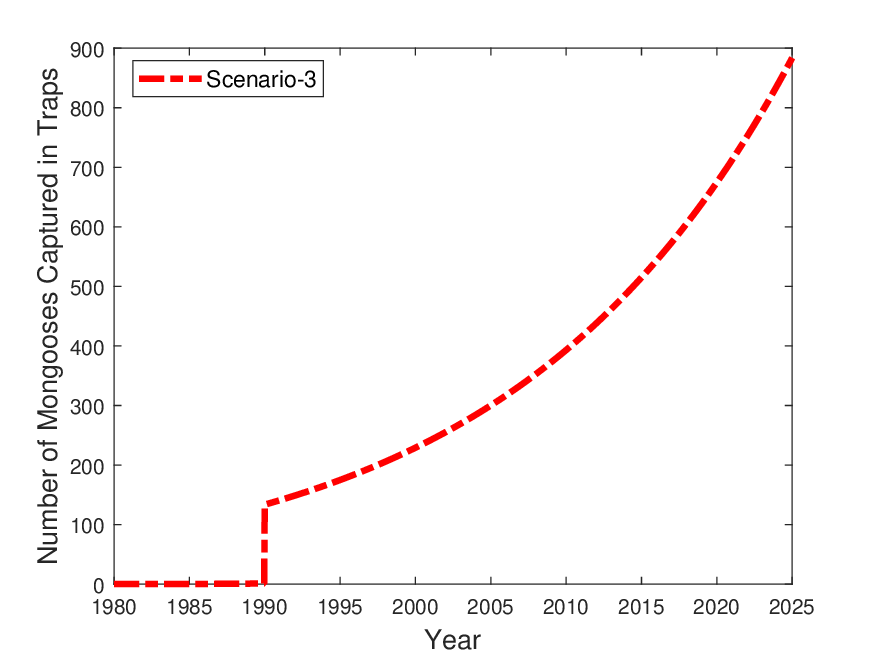}
        \caption{}
        %% \label{fig:fig5}
    \end{subfigure}
    \hfill
    \begin{subfigure}{0.33\textwidth}
        \centering
        \includegraphics[width=\textwidth]{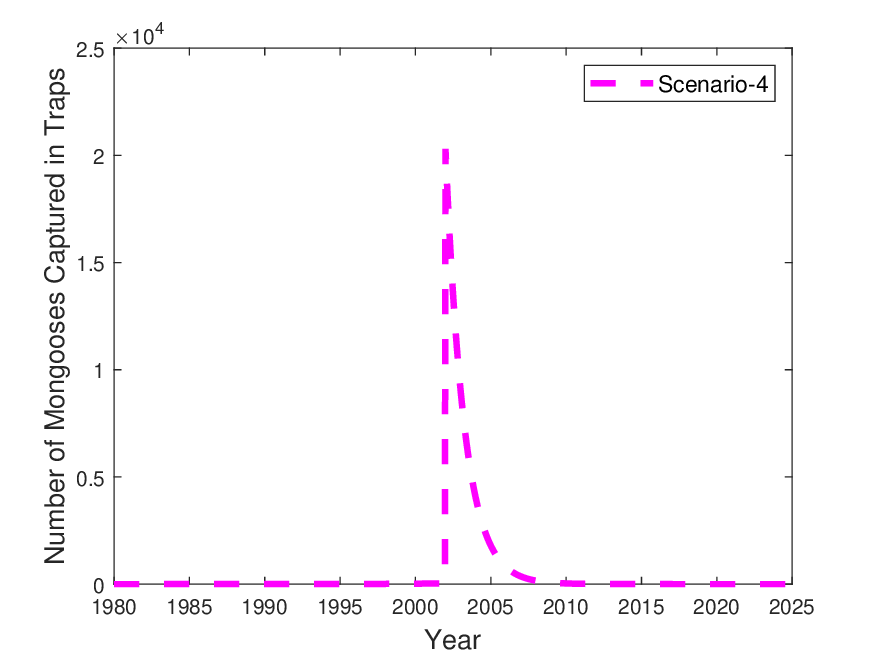}
        \caption{}
        %\label{fig:fig3}
    \end{subfigure}
    \hfill
    \begin{subfigure}{0.32\textwidth}
        \centering
        \includegraphics[width=\textwidth]{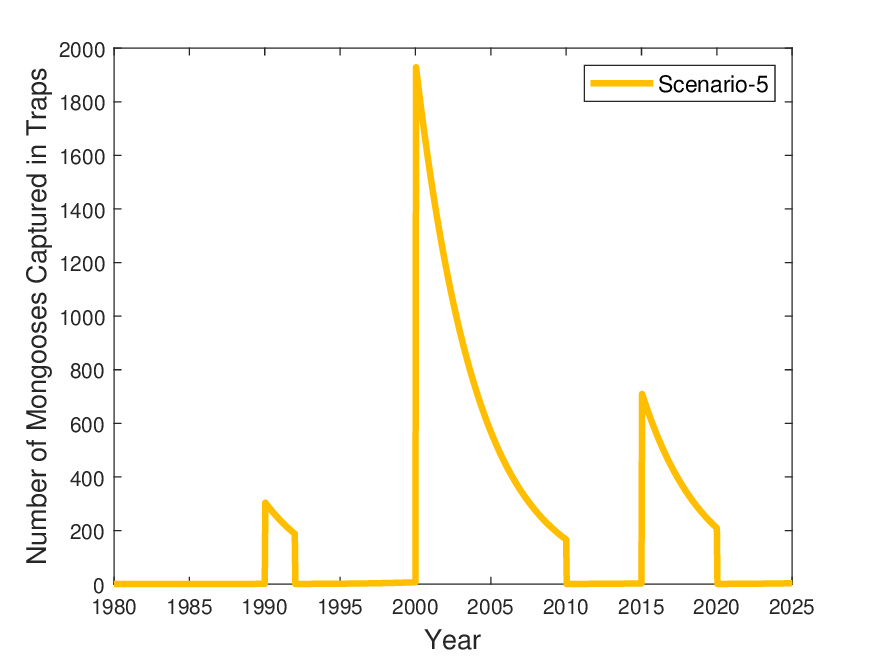}
        \caption{}
       % % \label{fig:fig4}
    \end{subfigure}
    \hfill
    \begin{subfigure}{0.32\textwidth}
        \centering
        \includegraphics[width=\textwidth]{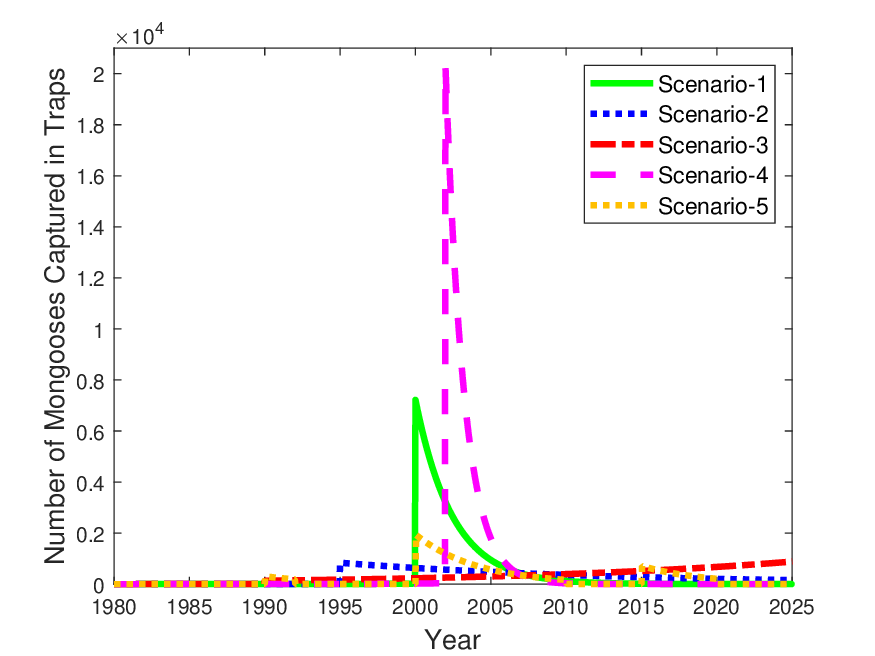}
        \caption{}
       % % \label{fig:fig5}
    \end{subfigure}
    \caption{Patterns of mongoose captures in traps on Amami Ōshima Island under different trapping scenarios: (a) The actual trapping effort, initiated in 2000, reflects the current trend (Scenario 1) (b) A modified approach where trapping began in 1995 at a lower intensity than the current strategy (Scenario 2) (c) An earlier intervention starting in 1990 with a lower trapping rate than both previous scenarios (Scenario 3) (d) A delayed yet more intensive trapping effort, launched in 2002 at a higher rate than the current trend (Scenario 4) (e) A phased trapping strategy implemented in three distinct periods: 1990–1992, 2000–2010, and 2015–2020 (Scenario 5) (f) A comprehensive visualization comparing all scenarios to assess their relative impact}
    \label{fig:trapping}
\end{figure}

\begin{figure}[hbtp]
    \centering
    \includegraphics[width=0.8\textwidth, keepaspectratio]{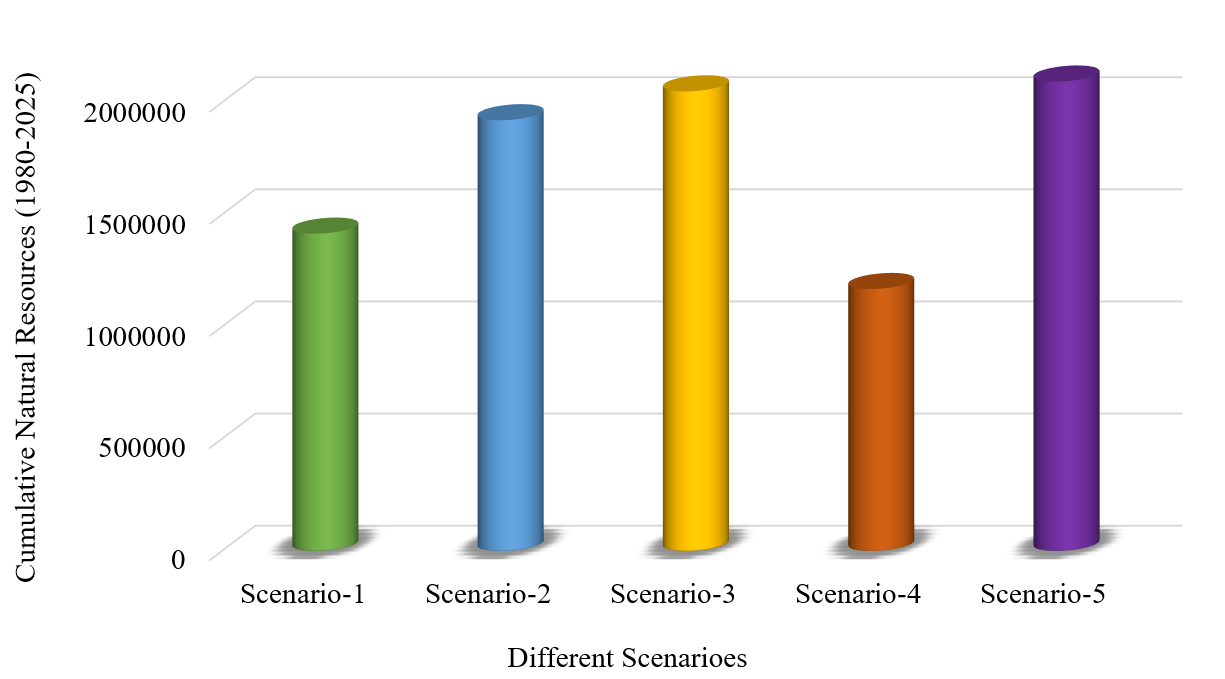} 
    \caption{\centering Impact of Mongoose Trapping on Natural Resources.}
    \label{fig:Impact of Mongoose Trapping on Natural Resources}
\end{figure}

\section{Results and Discussion}\label{sec:Results and disussion}
The introduction of mongooses, which were initially intended for controlling the habu snake population, has adversely destroyed the biological balance of Amami Ōshima Island, Japan. The island's ecology is now unsustainable and natural resources are in danger as a result of this intervention's undesired consequences. Understanding the long-term prey-predator dynamics is important for devising effective conservation strategies for a ecosystem.\\
Figure \ref{fig:solution} presents the real scenario of mongooses, habu snakes, rats, and natural resources on Amami Ōshima Island, obtained from the numerical solution of our proposed model \eqref{eq:model}. It highlights significant declines in natural resources, posing a threat to the ecological balance for the today time. Besides, Figure \ref{fig:alpha_variation} reflects that a higher growth rate of mongooses would be harmful to the island's ecological balance as it leads to an increased predation pressure on native species and a further depletion of natural resources. Also, the relationship between different variables illustrates in Figure \ref{fig:phase diagram} suggested the changing pattern of the different species over the time interval with one another. Besides, Figures \ref{fig:Mongooses}-\ref{fig:Impact of Mongoose Trapping on Natural Resources} show how these populations behave in various situations. These figures demonstrate that trapping would have successfully decreased the mongoose population without precipitously depleting natural resources if it had been carried out in three periods (1990–1992, 2000–2010, and 2015–2020) at a lower rate(about six mongooses per day). By allowing the native species to persist and explode, this progressive management would have preserved the island's ecological balance and shown how effective mathematical modelling is in predicting and analysing the long-term impacts of predator-prey interactions.

\section{Conclusions}\label{sec:Conclusions}
This study introduces a newly proposed mathematical model to examine the prey-predator dynamics and their impact on natural resources on Amami Ōshima Island, emphasizing the significant ecological consequences of introducing mongooses to control the life-threatening habu snake population. Theoretical and numerical analyses were performed to explore these interactions in detail. As part of the theoretical evaluation, we established that the model is bounded, ensuring its feasibility, and analyzed the stability of equilibrium points. Furthermore, different equilibrium values for a parameter have been determined, yielding important findings into the impact of many influences on the island's balance of ecology.\\
Based on the results of the numerical simulations, it appears that a controlled rate of phased trapping might successfully regulate species populations and conserve natural resources, hence restoring ecological balance. These findings highlight the crucial role of mathematical modeling in understanding complex ecological interactions and formulating well-informed, effective conservation strategies. Moreover, the study underscores the broader applicability of these methods to other regions facing similar ecological challenges, offering valuable insights for sustainable ecosystem management.\\
Future research can focus on exploring the cost of optimizing natural resource levels through optimal control, reflecting on the efforts that should have been required to be implemented in the past to sustain ecological balance on Amami Ōshima Island, Japan.

\section*{Author Contribution}
\textbf{Pulak Kundu}: Conceptualization, Formal analysis, Methodology, Data collection, Software, Writing –original draft, Writing –review \& editing. \\ \textbf{Uzzwal Kumar Mallick}: Conceptualization, Formal analysis, Methodology, Data collection, Software,  Writing –review \& editing, Supervision, Validation.
\section*{Declarations}
\textbf{Ethical Approval}\qquad Not applicable \\ \\
\textbf{Availability of Data and Material}\qquad The data used to support the findings of this study are included in the manuscript. \\ \\
\textbf{Conflicts of Interest}\qquad The authors declare that they have no known competing financial interests or personal relationships that could have appeared
to influence the work reported in this paper. No specific grant for this research was provided by funding organizations in the public,
private, or not-for-profit sectors \\ \\
\textbf{Acknowledgement}\qquad We would like to express our gratitude towards reviewers and editors for their insightful suggestions that have enhanced the quality and authenticity of our work.

%\section{Appendix A} \label{Appendix A}

%\section{Appendix B} \label{Appendix B}

%\section{Appendix C} \label{Appendix C}

%\section{Appendix D} \label{Appendix D}

%\section{Appendix E} \label{Appendix E}

\bibliography{sn-bibliography}% common bib file
%% if required, the content of .bbl file can be included here once bbl is generated
%%\input sn-article.bbl

\end{document}